\newif\if@restonecol
\newtheorem{lemma}{Lemma}
\newtheorem{theorem}{Theorem}
\newcommand{\Tr}{\mathsf{Tr}}
\newcommand{\vect}{\mathsf{vec}}
\begin{document}

\title{Transceiver Design for Clustered Wireless Sensor Networks --- Towards SNR Maximization}
% with Per-Cluster Power Constraint: Maximizing SNR} 
%\thanks{Supported by National Science Foundation under Grants No. 0928092, 0829888 and 1122027.}}
\author{Yang Liu \ \  Jing Li, \ \   Xuanxuan Lu \\
 Electrical and Computer Engineering Department,
 Lehigh University, Bethlehem, PA 18015\\
Email: \{yal210@lehigh.edu, jingli@ece.lehigh.edu, xul311@lehigh.edu\}
}

%\thanks{Supported by National Science Foundation under Grants No. 0928092, 0829888 and 1122027.}

%\markboth{}{}

\maketitle

\footnotetext{Supported by National Science Foundation under Grants No. 0928092, 1133027 and 1343372.}
%\footnotetext{For space limitation, some proof is omitted in this manuscript. The full version of this manuscript can be found online: http://arxiv.org/abs/1504.05311.}

\begin{abstract}
This paper investigates the transceiver design problem in a noisy-sensing noisy-transmission multi-input multi-output (MIMO) wireless sensor network. Consider a cluster-based network, where multiple sensors scattering across several clusters will first send their noisy observations to their respective cluster-heads (CH), who will then forward the data to one common fusion center (FC). The cluster-heads and the fusion center collectively form a coherent-sum multiple access channel (MAC) that is affected by fading and additive noise. Our goal is to jointly design the linear transceivers at the CHs and the FC to maximize the signal-to-noise ratio (SNR) of the recovered signal. We develop three iterative block coordinated ascent (BCA) algorithms: 2-block BCA based on semidefinite relaxation (SDR) and rank reduction via randomization or solving linear equations, 2-block BCA based on iterative second-order cone programming (SOCP), and multi-block BCA that lends itself to efficient closed-form solutions in specific but important scenarios. We show that all of these methods optimize SNR very well but each has different efficiency characteristics that are tailored for different network setups. Convergence analysis is carried out and extensive numerical results are presented to confirm our findings.  

% will first use a linear precoder $\mathbf{F}_i$ to beamform the sequence of $K_i$ signals and send them out via an array of $N_i$ transmitting antennas. The fusion center, after receiving the signals from all $L$ cluster-heads, uses a linear precoder $\mathbf{g}$ to get a best estimate of $\theta$. We consider a realistic case where the noise within each cluster may be heterogenous, non-Gaussian or colored, and that each cluster must observe an individual power constraint. 

%A previous study assumes a single power constraint over all the cluster heads; here we consider a more sophisticated case where each cluster is constrained an individual power limit.   can important scenarios---i) each sensor has multiple noisy samples of the observation target and ii) the network is cluster based with each cluster head collecting and transmitting observations to fusion center. Each sensor or cluster has individual power constraint. These two scenarios are mathematically equivalent.

%consider the problem of jointly optimizing the linear transceivers in the sensor network to maximize the obtained signal to noise ratio at the fusion center. To attack this nonconvex problem, we proposed three iterative algorithms, which optimize SNR equivalently well but present different efficiency characteristics and thus suitable for various kinds of network setup. Converge and complexity of the proposed algorithms are carefully examined and extensive numerical results are presented to prove our findings.
\end{abstract}

%\begin{keywords}
%Wireless sensor networks (WSN), beamforming, block coordinate ascent (BCA) method
%\end{keywords}

\section{Introduction}
\label{sec:introduction}

Wireless sensor networks (WSN) are known to have a broad range of applications such as environmental monitoring, battle-field surveillance and space exploration \cite{bib:WSNs_survey}. 
%A WSN may be homogeneous consisting of only a single type of nodes or heterogeneous with two or more types of nodes assuming different capabilities and functionalities. It is also possible for part or all of the sensing nodes to be equipped with multiple antennas. In practice, due to individual hardware deficiency or defects, environment-related issues (e.g. foggy weather), and other uncontrollable factors, the sensing observation reported by each sensor tends to be biased or distorted in its own way. 
A typical wireless sensor network is composed of numerous sensors with each sensor having limited on-board processing and communication capability. The sensors are usually randomly casted over a large field in space to sense the same physical event. In practice the sensing observations tend to be disturbed due to the noise from hardware device or environment. The noisy observations are collected by the fusion centers (FC), where data fusion and further processing will be performed. The large amount of sensors are usually grouped into clusters, with each cluster formed by sensors located closely within a small neighborhood. Within each cluster, a cluser-head(CH) collects observations from other sensors free of error(this is reasonable since the cluster members lie in close vicinity) and then transmit the collected data to FC. Assume that HCs and FC are equipped with multiple antennas and linear transceivers, then a central question here is how to jointly design the transceivers to transmit the data reliably. This falls in the general problem of multi-input multi-output (MIMO) transceiver(beamforming) design problem, which has aroused a flurry of interest in recent years \cite{bib:FangLi_1, bib:FangLi_2, bib:GBG_1, bib:sensor_network_Cui, bib:sensor_network_Hamid, bib:Yang_ICC, bib:J1_Yang_to_be_submitted, bib:J2_Yang_to_be_submitted, bib:JunFang_MI}.

\begin{figure}[htb]
%\vspace{-0.1in}
\centerline{
\includegraphics[width=0.5\textwidth]{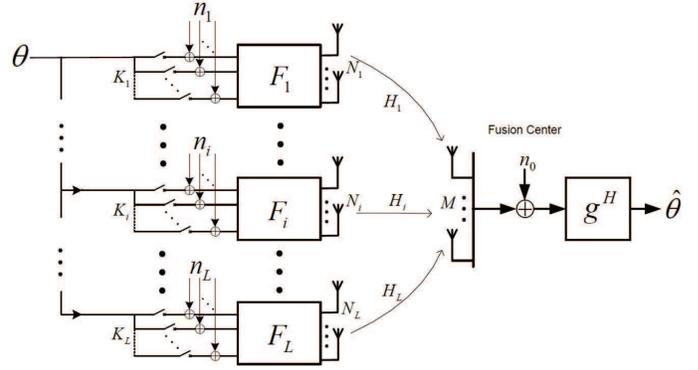}
%\vspace{-0.1in}
}
\caption{Models for Cluster Based WSN}
\label{fig:sysmodel}
%\vspace{-0.1in}
\end{figure}

We introduce the system model shown in Fig. \ref{fig:sysmodel} to capture the key characteristics of the afore-mentioned practical scenario. The event of interest $\theta$ is modeled as a complex scalar and sensed by a total of $\sum_{i=1}^{L}K_i$ sensors (cluster-members) coming from $L$ clusters. These sensors communicate their imperfect sensing results to their respective cluster-heads.  Within each cluster $i$, the observation distortion of the $K_i$ sensors are collectively modeled as an additive noise vector $\mathbf{n}_i$ (of size $K_i$), which may have any distribution (not necessarily Gaussian). The cluster-head $i$ performs linear precoding $\mathbf{F}_i$ to data before transmission with $N_i$ antennas. We assume that each cluster-head must conform to an individual transmission power constraint $P_i$, in accordance to its specific power supply or battery life. 
It should be noted that, compared to a single overall power constraint for all the cluster-heads, the set of separate power constraints makes the model not only more practical, but also considerably harder. 
The MIMO coherent-sum multiple access channel (MAC) is considered  
for communications between cluster-heads and fusion center, which achieves a high bandwidth efficiency. The fading channel between the $i$-th cluster-head and the fusion center is denoted as $\mathbf{H}_i$. 
%To make the model general, we assume that the data vector (of size $N_i$) transmitted by each cluster-head will each be faded (by a matrix of fading coefficients $\mathbf{H}_i$), before arriving at  the $M$ receiving antennas at the fusion center. 
At the fusion center, the received signal is corrupted by additive noise $\mathbf{n}_0$ (of size $M$). The fusion center performs linear postcoding $\mathbf{g}^H$ to deduce an estimate $\hat{\theta}$. Our goal is to jointly design the transceivers $\mathbf{F}_i$ and $\mathbf{g}$ to achieve a best estimate $\theta$. 
%(i.e. with a highest reliability) for the original source $\theta$. 

The line of beamforming design in WSN  research has actually received significant attention in recent years. A good variety of system setups have been investigated in the literature \cite{bib:FangLi_1, bib:FangLi_2, bib:GBG_1, bib:sensor_network_Cui, bib:sensor_network_Hamid, bib:Yang_ICC, bib:J1_Yang_to_be_submitted, bib:JunFang_MI, bib:J2_Yang_to_be_submitted}. Compared to these existing literature, our system model is generally more generic and complicated. 
%However, most of the previous studies use simpler models that are not as general and as sophisticated as ours. 

For example, \cite{bib:FangLi_1} considers the case where each cluster has only one sensor (i.e. $K_i=1$, $\forall i$) and all cluster-heads and fusion center are equipped with one single antenna (i.e. $N_i=1$, $\forall i$ and $M=1$). Under the total power constraint over all the cluster-heads, the transceiver design problem boils down to a power allocation problem.
%Under these assumptions, \cite{bib:FangLi_1} was able to transform the beamforming problem to an equivalent problem of power allocation, for which the standard water-filling approach can be readily exploited. 
\cite{bib:FangLi_2} extends the scalar fading channel in \cite{bib:FangLi_1} to square nonsingular matrix channels, i.e. $N_i\!=\!M$, $\forall i=1,\cdots,L$. Still all clusters share one total power constraint. Note that the total power constraint over different clusters are usually nonrealistic since in practice different cluster-heads can be far away from each other without wired connections and are powered separately by build-in batteries. %\footnote{We would like to point out that almost all the papers in the literature have ignored the singularity case, in which their proposed approach will break. In our investigation, we have randomly generated many Rayleigh fading wireless channels, and we have found that the case of channel singularity or near-singularity is actually not as rare  as we may think.} (i.e. all the cluster-heads and the fusion center must have the same number of antennas, $N_1\!=\!N_2\!=\cdots=\!N_L\!=\!M$). A globally optimal solution is provided. 
\cite{bib:GBG_1} extends the research by employing separate power constraints for each cluster-head, but still restricts the channel matrices $\mathbf{H}_i$ to be square and nonsingular. Block coordinate descent(BCD) algorithm is obtained in \cite{bib:GBG_1} to solve the problem.
%Since the set of $L$ power constraints considerably complicates the optimization problem, no closed-form results can be found, and hence the researchers resorted to the block coordinate descendant (BCD) approach \cite{????}  to provide a near-optimal solution.

Several studies are particularly noteworthy \cite{bib:sensor_network_Cui, bib:sensor_network_Hamid, bib:Yang_ICC, bib:J1_Yang_to_be_submitted, bib:JunFang_MI, bib:J2_Yang_to_be_submitted}. \cite{bib:sensor_network_Cui} is the first to propose a very general wireless sensor network model, where separate power constraints are employed and each fading channel matrix $\mathbf{H}_i$ can have arbitrary dimension($N_i>M$, $N_i<M$ or $N_i=M$). \cite{bib:sensor_network_Cui} studies solutions to several special but also rather meaningful cases including scalar channels, fading but noiseless channels and nonfading but noisy channels. 
%(such as scalar source and uniform power constraints over all the cluster-heads, and vector source but zero channel additive noise) by setting additional limits to the parameters, and derived detailed results for them. 
%The publication of \cite{bib:sensor_network_Cui} was cause for big excitement as it solved the general problem presented in \cite{bib:sensor_network_Hamid}. 
\cite{bib:Yang_ICC} focuses on the scalar target source with nonscalar fading and noisy channels and obtains an approximate BCD algorithm with fully closed form solution for each step. The work \cite{bib:sensor_network_Hamid, bib:J1_Yang_to_be_submitted} develop variant BCD algorithms to tackle the most general case proposed in \cite{bib:sensor_network_Cui}, with convergence and closed form solutions being examined in \cite{bib:J1_Yang_to_be_submitted}. All the above works adopt mean square error(MSE) as performance metric. Recently joint transceiver design problems aiming mutual information(MI) maximization are studied in \cite{bib:JunFang_MI, bib:J2_Yang_to_be_submitted}, where orthogonal and coherent-sumq MAC are considered respectively.

%Specifically, by using an $L\!+\!1$-block BCD algorithm and carefully computing the results at each stage, \cite{bib:sensor_network_Hamid} succeeded in developing a general result, albeit not closed-form, for the general model (although the singularity case was left out). The work of \cite{bib:J1_Yang_to_be_submitted} further strengthened the results of \cite{bib:sensor_network_Hamid} by providing solution to the singularity case also, and by proving a solvability theorem for essentially-cyclic BCD approaches. It also provided two different approaches, 2-block BCD and $L$-block BCD. 
%The study of \cite{bib:J1_Yang_ICC} focused on the scalar-$\theta$ case of the  general model presented in \cite{bib:sensor_network_Cui,bib:sensor_network_Hamid}. Although the approaches to the vector-$\theta$ case, developed in 

%\cite{bib:sensor_network_Hamid,bib:J1_Yang_to_be_submitted_SP}, is also directly applicable to the scalar-$\theta$ case, \cite{bib:J1_Yang_ICC} showed that the scalar assumption allows for closed-form solution (at each stage), and a modification to the BCD principle actually leads to a tremendous acceleration of the convergence.

The primary interest of this paper is to solve the joint transceiver design problem for the system depicted in Fig. \ref{fig:sysmodel}. The previous studies in\cite{bib:sensor_network_Cui,bib:sensor_network_Hamid,bib:Yang_ICC,bib:J1_Yang_to_be_submitted, 
bib:J2_Yang_to_be_submitted, bib:JunFang_MI} all used MSE or MI as the design criterion, and many of them assume a single power constraint (which renders an easier problem than separate power constraints). 
Here we take the signal-to-noise ratio (SNR) at the output of the FC postcoder $\mathbf{g}$ as a figure of merit. The SNR is an important performance indicator especially for discrete sources (where MSE becomes less meaningful). Maximizing SNR is equivalent to maximizing the symbol error rate (SER) in the discrete-source detection; and in the special case of Gaussian signaling, maximizing SNR is also equivalent to maximizing the channel capacity. The joint optimization problem is first formulated. Since it is non-convex, and since changing the design criterion (e.g. SNR instead of MSE) leads to a drastically different target function, the previous approaches and results can not be readily applied. Instead, we develop three feasible methods, all of which stem from the celebrated block coordinated ascent (BCA) principles, but each incorporates different techniques specifically tailored to the problem at hand, and renders computational strength in different scenarios.

The rest of this paper is organized as follows. The problem is formulated is Section \ref{sec:system model}. The optimal linear receiver is obtained in Section \ref{sec:opt_rec}.  Two different realizations of 2-block coordinated ascent (BCA) algorithm are discussed in Section \ref{sec:opt_all_sensors}, with their convergence being carefully examined. Section \ref{sec:multiple_BCA} proposes a multiple-block coordinate ascent approach, which promises tremendous reduction in complexity for special system settings. Extensive simulations are carried out to verify the effectiveness of the proposed algorithms in Section \ref{sec:numerical results} and Section \ref{sec:conclusion} concludes the whole paper.

\emph{Notations}: We use bold lowercase letters to denote complex vectors and bold capital letters to denote complex matrices. $\mathbf{0}$, $\mathbf{O}_{m\times n}$, and $\mathbf{I}_m$ are used to denote zero vectors, zero matrices of dimension $m\times n$, and identity matrices of order $m$ respectively. $\mathbf{A}^T$, $\mathbf{A}^{\ast}$ and $\mathbf{A}^H$ are used to denote transpose, conjugate and conjugate transpose(Hermitian transpose) respectively of an arbitrary complex matrix $\mathbf{A}$. $\Tr\{\cdot\}$ denotes the trace operation of a square matrix. $|\cdot|$ denotes the modulus of a complex scalar, and $\|\cdot\|_2$ denotes the $l_2$-norm of a complex vector. $\vect(\cdot)$ means vectorization operation of a matrix, which is performed by packing the columns of a matrix into a long one column. $\otimes$ denotes the Kronecker product. $\mathsf{Diag}\{\mathbf{A}_1,\cdots,\mathbf{A}_n\}$ denotes the block diagonal matrix with its $i$-th diagonal block being the square complex matrix $\mathbf{A}_i$, $i\in\{1,\cdots,n\}$. 
%$\mathcal{H}^n_{+}$ and $\mathcal{H}^n_{++}$ represent the cones of positive semidefinite and postive definite matrices of dimension $n$ respectively. Here $\succeq0$ and $\succ0$ denote that an square complex matrix belongs to $\mathcal{H}^n_{+}$ and $\mathcal{H}^n_{++}$ respectively. 
$\mathsf{Re}\{x\}$ and $\mathsf{Im}\{x\}$ denote the real and imaginary part of a complex value $x$, respectively.

\section{System Model}
\label{sec:system model}

%\subsection{Oversampling Sensors}
%\label{subsec:oversampling_model}

We first discuss of the system model shown in Fig.\ref{fig:sysmodel} and present a mathematical formulation of the optimization problem at hand.
 
% , and  for oversampling wireless sensor network, as shown in Fig.\ref{fig:sysmodel}. 

The system is composed of $\sum_{i=1}^{L}K_i$ sensors with $K_i$ sensors belonging to the $i$th cluster, $L$ sensor-heads each equipped with $N_i$ transmitting antennas, and a fusion center equipped with $M$ receiving antennas.  All sensors observe a common unknown source $\theta\in\mathbb{C}$, and send their observation to their respective cluster-heads. Without loss of generality, we assume $\theta$ has  zero mean and unit variance, i.e. $\mathsf{E}\{|\theta|^2\}=1$. 
%This assumption does not harm generality and simplifies the following discussion. 
 The sensing noise and transmission noise (to the $i$th cluster-head) can be collectively modeled as an additive noise $\mathbf{n}_i$, such that the $i$th cluster-head observes: 
\begin{align}
\label{eq:noisy_obs}
\mathbf{x}_i=\mathbf{1}_{K_i}\theta+\mathbf{n}_i, \ \ \ i=1,\cdots, L,
\end{align}
where $\mathbf{1}_{K_i}$ is a vector of dimension $K_i\times1$ with all the entries being $1$, and $\mathbf{n}_i$ denotes the additive noise vector with zero mean and covariance  matrix $\mathsf{E}\big\{\mathbf{n}_i\mathbf{n}_i^H\big\}=\mathbf{\Sigma}_i$, where $\mathbf{\Sigma}_i\succ0$. Following the convention of WSN, we assume that $\mathbf{n}_i$'s for the different clusters are mutually uncorrelated.  

The noisy observation $\mathbf{x}_i$ is beamformed by a linear precoder $\mathbf{F}_i\in\mathbb{C}^{N_i\times K_i}$ at each cluster-head before being sent to the fusion center. Let $\mathbf{H}_i\in\mathbb{C}^{M\times N_i}$ be the MIMO channel state information (CSI) from the $i$th cluster-head to the fusion center. Suppose all the cluster-heads form a coherent-sum MAC channel; then the fusion center   receives signal $\mathbf{r}$:
\begin{align}
\mathbf{r}&=\sum_{i=1}^{L}\Big(\mathbf{H}_i\mathbf{F}_i\mathbf{x}_i\Big)+\mathbf{n}_0\\
&=\sum_{i=1}^{L}\Big(\mathbf{H}_i\mathbf{F}_i\mathbf{1}_{K_i}\Big)\theta+\Big(\sum_{i=1}^{L}\mathbf{H}_i\mathbf{F}_i\mathbf{n}_i+\mathbf{n}_0\Big),
\end{align}
where $\mathbf{n}_0$ is the additive noise at the fusion center. Without loss of generality, we assume $\mathbf{n}_0$ has zero mean and white covariance matrix: $\mathsf{E}\{\mathbf{n}_0\mathbf{n}_0^H\}=\sigma_0^2\mathbf{I}_{M}$. Since the fusion center is usually far away from the sensing field, $\mathbf{n}_0$ is assumed uncorrelated with $\mathbf{n}_i$'s.

% the wireless sensors are supported by build-in batteries or harvest energy from its surroundings, like polar energy or terrestiral heat. Thus due to different battery or environment conditions, 

In practice, it is highly likely that each cluster-head is provisioned with a different power supply and hence must observe a  different transmission power constraint. The average transmission power for the $i$-th  cluster-head is $\mathsf{E}\Tr\{\mathbf{F}_i(\mathbf{1}_{K_i}\theta+\mathbf{n}_i)(\mathbf{1}_{K_i}\theta+\mathbf{n}_i)^H\mathbf{F}_i^H\}=\Tr\big\{\mathbf{F}_i\big(\mathbf{1}_{K_i}\mathbf{1}_{K_i}^H+\mathbf{\Sigma}_i\big)\mathbf{F}_i^H\big\}$, which must be no greater than a power limit $P_i$. 

%\subsection{Cluster-Based Sensor Network}
%\label{subsec:cluster_based_model}

%Note that the above signal model is also suitable for the cluster-based wireless sensor network, with each sensor in figure \ref{fig:sysmodel} being interpreted as a sensor cluster. In each cluster, the cluster head collects noisy samples from $K_i$ cluster members as in (\ref{eq:noisy_obs}), beamforms the observations and then transmits to FC. Each cluster has one separate power constraint and the noise between different clusters are uncorrelated. 

The fusion center uses a linear postcoder $\mathbf{g}$ to perform data fusion and obtain an estimate $\hat{\theta}$: 
\begin{align}
\label{eq:received_theta}
\!\!\!\!\!\!\!\!\hat{\theta}\!\!=\!\!\mathbf{g}^H\mathbf{r}\!\!=\!\!\Big(\mathbf{g}^H\sum_{i=1}^{L}\mathbf{H}_i\mathbf{F}_i\mathbf{1}_{K_i}\Big)\theta\!+\!\mathbf{g}^H\Big(\sum_{i=1}^{L}\mathbf{H}_i\mathbf{F}_i\mathbf{n}_i\!\!+\!\!\mathbf{n}_0\Big).
\end{align}

The merit of the recovered signal $\hat{\theta}$ can be evaluated from several different perspectives. When the source $\theta$ takes value from a continuous space, the most popular metric is the mean square error, defined as $\mathsf{MSE}=\mathsf{E}\{|\theta-\hat{\theta}|^2\}$. MSE-targeted optimization has also been extensively studied in the beamforming literature (e.g. \cite{bib:FangLi_1, bib:FangLi_2, bib:GBG_1, bib:sensor_network_Cui, bib:sensor_network_Hamid, bib:Yang_ICC,  bib:J1_Yang_to_be_submitted}). 
Instead of taking MSE, here we target the average SNR, another important metric widely used in the design of communication systems. When the source $\theta$ is taken from from a finite discrete alphabet set (e.g. $M$-PAM or $M$-QAM), the detection accuracy is usually measured by the symbol error probability taking the form of $\mathsf{SER}\approx c_1\mathsf{Q}\big(\sqrt{\mathsf{c_2SNR}}\big)$, with $c_1$ and $c_2$ being some positive constants. When the source $\theta$ has Gaussian distribution, the system throughput is usually measured by the mutual information between $\theta$ and its estimate $\hat{\theta}$, $\mathsf{I}(\theta,\hat{\theta})=\frac{1}{2}\log_{2}(1+\mathsf{SNR})$. Hence, maximizing the SNR automatically minimizes the SER (for discrete source) or maximizes the mutual information (for Gaussian source).

From (\ref{eq:received_theta}), the  signal obtained by the fusion center (after linear postcoder $\mathbf{g}$) is composed of a signal component and a noise component, and the average SNR can be calculated and simplified to 
\begin{align}
\mathsf{SNR}\Big(\{\mathbf{F}\}_{i=1}^{L},\mathbf{g}\Big)&=\frac{\mathsf{E}\bigg\{\Big|\Big(\mathbf{g}^H\sum_{i=1}^{L}\mathbf{H}_i\mathbf{F}_i\mathbf{1}_{K_i}\Big)\theta\Big|^2\bigg\}}{\mathsf{E}\bigg\{\Big|\mathbf{g}^H\Big(\sum_{i=1}^{L}\mathbf{H}_i\mathbf{F}_i\mathbf{n}_i\!\!+\!\!\mathbf{n}_0\Big)\Big|^2\bigg\}} \\
&\!\!\!\!\!\!\!\!\!\!\!\!\!\!\!\!\!\!\!\!\!\!\!\!\!\!
=\frac{\mathbf{g}^H\big[\sum_{i=1}^L\mathbf{H}_i\mathbf{F}_i\mathbf{1}_{K_i}\big]\big[\sum_{j=1}^L\mathbf{H}_j\mathbf{F}_j\mathbf{1}_{K_j}\big]^H\mathbf{g}}{\sigma_0^2\|\mathbf{g}\|^2_2+\sum_{i=1}^{L}\mathbf{g}^H\mathbf{H}_i\mathbf{F}_i\mathbf{\Sigma}_i\mathbf{F}_i^H\mathbf{H}_i^H\mathbf{g}},\label{eq:SNR_func}
\end{align}
where the assumption of uncorrelated noise across the fusion center and different cluster-heads has been invoked. 

Hence, the joint transceiver design problem maximizing SNR for the clustered  wireless sensor network can be formulated as 
\begin{subequations}
\label{eq:original_SNR_prob}
\begin{align}
\!\!\!\!\!\!\!\!\!\!\!\!\!\!\!\!\!\!\!\!&(\mathsf{P}0):\underset{\{\mathbf{F}_i\}_{i=1}^L,\mathbf{g}\neq\mathbf{0}}{\mathsf{\max.}}\ \mathsf{SNR}\Big(\{\mathbf{F}\}_{i=1}^{L},\mathbf{g}\Big), \label{eq:original_SNR_prob_obj}\\
&\quad\ \mathsf{s.t.}\ \Tr\big\{\mathbf{F}_i\big(\mathbf{1}_{K_i}\mathbf{1}_{K_i}^H\!+\!\mathbf{\Sigma}_i\big)\mathbf{F}_i^H\big\}\!\leq\!P_i,i=1,\cdots,L.\label{eq:original_SNR_prob_constr}
\end{align}
\end{subequations}

The optimization problem ($\mathsf{P}0$) is nonconvex, as can be easily examined by checking the special case of scalar transceivers. 
%Non-convex problems are generally very difficult and do not have a boiler-plate approach or solution. 
In what follows, we will exploit the general principle of block coordinate ascent method to solve ($\mathsf{P}0$) in an iterative manner.
%So we consult to iterative optimization methods where in each iteration only subset of variables are considered with others being fixed. This method is also known as block coordinate ascent method(BCA) which has been widely used in existing literature. 

\section{Optimal Linear Receiver}
\label{sec:opt_rec}

In this section, we present the optimal linear receiver $\mathbf{g}$ which leads to the maximal SNR. The main result is as follows:
\begin{theorem}
\label{thm:opt_g}
For any predefined $\{\bm{F}_i\}_{i=1}^L$, SNR is maximized if and only if $\bm{g}^{\star}$ has the following form 
\begin{align}\label{eq:opt_g}
\mathbf{g}^{\star}\!\!=\!\alpha\bigg(\sigma_0^2\mathbf{I}_{M}\!\!+\!\!\sum_{i=1}^{L}\mathbf{H}_i\mathbf{F}_i\mathbf{\Sigma}_i\mathbf{F}_i^H\mathbf{H}_i^H\bigg)^{\!-\!1\!}\Big(\sum_{i=1}^{L}\mathbf{H}_i\mathbf{F}_i\mathbf{1}_{K_i}\Big),
\end{align}
where $\alpha$ is arbitrary nonzero complex scalar. The maximal SNR is given as 
\begin{align}\label{eq:opt_snr_g}
\!\!\!\!\!\!\!\!\!\mathsf{SNR}^{\star}\!\!=\!\!\Bigg\|\bigg(\!\sigma_0^2\mathbf{I}_{M}\!\!+\!\!\sum_{i=1}^{L}\mathbf{H}_i\mathbf{F}_i\mathbf{\Sigma}_i\mathbf{F}_i^H\mathbf{H}_i^H\!\bigg)^{\!-\!\frac{1}{2}}\!\!\!\Big(\!\sum_{i=1}^{L}\mathbf{H}_i\mathbf{F}_i\mathbf{1}_{K_i}\!\Big)\Bigg\|^2_2.
\end{align}
\end{theorem}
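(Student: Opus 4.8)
The plan is to recognize the SNR expression in (\ref{eq:SNR_func}), viewed as a function of $\mathbf{g}$ with $\{\mathbf{F}_i\}_{i=1}^L$ fixed, as a generalized Rayleigh quotient. Define the $M\times M$ Hermitian matrix $\mathbf{R}_n \triangleq \sigma_0^2\mathbf{I}_M + \sum_{i=1}^L \mathbf{H}_i\mathbf{F}_i\mathbf{\Sigma}_i\mathbf{F}_i^H\mathbf{H}_i^H$ (the total noise covariance at the FC) and the vector $\mathbf{a} \triangleq \sum_{i=1}^L \mathbf{H}_i\mathbf{F}_i\mathbf{1}_{K_i}$ (the effective signal direction). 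Since $\mathbf{\Sigma}_i\succ 0$ and $\sigma_0^2>0$, the matrix $\mathbf{R}_n$ is positive definite, hence invertible, and admits a Hermitian positive-definite square root $\mathbf{R}_n^{1/2}$. The objective then reads $\mathsf{SNR}(\mathbf{g}) = \dfrac{\mathbf{g}^H \mathbf{a}\mathbf{a}^H \mathbf{g}}{\mathbf{g}^H \mathbf{R}_n \mathbf{g}}$, a ratio of two Hermitian quadratic forms that is invariant under scaling $\mathbf{g}\mapsto\alpha\mathbf{g}$ for any nonzero $\alpha\in\mathbb{C}$, which already explains the arbitrary-scalar freedom in (\ref{eq:opt_g}).

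The key step is a change of variable: let $\tilde{\mathbf{g}} = \mathbf{R}_n^{1/2}\mathbf{g}$, so that $\mathbf{g} = \mathbf{R}_n^{-1/2}\tilde{\mathbf{g}}$ and the quotient becomes $\dfrac{|\tilde{\mathbf{g}}^H \mathbf{R}_n^{-1/2}\mathbf{a}|^2}{\|\tilde{\mathbf{g}}\|_2^2}$. By the Cauchy–Schwarz inequality, this is bounded above by $\|\mathbf{R}_n^{-1/2}\mathbf{a}\|_2^2$, with equality if and only if $\tilde{\mathbf{g}}$ is a nonzero scalar multiple of $\mathbf{R}_n^{-1/2}\mathbf{a}$. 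Translating back through $\mathbf{g} = \mathbf{R}_n^{-1/2}\tilde{\mathbf{g}}$ gives $\mathbf{g}^\star = \alpha\,\mathbf{R}_n^{-1/2}\mathbf{R}_n^{-1/2}\mathbf{a} = \alpha\,\mathbf{R}_n^{-1}\mathbf{a}$, which is exactly (\ref{eq:opt_g}); and the optimal value is $\mathsf{SNR}^\star = \|\mathbf{R}_n^{-1/2}\mathbf{a}\|_2^2 = \mathbf{a}^H\mathbf{R}_n^{-1}\mathbf{a}$, matching (\ref{eq:opt_snr_g}) once the definitions of $\mathbf{R}_n$ and $\mathbf{a}$ are substituted back. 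The ``only if'' direction follows from tracking the equality condition in Cauchy–Schwarz: any maximizer must satisfy $\mathbf{R}_n^{1/2}\mathbf{g} \parallel \mathbf{R}_n^{-1/2}\mathbf{a}$, which forces the stated form.

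There is no serious obstacle here; the argument is a textbook generalized-eigenvalue / matched-filter computation. The only points requiring a sentence of care are (i) justifying that $\mathbf{R}_n \succ 0$ so that $\mathbf{R}_n^{-1/2}$ is well-defined and the change of variable is a bijection on $\mathbb{C}^M\setminus\{\mathbf{0}\}$ — this uses $\sigma_0^2 > 0$ and the positive-semidefiniteness of each $\mathbf{H}_i\mathbf{F}_i\mathbf{\Sigma}_i\mathbf{F}_i^H\mathbf{H}_i^H$; and (ii) noting the degenerate case $\mathbf{a} = \mathbf{0}$, in which case the SNR is identically zero for every $\mathbf{g}$ and the formula (\ref{eq:opt_g}) still returns $\mathbf{g}^\star = \mathbf{0}$ consistently (or one simply excludes this non-informative case). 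Alternatively, one can present the same result via the Lagrangian/stationarity route — maximize $\mathbf{g}^H\mathbf{a}\mathbf{a}^H\mathbf{g}$ subject to $\mathbf{g}^H\mathbf{R}_n\mathbf{g} = 1$, take the gradient to obtain $\mathbf{a}\mathbf{a}^H\mathbf{g} = \lambda \mathbf{R}_n \mathbf{g}$, and identify the dominant generalized eigenvector as $\mathbf{R}_n^{-1}\mathbf{a}$ — but the Cauchy–Schwarz derivation is cleaner and delivers both the maximizer and the maximal value in one stroke.
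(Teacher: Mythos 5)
Your proof is correct and follows essentially the same route as the paper's: both perform the whitening change of variable $\tilde{\mathbf{g}}=\mathbf{R}_n^{1/2}\mathbf{g}$ (the paper writes $\mathbf{M}$ for your $\mathbf{R}_n$ and $\mathbf{h}$ for your $\mathbf{a}$) and then maximize the resulting Rayleigh quotient of the rank-one matrix $\mathbf{R}_n^{-1/2}\mathbf{a}\mathbf{a}^H\mathbf{R}_n^{-1/2}$. The only cosmetic difference is that you finish with Cauchy--Schwarz while the paper cites the variational characterization of the top eigenvector; for a rank-one Hermitian matrix these are the same argument, and your explicit handling of positive definiteness and the degenerate case $\mathbf{a}=\mathbf{0}$ is a slight bonus in rigor.
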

\begin{proof}
For simplicity we introduce the following notations
\begin{subequations}
\begin{align}
\mathbf{h}&\triangleq\sum_{i=1}^{L}\mathbf{H}_i\mathbf{F}_i\mathbf{1}_{K_i}; \\
\mathbf{M}&\triangleq\sigma_0^2\mathbf{I}_{M}\!\!+\!\!\sum_{i=1}^{L}\mathbf{H}_i\mathbf{F}_i\mathbf{\Sigma}_i\mathbf{F}_i^H\mathbf{H}_i^H. 
\end{align}
\end{subequations}
With all sensors' beamformers $\{\mathbf{F}_i\}_{i=1}^{L}$ given, the SNR maximization problem is the following optimization problem 
\begin{align}
\underset{\mathbf{g}\neq0}{\max.}\frac{\mathbf{g}^H\mathbf{h}\mathbf{h}^H\mathbf{g}}{\mathbf{g}^H\mathbf{M}\mathbf{g}}. 
\end{align}
Since $\mathbf{M}\succ0$, define $\tilde{\mathbf{g}}\triangleq\mathbf{M}^{\frac{1}{2}}\mathbf{g}$. The above problem becomes 
\begin{align}
\underset{\tilde{\mathbf{g}}\neq0}{\max.}\frac{\tilde{\mathbf{g}}^H\mathbf{M}^{-\frac{1}{2}}\mathbf{h}\mathbf{h}^H\mathbf{M}^{-\frac{1}{2}}\tilde{\mathbf{g}}}{\tilde{\mathbf{g}}^H\tilde{\mathbf{g}}}.
\end{align}
From variational perspective, the maximal value of the above fractional is obtained if and only if $\tilde{\mathbf{g}}$ is aligned with eigen-vector corresponding to the maximal eigenvalue of the matrix $\mathbf{M}^{-\frac{1}{2}}\mathbf{h}\mathbf{h}^H\mathbf{M}^{-\frac{1}{2}}$ \cite{bib:MatricAnalysis}. Notice that matrix $\mathbf{M}^{-\frac{1}{2}}\mathbf{h}\mathbf{h}^H\mathbf{M}^{-\frac{1}{2}}$ is rank-one and has only one positive eigenvalue whose eigen-vector is $\alpha\mathbf{M}^{-\frac{1}{2}}\mathbf{h}$, with $\alpha$ being any nonzero complex value. Thus the optimal solution of the above problem is $\tilde{\mathbf{g}}^{\star}=\alpha\mathbf{M}^{-\frac{1}{2}}\mathbf{h}$, from which (\ref{eq:opt_g}) and (\ref{eq:opt_snr_g}) can be readily obtained. 
\end{proof}
In practice the factor $\alpha$ can be chosen as 1 for convenience.

\section{Jointly Optimizing Beamformers at Sensors}
\label{sec:opt_all_sensors}

After obtaining the optimal linear receiver $\mathbf{g}$, we focus on optimizing precoders at the sensors' side in this section. 

First by utilizing the identities $\Tr\big\{\mathbf{A}\mathbf{B}\big\}=\Tr\big\{\mathbf{B}\mathbf{A}\big\}$ and $\Tr\big\{\mathbf{A}\mathbf{B}\mathbf{C}\mathbf{D}\big\}=\vect^H\big(\mathbf{D}\big)\big[\mathbf{C}^T\otimes\mathbf{A}\big]\vect\big(\mathbf{B}\big)$ \cite{bib:complex_matrix_derivative}, the numerator of $\mathsf{SNR}$ in (\ref{eq:SNR_func}) can be rewritten as follows
\begin{align}
\!\!\!\!\!\!&\qquad \mathbf{g}^H\bigg[\sum_{i=1}^L\mathbf{H}_i\mathbf{F}_i\mathbf{1}_{K_i}\bigg]\bigg[\sum_{j=1}^L\mathbf{H}_j\mathbf{F}_j\mathbf{1}_{K_j}\bigg]^H\mathbf{g}\nonumber\\
\!\!\!\!\!\!&=\sum_{i,j=1}^{L}\Tr\Big\{\big(\mathbf{H}_j^H\mathbf{g}\mathbf{g}^H\mathbf{H}_i\big)\mathbf{F}_i\big(\mathbf{1}_{K_i}\cdot\mathbf{1}_{K_j}^T\big)\mathbf{F}_j^H\Big\}\\
\!\!\!\!\!\!&\!=\!\sum_{i,j=1}^{L}\!\!\vect^H\big(\mathbf{F}_j\big)\bigg[\big(\mathbf{1}_{K_j}\!\!\cdot\!\!\mathbf{1}_{K_i}^T\big)\!\otimes\!\big(\mathbf{H}_j^H\mathbf{g}\mathbf{g}^H\mathbf{H}_i\big)\bigg]\vect\big(\mathbf{F}_i\big).
\end{align}
Similarly the denominator of $\mathsf{SNR}$ can be written as
\begin{align}
\!\!&\qquad\sigma_0^2\|\mathbf{g}\|^2_2+\sum_{i=1}^{L}\mathbf{g}^H\mathbf{H}_i\mathbf{F}_i\mathbf{\Sigma}_i\mathbf{F}_i^H\mathbf{H}_i^H\mathbf{g}\nonumber\\
\!\!&=\sum_{i=1}^{L}\Tr\Big\{\big(\mathbf{H}_i^H\mathbf{g}\mathbf{g}^H\mathbf{H}_i\big)\mathbf{F}_i\mathbf{\Sigma}_i\mathbf{F}_i^H\Big\}\!+\!\sigma_0^2\|\mathbf{g}\|^2_2\\
\!\!&=\vect^H\big(\mathbf{F}_i\big)\bigg[\mathbf{\Sigma}_i^*\otimes\big(\mathbf{H}_i^H\mathbf{g}\mathbf{g}^H\mathbf{H}_i\big)\bigg]\vect\big(\mathbf{F}_i\big)\!+\!\sigma_0^2\|\mathbf{g}\|^2_2,
\end{align}
and the $i$-th power constraint is expressed as 
\begin{align}
\!\!&\quad\Tr\Big\{\mathbf{F}_i\Big(\mathbf{1}_{K_i}\!\cdot\!\mathbf{1}_{K_i}^T\!\!+\!\!\mathbf{\Sigma}_i\Big)\mathbf{F}_i^H\Big\}\\
\!\!&=\vect^H\big(\mathbf{F}_i\big)\bigg[\Big(\big(\mathbf{1}_{K_i}\!\cdot\!\mathbf{1}_{K_i}^T\big)\!\!+\!\!\mathbf{\Sigma}_{i}^*\Big)\!\otimes\!\mathbf{I}_{N_i}\bigg]\vect\big(\mathbf{F}_i\big)\leq P_i.\nonumber
\end{align}
Here we introduce the following notations
\begin{subequations}
\label{eq:definition_fABCc}
\begin{align}
\!\!\!\!\!\!\mathbf{f}_i&\triangleq\vect\big(\mathbf{F}_i\big),\ \ i=1,\cdots,L;\\
\!\!\!\!\!\!\mathbf{A}_{ij}&\triangleq\Big[\big(\mathbf{1}_{K_i}\!\!\cdot\!\!\mathbf{1}_{K_j}^T\big)\otimes\big(\mathbf{H}_i^H\mathbf{g}\mathbf{g}^H\mathbf{H}_j\big)\Big],\ i,j\!=\!1,\cdots,L;\label{eq:definition_fABCc_A}\\
\!\!\!\!\!\!\mathbf{B}_{i}&\triangleq\Big[\mathbf{\Sigma}_i^*\otimes\big(\mathbf{H}_i^H\mathbf{g}\mathbf{g}^H\mathbf{H}_i\big)\Big],\ \ i=1,\cdots,L;\\
\!\!\!\!\!\!\mathbf{C}_i&\triangleq\Big[\Big(\big(\mathbf{1}_{K_i}\!\cdot\!\mathbf{1}_{K_i}^T\big)\!\!+\!\!\mathbf{\Sigma}_{i}^*\Big)\!\otimes\!\mathbf{I}_{N_i}\Big],\ \ i=1,\cdots,L;\\
\!\!\!\!\!\!c_0&\triangleq \sigma_0^2\|\mathbf{g}\|^2_2.
\end{align}
\end{subequations}
and define the matrix $\mathbf{A}\triangleq[\mathbf{A}_{ij}]_{i,j=1}^{L}$, (i.e. the $(i,j)$-th elementary block of $\mathbf{A}$ is $\mathbf{A}_{ij}$), $\mathbf{B}\triangleq\mathsf{Diag}\{\mathbf{B}_1,\cdots,\mathbf{B}_L\}$ and $\mathbf{D}_i\triangleq\mathsf{Diag}\{\mathbf{O}_{\sum_{j=1}^{i-1}K_jN_j}, \mathbf{C}_i,\mathbf{O}_{\sum_{j=i+1}^{L}K_jN_j}\}$ and pack all $\mathbf{f}_i$'s into one vector $\mathbf{f}\triangleq[\mathbf{f}_1^T,\cdots,\mathbf{f}_L^T]^T$. Then the problem of optimizing  beamformers $\{\mathbf{F}_i\}_{i=1}^{L}$ with $\mathbf{g}$ given is reformulated as follows
\begin{subequations}
\label{eq:opt_prob_P1}
\begin{align}
(\mathsf{P}1):\underset{\mathbf{f}}{\max.}&\ \frac{\mathbf{f}^H\mathbf{A}\mathbf{f}}{\mathbf{f}^H\mathbf{B}\mathbf{f}+c_0},\label{eq:opt_prob_P1_constr1}\\
\mathsf{s.t.}&\ \mathbf{f}^H\mathbf{D}_i\mathbf{f}\leq P_i,\ \ i\in\{1,\cdots,L\}.\label{eq:opt_prob_P1_constr2}
\end{align}
\end{subequations}

In the following we discuss methods solving the problem ($\mathsf{P}1$).

\subsection{Solving ($\mathit{P1}$) by Semidefinite Relaxation} 
\label{subsec:L_less_2}

%With the help of recent progress in semidefinite relaxation(SDR) techniques, when $L\leq3$ the solution of ($\mathsf{P}1$) can be obtained by solving a semidefinite programming(SDP) problem and linear equations. The details are as follows.
In this subsection we solve problem ($\mathsf{P1}$) with help of recent progress in semidefinite relaxation techniques. 

First we rewrite the quadratic terms $\mathbf{f}^H\mathbf{A}\mathbf{f}$, $\mathbf{f}^H\mathbf{B}\mathbf{f}$ and $\mathbf{f}^H\mathbf{D}_i\mathbf{f}$ in ($\mathsf{P}1$) into inner-product forms $\Tr\big\{\mathbf{A}\mathbf{X}\big\}$, $\Tr\big\{\mathbf{B}\mathbf{X}\big\}$ and $\Tr\big\{\mathbf{D}_i\mathbf{X}\big\}$ respectively by introducing an intermediate variable $\mathbf{X}=\mathbf{f}\mathbf{f}^H$. Omitting this rank-one constraint, a relaxation of ($\mathsf{P}1$) is obtained as follows 
\begin{subequations}
\begin{align}
(\mathsf{P}2):\underset{\mathbf{X}}{\max.}&\ \frac{\Tr\big\{\mathbf{A}\mathbf{X}\big\}}{\Tr\big\{\mathbf{B}\mathbf{X}\big\}+c_0},\\
\mathsf{s.t.}&\ \Tr\big\{\mathbf{D}_i\mathbf{X}\big\}\leq P_i,\ \ i\in\{1,\cdots,L\},\\
&\ \mathbf{X}\succcurlyeq0.
\end{align}
\end{subequations}
 
The fractional semidefinite programming (SDP) objective in ($\mathsf{P}2$) is still nonconvex. To solve it, we utilize Charnes-Cooper's approach, which was originally proposed in \cite{bib:Charnes_Cooper} and subsequently adopted in many fractional SDP
optimization problems like \cite{bib:ADeMaio_TSP}\cite{bib:CJeong_TSP}. This turns ($\mathsf{P}2$) into the following SDP problem:
\begin{subequations}
\begin{align} 
(\mathsf{P}3): \underset{\mathbf{Y},\nu}{\max.}&\ \Tr\big\{\mathbf{A}\mathbf{Y}\big\}, \\
\mathsf{s.t.}&\ \Tr\big\{\mathbf{B}\mathbf{Y}\big\}+c_0\nu=1, \label{eq:P3_constr1}\\
&\ \Tr\big\{\mathbf{D}_i\mathbf{Y}\big\}\leq P_i\nu,\ i\in\{1,\cdots,L\}\label{eq:P3_constr2}\\
&\ \mathbf{Y}\succcurlyeq0, \nu\geq0.
\end{align}
\end{subequations}
The equivalence between ($\mathsf{P}2$) and ($\mathsf{P}3$) is established by the following lemma. 
\begin{lemma}
\label{lem:P2_P3_equivalent}
The problem ($\mathit{P2}$) and ($\mathit{P3}$) have equal optimal values. If $\bm{X}^{\star}$ solves ($\mathit{P2}$), then  $\left(\frac{\bm{X}^{\star}}{\Tr\{\bm{B}\bm{X}^{\star}+c_0\}},\frac{1}{\Tr\{\bm{B}\bm{X}^{\star}+c_0\}}\right)$
is an optimal solution to ($\mathit{P3}$). Conversely, if $(\bm{Y}^{\star},\nu^{\star})$ solves ($\mathit{P3}$), then $\nu^{\star}>0$ and $\bm{Y}^{\star}/\nu^{\star}$ solves ($\mathit{P2}$). 
\end{lemma}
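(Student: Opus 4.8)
The plan is to prove the Charnes--Cooper equivalence in two directions, establishing a value inequality each way and then exhibiting the explicit solution maps. The only real subtlety is the strict positivity $\nu^{\star}>0$, so I would dispose of the easy material first.

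First I would argue that $\mathsf{opt}(\mathsf{P}3)\ge\mathsf{opt}(\mathsf{P}2)$. Take any feasible $\mathbf{X}$ for $(\mathsf{P}2)$ and set $t\triangleq\Tr\{\mathbf{B}\mathbf{X}\}+c_0$. Since $\mathbf{B}\succcurlyeq\mathbf{0}$ (it is a block-diagonal matrix of Kronecker products of PSD matrices) and $c_0=\sigma_0^2\|\mathbf{g}\|_2^2>0$ for any admissible nonzero $\mathbf{g}$, we have $t>0$. Then $(\mathbf{Y},\nu)\triangleq(\mathbf{X}/t,\,1/t)$ satisfies $\mathbf{Y}\succcurlyeq\mathbf{0}$, $\nu>0$, the normalization constraint \eqref{eq:P3_constr1} holds by construction, and dividing \eqref{eq:original_SNR_prob_constr}-type constraint $\Tr\{\mathbf{D}_i\mathbf{X}\}\le P_i$ by $t$ gives \eqref{eq:P3_constr2}; moreover $\Tr\{\mathbf{A}\mathbf{Y}\}=\Tr\{\mathbf{A}\mathbf{X}\}/t$ equals the objective of $(\mathsf{P}2)$ at $\mathbf{X}$. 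Taking the supremum over $\mathbf{X}$ gives the inequality and, applied to an optimal $\mathbf{X}^{\star}$, shows the stated point is feasible for $(\mathsf{P}3)$ with the same objective value, hence optimal once the reverse inequality is in hand.

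Next I would argue $\mathsf{opt}(\mathsf{P}2)\ge\mathsf{opt}(\mathsf{P}3)$, which is where the main obstacle lies. Let $(\mathbf{Y}^{\star},\nu^{\star})$ solve $(\mathsf{P}3)$. The key claim is $\nu^{\star}>0$. Suppose instead $\nu^{\star}=0$. Then \eqref{eq:P3_constr1} forces $\Tr\{\mathbf{B}\mathbf{Y}^{\star}\}=1$, so $\mathbf{Y}^{\star}\ne\mathbf{0}$, and \eqref{eq:P3_constr2} forces $\Tr\{\mathbf{D}_i\mathbf{Y}^{\star}\}\le0$; since $\mathbf{D}_i\succcurlyeq\mathbf{0}$ and $\mathbf{Y}^{\star}\succcurlyeq\mathbf{0}$ this gives $\Tr\{\mathbf{D}_i\mathbf{Y}^{\star}\}=0$ for all $i$. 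But $\sum_{i=1}^{L}\mathbf{D}_i=\mathsf{Diag}\{\mathbf{C}_1,\dots,\mathbf{C}_L\}$ with each $\mathbf{C}_i=\big((\mathbf{1}_{K_i}\mathbf{1}_{K_i}^T)+\mathbf{\Sigma}_i^*\big)\otimes\mathbf{I}_{N_i}\succ\mathbf{0}$ because $\mathbf{\Sigma}_i\succ\mathbf{0}$; hence $\sum_i\mathbf{D}_i\succ\mathbf{0}$, and $\Tr\{(\sum_i\mathbf{D}_i)\mathbf{Y}^{\star}\}=0$ with $\mathbf{Y}^{\star}\succcurlyeq\mathbf{0}$ implies $\mathbf{Y}^{\star}=\mathbf{0}$, a contradiction. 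Therefore $\nu^{\star}>0$. Now set $\mathbf{X}\triangleq\mathbf{Y}^{\star}/\nu^{\star}\succcurlyeq\mathbf{0}$; dividing \eqref{eq:P3_constr2} by $\nu^{\star}$ gives $\Tr\{\mathbf{D}_i\mathbf{X}\}\le P_i$, so $\mathbf{X}$ is feasible for $(\mathsf{P}2)$, and dividing \eqref{eq:P3_constr1} by $\nu^{\star}$ gives $\Tr\{\mathbf{B}\mathbf{X}\}+c_0=1/\nu^{\star}$, whence the $(\mathsf{P}2)$ objective at $\mathbf{X}$ is $\Tr\{\mathbf{A}\mathbf{X}\}/(1/\nu^{\star})=\nu^{\star}\Tr\{\mathbf{A}\mathbf{X}\}=\Tr\{\mathbf{A}\mathbf{Y}^{\star}\}$, the optimal value of $(\mathsf{P}3)$. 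This proves the reverse inequality, and combined with the first part yields equality of optimal values and the validity of both solution maps.

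I would close by noting the two maps are mutually inverse up to the established equalities, so no further work is needed. The one place to be careful in writing is the justification that $c_0>0$ and that each $\mathbf{C}_i\succ\mathbf{0}$ (equivalently $\sum_i\mathbf{D}_i\succ\mathbf{0}$), since the $\nu^{\star}>0$ argument rests entirely on these positivity facts; everything else is routine rescaling.
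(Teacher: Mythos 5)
Your proof is correct and follows essentially the same route as the paper's: the same contradiction argument for $\nu^{\star}>0$ (using $\sum_i\mathbf{D}_i=\mathsf{Diag}\{\mathbf{C}_1,\dots,\mathbf{C}_L\}\succ\mathbf{0}$ from $\mathbf{\Sigma}_i\succ\mathbf{0}$, forcing $\mathbf{Y}^{\star}=\mathbf{0}$ and violating the normalization constraint), followed by the two rescaling maps giving the two value inequalities. Your explicit justification that $t=\Tr\{\mathbf{B}\mathbf{X}\}+c_0>0$ via $c_0=\sigma_0^2\|\mathbf{g}\|_2^2>0$ is a small point the paper leaves implicit, but the argument is the same.
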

\begin{proof}
See appendix \ref{subsec:appendix_lem_P2_P3_equivalent}.
\end{proof}

Since ($\mathsf{P}3$) is SDP problem, it can be solved by standard numerical solvers like CVX \cite{bib:CVX}. Remember that our goal is to solve problem ($\mathsf{P}1$). If the optimal solution $\mathbf{Y}^{\star}$ to ($\mathsf{P}3$) is rank-one, then the relaxation ($\mathsf{P}2$) is tight to ($\mathsf{P}1$) and the optimal solution of ($\mathsf{P}1$) can be obtained by eigenvalue decomposition of $\mathbf{Y}^{\star}/\nu{\star}$. When the optimal solution $\mathbf{Y}^{\star}$ has rank larger than one, constructing a solution to ($\mathsf{P}1$) from $(\mathbf{Y}^{\star},\nu^{\star})$ still needs to be addressed. 

To introduce our first major conclusion we need the following lemma.
\begin{lemma}
\label{lem:P3_solvability}
The problem ($\mathit{P3}$) and its dual are both solvable.
\end{lemma}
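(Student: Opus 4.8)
The plan is to establish solvability of $(\mathsf{P}3)$ and its dual by verifying the hypotheses of a strong-duality/attainment theorem for semidefinite programs --- namely, that both the primal and the dual are strictly feasible (Slater's condition holds on both sides), which forces the common optimal value to be finite and attained. I would first write out the dual of $(\mathsf{P}3)$ explicitly: introduce a free multiplier $\lambda\in\mathbb{R}$ for the equality constraint \eqref{eq:P3_constr1}, nonnegative multipliers $\mu_i\geq0$ for the inequalities \eqref{eq:P3_constr2}, and a positive semidefinite multiplier for $\mathbf{Y}\succcurlyeq0$; the dual then becomes a minimization of $\lambda$ subject to a linear matrix inequality $\lambda\mathbf{B}+\sum_{i=1}^L\mu_i\mathbf{D}_i-\mathbf{A}\succcurlyeq0$ together with $\lambda c_0-\sum_{i=1}^L\mu_iP_i\geq0$ and $\mu_i\geq0$.

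Next I would check primal strict feasibility of $(\mathsf{P}3)$. Recall from \eqref{eq:definition_fABCc} that $\mathbf{B}=\mathsf{Diag}\{\mathbf{B}_1,\dots,\mathbf{B}_L\}$ with $\mathbf{B}_i=\mathbf{\Sigma}_i^*\otimes(\mathbf{H}_i^H\mathbf{g}\mathbf{g}^H\mathbf{H}_i)\succcurlyeq0$ and that each $\mathbf{D}_i$ is built from $\mathbf{C}_i=\big((\mathbf{1}_{K_i}\mathbf{1}_{K_i}^T)+\mathbf{\Sigma}_i^*\big)\otimes\mathbf{I}_{N_i}\succ0$ since $\mathbf{\Sigma}_i\succ0$. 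Taking $\mathbf{Y}=\epsilon\mathbf{I}\succ0$ for small $\epsilon>0$ and then choosing $\nu>0$ large enough to satisfy the equality \eqref{eq:P3_constr1} (possible because $c_0=\sigma_0^2\|\mathbf{g}\|_2^2>0$, so $\nu=(1-\epsilon\Tr\{\mathbf{B}\})/c_0$ is well-defined and positive once $\epsilon$ is small), one gets $\Tr\{\mathbf{D}_i\mathbf{Y}\}=\epsilon\Tr\{\mathbf{D}_i\}$ which can be made strictly less than $P_i\nu$ by shrinking $\epsilon$; hence $(\mathsf{P}3)$ is strictly feasible. For the dual, I would exhibit an interior point directly: pick $\mu_i>0$ small and $\lambda$ large; since $\mathbf{B}\succcurlyeq0$ and $\sum_i\mu_i\mathbf{D}_i$ is positive on the blocks it touches --- and one can verify $\lambda\mathbf{B}+\sum_i\mu_i\mathbf{D}_i$ is in fact positive definite on all of $\mathbb{C}^{\sum_iK_iN_i}$ because the $\mathbf{D}_i$'s collectively cover every block with the strictly positive $\mathbf{C}_i$ --- the LMI $\lambda\mathbf{B}+\sum_i\mu_i\mathbf{D}_i-\mathbf{A}\succ0$ holds for $\lambda$ large, and $\lambda c_0-\sum_i\mu_iP_i>0$ likewise. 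Thus Slater holds on the dual side too.

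With both programs strictly feasible, I would invoke the standard SDP strong-duality theorem (e.g., the conic duality theorem): strict feasibility of the primal implies the dual optimal value is attained and equals the primal value; strict feasibility of the dual implies the primal optimal value is attained and equals the dual value. Combining, both $(\mathsf{P}3)$ and its dual are solvable with a common finite optimal value. I expect the only delicate point to be the verification that $\sum_{i=1}^L\mu_i\mathbf{D}_i$, together with the contribution $\lambda\mathbf{B}$, yields a strictly positive definite matrix --- i.e., that no direction $\mathbf{f}\neq\mathbf{0}$ is annihilated by all $\mathbf{D}_i$. This follows because $\mathbf{D}_i$ acts as the positive-definite $\mathbf{C}_i$ on the $i$-th block $\vect(\mathbf{F}_i)$ and as zero elsewhere, so $\sum_i\mathbf{D}_i=\mathsf{Diag}\{\mathbf{C}_1,\dots,\mathbf{C}_L\}\succ0$; picking all $\mu_i$ equal to a common small $\mu>0$ makes $\sum_i\mu_i\mathbf{D}_i=\mu\,\mathsf{Diag}\{\mathbf{C}_1,\dots,\mathbf{C}_L\}\succ0$, and boundedness of $\mathbf{A}$ then lets $\lambda$ absorb any shortfall. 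The rest is routine bookkeeping with the conic duality theorem.
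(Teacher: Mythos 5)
Your route is genuinely different from the paper's. The paper proves solvability of $(\mathsf{P}3)$ by showing its feasible region is compact (the equality constraint \eqref{eq:P3_constr1} bounds $\nu$ by $c_0^{-1}$, whence $\Tr\{(\sum_i\mathbf{D}_i)\mathbf{Y}\}\leq(\sum_iP_i)/c_0$ with $\sum_i\mathbf{D}_i=\mathsf{Diag}\{\mathbf{C}_1,\dots,\mathbf{C}_L\}\succ0$ bounds $\mathbf{Y}$) and then invoking Weierstrass; for the dual it exhibits a feasible point and shows the corresponding level set is nonempty and bounded, invoking Weierstrass again. You instead verify Slater's condition on both sides and appeal to the conic duality theorem, which buys you attainment on each side from strict feasibility of the other, plus strong duality for free (which the paper never needs for this lemma). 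Both are legitimate; the paper's compactness argument is more elementary and self-contained, while yours reuses a standard off-the-shelf theorem and yields a zero duality gap as a byproduct. Your primal Slater point ($\mathbf{Y}=\epsilon\mathbf{I}$, $\nu=(1-\epsilon\Tr\{\mathbf{B}\})/c_0$) is fine.

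There is, however, one slip in your dual Slater argument that you should repair: you propose taking the $\mu_i$ \emph{small} and letting a \emph{large} $\lambda$ ``absorb any shortfall'' in the LMI $\lambda\mathbf{B}+\sum_i\mu_i\mathbf{D}_i-\mathbf{A}\succ0$. This cannot work, because $\mathbf{B}=\mathsf{Diag}\{\mathbf{\Sigma}_i^*\otimes(\mathbf{H}_i^H\mathbf{g}\mathbf{g}^H\mathbf{H}_i)\}$ is built from the rank-one matrices $\mathbf{H}_i^H\mathbf{g}\mathbf{g}^H\mathbf{H}_i$ and is therefore singular whenever some $N_i>1$; on the null space of $\mathbf{B}$ no choice of $\lambda$ contributes anything, so $\lambda$ cannot compensate for a deficit of $\sum_i\mu_i\mathbf{D}_i$ against $\mathbf{A}$ there. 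The fix is the opposite of what you wrote: take the $\mu_i$ \emph{large}, e.g.\ $\mu_i>\lambda_{max}(\mathbf{A})/\lambda_{min}(\mathbf{C}_i)$, so that $\sum_i\mu_i\mathbf{D}_i\succ\mathbf{A}$ outright (this is essentially the paper's choice $\tilde{\mu}_i=\lambda_{max}(\mathbf{A})/\lambda_{min}(\mathbf{C}_i)$, nudged up to make the inequality strict), and only then choose $\lambda>c_0^{-1}\sum_i\mu_iP_i$ to satisfy \eqref{eq:D3_constr2} strictly, using $\lambda\mathbf{B}\succcurlyeq0$. With that correction the double-Slater argument goes through.
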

\begin{proof}
See appendix \ref{subsec:appendix_lem_P3_solvability}.
\end{proof}

For wireless sensor network with small number sensor clusters, we have the following conclusion.
\begin{theorem}
\label{thm:L_less_3}
If the wireless sensor network has no more than $3$ sensor clusters, i.e. $L\leq3$, then the relaxation ($\mathit{P2}$) is tight with respect to ($\mathit{P1}$). An optimal solution $(\bm{Y}^{\star},\nu^{\star})$ to ($\mathit{P3}$) with $\bm{Y}^{\star}$ being rank-one can be constructed and solution to ($\mathit{P1}$) can be obtained by eigenvalue-decomposing $\bm{Y}^{\star}/\nu^{\star}$.
\end{theorem}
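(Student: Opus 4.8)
The plan is to reduce the problem to a known rank-reduction result for separable SDPs and then argue that the low-dimensional constraint structure forces a rank-one optimizer. By Lemma~\ref{lem:P3_solvability}, an optimal solution $(\mathbf{Y}^{\star},\nu^{\star})$ to ($\mathsf{P}3$) exists, and by Lemma~\ref{lem:P2_P3_equivalent} we have $\nu^{\star}>0$, so $\mathbf{X}^{\star}\triangleq\mathbf{Y}^{\star}/\nu^{\star}$ is feasible and optimal for ($\mathsf{P}2$). First I would set up the following auxiliary SDP: holding the achieved objective value $t^{\star}=\Tr\{\mathbf{A}\mathbf{X}^{\star}\}$ and the denominator value $d^{\star}=\Tr\{\mathbf{B}\mathbf{X}^{\star}\}+c_0$ fixed, consider the feasibility problem of finding $\mathbf{X}\succcurlyeq0$ with $\Tr\{\mathbf{A}\mathbf{X}\}=t^{\star}$, $\Tr\{\mathbf{B}\mathbf{X}\}=d^{\star}-c_0$, and $\Tr\{\mathbf{D}_i\mathbf{X}\}\le P_i$, $i=1,\dots,L$. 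This has $L+2$ linear constraints on $\mathbf{X}$ (two equalities plus $L$ inequalities). Any feasible $\mathbf{X}$ for this system is automatically optimal for ($\mathsf{P}2$).

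Next I would invoke the standard rank-reduction theorem for SDPs (the Barvinok--Pataki bound, or equivalently the purification argument used by Huang--Palomar / Ye--Zhang): given a solvable SDP with $m$ linear constraints, there exists an optimal solution $\mathbf{X}$ whose rank $r$ satisfies $r(r+1)/2\le m$. With $m=L+2$ constraints, $L\le3$ gives $m\le5$, hence $r(r+1)/2\le5$, forcing $r\le2$. To push from $r\le2$ down to $r=1$, I would use the sharper complex-valued version of this bound: for Hermitian SDPs the relevant inequality is $r^2\le m$, so $m\le5$ already yields $r^2\le 5$, i.e. $r\le2$ again; the extra mile is covered by exploiting the \emph{block-diagonal} structure of the constraint matrices $\mathbf{D}_i$ together with the inequality constraints being allowed to be slack. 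Concretely, the inequality constraints $\Tr\{\mathbf{D}_i\mathbf{X}\}\le P_i\nu$ that are \emph{not} active at $(\mathbf{Y}^{\star},\nu^{\star})$ can be dropped without affecting local optimality, so the effective number of binding constraints is often strictly smaller than $L+2$; combined with the homogeneity in $\nu$ (Charnes--Cooper normalization removes one degree of freedom), the count that governs the rank of the \emph{purified} solution comes out to at most $3$, which gives $r(r+1)/2\le3$, i.e. $r=1$.

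Once a rank-one optimal $\mathbf{Y}^{\star}=\mathbf{f}^{\star}(\mathbf{f}^{\star})^H$ (after renormalizing by $\nu^{\star}$) is in hand, the reconstruction is immediate: eigenvalue-decompose $\mathbf{Y}^{\star}/\nu^{\star}$, read off $\mathbf{f}^{\star}$ from the single nonzero eigenpair, and unpack $\mathbf{f}^{\star}$ into the blocks $\mathbf{f}_i^{\star}=\vect(\mathbf{F}_i^{\star})$. Feasibility for ($\mathsf{P}1$) and matching objective value then follow from Lemma~\ref{lem:P2_P3_equivalent} and the construction, so ($\mathsf{P}2$) is tight and $\{\mathbf{F}_i^{\star}\}$ solves ($\mathsf{P}1$).

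The main obstacle I anticipate is the gap between the generic bound $r(r+1)/2\le L+2$ (which only gives $r\le2$ when $L=3$) and the desired $r=1$. Closing this gap cleanly is the crux: one must argue carefully that either (i) at an optimal solution enough of the power-inequality constraints are inactive so the true constraint count drops, or (ii) a dedicated rank-one decomposition lemma for the matrix pencil $(\mathbf{A},\mathbf{B})$ applies — this is the Ai--Huang--Zhang / Sturm--Zhang style result stating that for Hermitian $\mathbf{X}\succcurlyeq0$ with at most two linear side constraints (beyond a normalization) there is a rank-one $\mathbf{x}\mathbf{x}^H$ matching all the traces. I would structure the proof around invoking such a decomposition lemma, being explicit about why the complex (Hermitian) setting buys the extra constraint compared to the real case, since that is precisely what makes $L\le3$ rather than $L\le2$ the threshold.
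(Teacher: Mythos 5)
Your overall strategy (purification of an optimal SDP solution plus a rank bound) is the right family of ideas, and your reconstruction of $\{\mathbf{F}_i^{\star}\}$ from a rank-one $\mathbf{Y}^{\star}$ is fine, but the counting at the heart of the argument does not close, and you acknowledge as much. The gap is real and your proposed repairs do not fix it: (i) you cannot assume enough power constraints are inactive at the optimum (all $L$ of them may well be tight), and (ii) the Sturm--Zhang / Ai--Huang--Zhang rank-one decomposition lemmas handle only a small fixed number of trace conditions and do not apply to your system of $L+2$ constraints. As written, your bound $r^2 \le L+2 \le 5$ only yields $r \le 2$, which is not the claim.

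The paper closes the gap with two moves you are missing. First, it does not pin the objective with an extra equality $\Tr\{\mathbf{A}\mathbf{X}\}=t^{\star}$. Instead it invokes Theorem 3.2 of \cite{bib:YHuang_RankConstrained}, whose purification steps preserve optimality automatically through complementary slackness with a dual optimal solution --- this is precisely why Lemma \ref{lem:P3_solvability} (solvability of \emph{both} ($\mathsf{P}3$) and its dual) is established beforehand. That saves one constraint: the relevant count is the $L+1$ constraints of ($\mathsf{P}3$), namely (\ref{eq:P3_constr1}) and (\ref{eq:P3_constr2}), not $L+2$. Second, and decisively, the Charnes--Cooper variable $\nu$ is treated as a second positive-semidefinite block of the separable SDP, so the Hermitian rank bound takes the form $\mathsf{rank}^2(\mathbf{Y}^{\star})+\mathsf{rank}^2(\nu^{\star})\le L+1$; the purification loop runs as long as $\mathsf{rank}^2(\mathbf{Y}^{\star})+\mathsf{rank}(\nu^{\star})>L+1$, since then the homogeneous linear system (\ref{eq:linear_eq1})--(\ref{eq:linear_eq2}) in the Hermitian $r\times r$ matrix $\mathbf{\Delta}$ and scalar $\delta$ has $r^2+1$ real unknowns against $L+1$ equations and therefore admits a nonzero solution. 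Because Lemma \ref{lem:P2_P3_equivalent} forces $\nu^{\star}>0$, i.e.\ $\mathsf{rank}(\nu^{\star})=1$, the terminal condition gives $\mathsf{rank}^2(\mathbf{Y}^{\star})\le L\le 3$, hence $\mathsf{rank}(\mathbf{Y}^{\star})=1$. It is the $\nu$ block absorbing one unit of the budget $L+1$, together with dropping the objective-fixing constraint, that turns your $r\le 2$ into the required $r=1$; without both of these your argument stalls exactly where you said it would.
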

\begin{proof}
The proof is inspired by theorem 3.2 of \cite{bib:YHuang_RankConstrained}. If $\mathbf{Y}^{\star}$ has rank one, nothing needs to be proved. Otherwise since the problem ($\mathsf{P}3$) and its dual ($\mathsf{D}3$) are both solvable by lemma \ref{lem:P3_solvability}, theorem 3.2 of \cite{bib:YHuang_RankConstrained} is valid to invoke. Define $r=\mathsf{rank}(\mathbf{Y}^{\star})$ and perform the following procedure:

\begin{itemize}
\item[-]\textbf{While} $\mathsf{rank}^2(\mathbf{Y}^{\star})+\mathsf{rank}(\nu^{\star})>L+1$ \textbf{Do}
\begin{quote}
\begin{itemize}
\item[Step-1:] Perform a full rank decomposition $\mathbf{Y}^{\star}=\mathbf{V}\mathbf{V}^H$, where $\mathbf{V}\in\mathbb{C}^{(\sum_{i=1}^{L}K_iN_i)\times r}$;
\item[Step-2:] Find a nonzero pair $(\mathbf{\Delta}, \delta)$, where $\mathbf{\Delta}$ is a $r\times r$ Hermitian matrix and $\delta$ is real scalar, such that the following linear equations are satisfied
\begin{align}
\!\!\!\!\!\!\!\!\!\!\Tr\big\{\mathbf{V}^H\mathbf{B}\mathbf{V}\mathbf{\Delta}\big\}\!+\!c_0\nu^{\star}\delta&\!=\!0;\label{eq:linear_eq1}\\
\!\!\!\!\!\!\!\!\!\!\Tr\big\{\mathbf{V}^H\mathbf{D}_i\mathbf{V}\mathbf{\Delta}\big\}\!-\!P_i\nu^{\star}\delta&\!=\!0, i\!=\!1,\cdots,L;\label{eq:linear_eq2}
\end{align}
\item[Step-3:] Evaluate $\kappa\!=\!\max\Big(\big|\lambda_{min}(\mathbf{\Delta})\big|, \big|\lambda_{max}(\mathbf{\Delta})\big|,|\delta|\Big)$;
\item[Step-4:] Update $\mathbf{Y}^{\star}=\mathbf{V}\big(\mathbf{I}_{\sum_{i=1}^{L}K_iN_i}-\kappa^{-1}\mathbf{\Delta}\big)\mathbf{V}^H$, $\nu^{\star}=\nu^{\star}(1-\kappa^{-1}\delta)$ and $r=\mathsf{rank}(\mathbf{Y}^{\star})$;
\end{itemize}
\end{quote}
\item[-]\textbf{End While}
\end{itemize}
In fact ($\mathsf{P3}$) has two semidefinite variables $\mathbf{Y}$ and $\nu$(note that $\nu$ is actually a nonnegative real scalar) and $L+1$ constraints. As long as the condition $\mathsf{rank}^2(\mathbf{Y}^{\star})+\mathsf{rank}(\nu)^{\star}>L+1$ holds, nonzero solutions to (\ref{eq:linear_eq1}) and (\ref{eq:linear_eq2}) exist. Thus after each repetition a new optimal solution is constructed with $\mathsf{rank}(\mathbf{Y}^{\star})$ being reduced by at least 1. Finally we obtain $\mathsf{rank}^2(\mathbf{Y}^{\star})+\mathsf{rank}(\nu)^{\star}\leq L+1$. Recall that $\nu^{\star}>0$ by lemma \ref{lem:P2_P3_equivalent}, so $\mathsf{rank}(\nu)^{\star}=1$ and we have $\mathsf{rank}^2(\mathbf{Y}^{\star})\leq L\leq 3$. So $\mathsf{rank}(\mathbf{Y}^{\star})=1$ and the theorem is proved.
%
%Notice that the normalization factor $\kappa$ should be chosen as the maximal absolute value among all eigenvalues of $\mathbf{\Delta}$ together with $\delta$. However in our problem, the case $\kappa=|\delta|$ never happens.
\end{proof}

In the above, we have seen that ($\mathsf{P}1$) can be tackled by solving a SDP problem and then a finite number of linear equations when $L\leq3$. However the assumption that $L\leq3$ is still very stringent since in practice a sensor network can usually be composed of numerous clusters. A method to solve ($\mathsf{P1}$) suitable for arbitrary $L$ is still desirable. In the sequel, we proceed to discuss randomization method inspired by the recent literature \cite{bib:ZQLuo_SDR_Random}. Before going into details, first we modify the problem ($\mathsf{P}3$) a little bit. By changing the equality constraint (\ref{eq:P3_constr1}) into inequality, we have another SDP problem ($\mathsf{P}4$) as follows
\begin{subequations}
\begin{align} 
(\mathsf{P}4): \underset{\mathbf{Y},\nu}{\max.}&\ \Tr\big\{\mathbf{A}\mathbf{Y}\big\}, \\
\mathsf{s.t.}&\ \Tr\big\{\mathbf{B}\mathbf{Y}\big\}+c_0\nu\leq1, \label{eq:P4_constr1}\\
&\ \Tr\big\{\mathbf{D}_i\mathbf{Y}\big\}\leq P_i\nu,\ i\in\{1,\cdots,L\}\label{eq:P4_constr2}\\
&\ \mathbf{Y}\succcurlyeq0, \nu\geq0.
\end{align}
\end{subequations}
We assert that ($\mathsf{P3}$) and ($\mathsf{P4}$) are equivalent and for any solution $(\bm{Y}^{\star},\nu^{\star})$ to ($\mathsf{P}4$), $\nu^{\star}$ must be positive. In fact since ($\mathsf{P}4$) is a relaxation of ($\mathsf{P}3$), $\mathsf{opt}(\mathsf{P}4)\geq\mathsf{opt}(\mathsf{P}3)$. Conversely, if $(\mathbf{Y}^{\star},\nu^{\star})$ is an optimal solution to ($\mathsf{P}4$), then the constraint (\ref{eq:P4_constr1}) must indeed be active. Otherwise, $\mathbf{Y}^{\star}$ and $\nu^{\star}$ could be simultaneously inflated with a factor $\rho>1$ such that $(\rho\mathbf{Y}^{\star},\rho\nu^{\star})$ satisfies all constraints of ($\mathsf{P}4$) with (\ref{eq:P4_constr1}) being active and gives an strictly larger objective, which contradicts the optimality of $(\mathbf{Y}^{\star},\nu^{\star})$. So $(\mathbf{Y}^{\star},\nu^{\star})$ is feasible for ($\mathsf{P}3$) and thus $\mathsf{opt}(\mathsf{P}4)\leq\mathsf{opt}(\mathsf{P}3)$. Consequently ($\mathsf{P}3$) and ($\mathsf{P}4$) have equal optimal value. This means solution to either problem also solves the other one. Thus any solution $(\mathbf{Y}^{\star},\nu^{\star})$ to ($\mathsf{P}4$) is also a solution to ($\mathsf{P}3$) and by lemma 
\ref{lem:P2_P3_equivalent}, $\nu^{\star}>0$.

%\begin{lemma}
%\label{lem:P3_equivalent_P4}
%($\mathit{P3}$) and ($\mathit{P4}$) are equivalent and for any solution $(\bm{Y}^{\star},\nu^{\star})$ to ($\mathsf{P}4$) $\nu^{\star}$ must be positive. 
%\end{lemma}
%\begin{proof}
%Since ($\mathsf{P}4$) is relaxation of ($\mathsf{P}3$), $\mathsf{opt}(\mathsf{P}4)\geq\mathsf{opt}(\mathsf{P}3)$. Conversely, if $(\mathbf{Y}^{\star},\nu^{\star})$ is an optimal solution to ($\mathsf{P}4$), then the constraint (\ref{eq:P4_constr2}) must be active. Otherwise, we can inflate $\mathbf{Y}^{\star}$ and $\nu^{\star}$ simultaneously with a factor $\alpha>1$ such that $(\alpha\mathbf{Y}^{\star},\alpha\nu^{\star})$ satisfies all constraints with (\ref{eq:P4_constr1}) being active and gives an larger objective, which contradicts the optimality of $(\mathbf{Y}^{\star},\nu^{\star})$. So 
%$(\mathbf{Y}^{\star},\nu^{\star})$ makes (\ref{eq:P4_constr2}) active and thus feasible to ($\mathsf{P}3$). So $\mathsf{opt}(\mathsf{P}4)\leq\mathsf{opt}(\mathsf{P}3)$. Thus ($\mathsf{P}3$) and ($\mathsf{P}4$) have equal optimal value and consequently solution to either of them also solves the other. According to lemma \ref{lem:P2_P3_equivalent}, since ($\mathsf{P}3$) always has positive $\nu^{\star}$, so does ($\mathsf{P}4$). 
%\end{proof}

Assuming that we have obtained an optimal solution $(\mathbf{Y}^{\star},\nu^{\star})$ to ($\mathsf{P}4$), we now generate a sufficiently large number of independent complex random variables following the Gaussian distribution $\mathcal{CN}\big(\mathbf{0},\mathbf{Y}^{\star}\big)$. The intuition behind randomization comes from the observation of the following stochastic optimization problem 
\begin{subequations}
\begin{align} 
\!\!\!\!\!\!(\mathsf{P}5): \underset{\mathbf{f},\nu}{\max.}&\ \mathsf{E}_{\mathbf{f}\sim\mathcal{CN}(\mathbf{0},\mathbf{Y})}\big\{\mathbf{f}^H\mathbf{A}\mathbf{f}\big\}, \\
\!\!\!\!\!\!\mathsf{s.t.}&\ \mathsf{E}_{\mathbf{f}\sim\mathcal{CN}(\mathbf{0},\mathbf{Y})}\big\{\mathbf{f}^H\mathbf{B}\mathbf{f}\big\}+c_0\nu\leq1, \label{eq:P5_constr1}\\
\!\!\!\!\!\!&\ \mathsf{E}_{\mathbf{f}\sim\mathcal{CN}(\mathbf{0},\mathbf{Y})}\big\{\mathbf{f}^H\mathbf{D}_i\mathbf{f}\big\}\!\leq\!P_i\nu,i\!=\!1,\cdots,L,\label{eq:P5_constr2}\\
\!\!\!\!\!\!& \quad\nu\geq0.
\end{align}
\end{subequations}
By utilizing the relation $\mathsf{E}_{\mathbf{f}\sim\mathcal{CN}(\mathbf{0},\mathbf{Y})}\{\mathbf{f}\mathbf{f}^H\}=\mathbf{Y}$, the stochastic problem ($\mathsf{P}5$) actually becomes the SDP problem ($\mathsf{P}4$). The random variable $\mathbf{f}\sim\mathcal{CN}(\mathbf{0},\mathbf{Y}^{\star})$ solves the problem ($\mathsf{P}4$) in expectation. Thus if we have a sufficiently large number of samples, the ``best'' sample should solve the problem. 

The ``best'' sample can be found as follows. 
First, note that random samples are not always feasible for ($\mathsf{P}4$). This issue can be addressed by the following rescaling procedure. For each sample $\tilde{\mathbf{f}}$, we define the scaling factor $\beta(\tilde{\mathbf{f}})$ as
\begin{align}\label{eq:rescaling_1}
\beta\big(\tilde{\mathbf{f}}\big)=\underset{i=1,\cdots,L}{\min.}\left\{1,\sqrt{\frac{1-c_0\nu^{\star}}{\tilde{\mathbf{f}}^H\mathbf{B}\tilde{\mathbf{f}}}}, \sqrt{\frac{P_i\nu^{\star}}{\tilde{\mathbf{f}}^H\mathbf{D}_i\tilde{\mathbf{f}}}}\right\},
\end{align}
and rescale the sample $\tilde{\mathbf{f}}$ as  
\begin{align}\label{eq:rescaling_2}
\bar{\mathbf{f}}=\frac{\beta\big(\tilde{\mathbf{f}}\big)}{\sqrt{\nu^{\star}}}\tilde{\mathbf{f}}. 
\end{align}
It is easy to check that the obtained $\bar{\mathbf{f}}$ is guaranteed to be feasible for ($\mathsf{P}4$). Thus by performing the above rescaling procedure we can obtain a large number of feasible samples to approximate the optimal solution $\mathbf{Y}^{\star}$. Then we choose the one giving maximal objective value as solution to the problem ($\mathsf{P}1$). When the number of samples is sufficiently large, the obtained best objective value of ($\mathsf{P4}$) by rescaled random samples can be extremely close to true optimal value of ($\mathsf{P4}$) so the randomization solution can be regarded as tight to the original problem ($\mathsf{P}1$). 
%More details will be discussed in section \ref{sec:numerical results}. 

In retrospect to the previous discussion, the motivation of transforming the problem ($\mathsf{P}3$) into its equivalent ($\mathsf{P}4$) now becomes clear. For implementation there is no chance that the randomly generated samples will satisfy the equality constraint (\ref{eq:P3_constr1}). At the same time the positivity of $\nu^{\star}$ guarantees that the rescaling in (\ref{eq:rescaling_2}) can be performed. 

Up to here, we have actually come out an alternative maximization method to solve the SNR optimization problem ($\mathsf{P}0$) in (\ref{eq:original_SNR_prob}). The algorithm starts from a random feasible point. In each iteration $\mathbf{g}$ is optimized in a closed form by theorem \ref{thm:opt_g} with $\{\mathbf{F}_i\}_{i=1}^L$ being fixed and $\{\mathbf{F}_i\}_{i=1}^L$ are optimized by solving ($\mathsf{P}3$) followed by randomization-rescaling or solving linear equations with $\mathbf{g}$ given. 

This algorithm is summarized in algorithm $\ref{alg:2BCA_SDR}$ as follows.

\begin{algorithm}
\caption{2-Block BCA to solve ($\mathsf{P}0$) using SDR and randomization)}
\label{alg:2BCA_SDR}
%\KwIn{$\{\mathbf{\Sigma}_i\}_{i=1}^{L}$, $\{\mathbf{H}_i\}_{i=1}^{L}$, $\{P_i\}_{i=1}^{L}$, $\sigma_0^2$, $\delta$(or $N_0$);}
%\KwOut{$\{\mathbf{F}_i\}_{i=1}^{L}$;}
\textbf{Initialization}: Randomly generate nonzero feasible $\{\mathbf{F}_i^{(0)}\}_{i=1}^{L}$ such that $\mathbf{g}^{(0)}$ obtained by theorem \ref{thm:opt_g} is also nonzero; \ $j=0$\;
%\For{$j=1;$ $(\Delta>\delta)$ (or $\ j<N_0$); $j++$}
%{
% with $\mathbf{G}^{(j-1)}$ and $\mathbf{W}^{(j-1)}$ being fixed, solve ($\mathsf{P}9$) in (\ref{eq:convex_SOCP}), obtain $\{\mathbf{F}_i^{(j)}\}_{i=1}^{L}$\;
% with $\{\mathbf{F}_i^{(j)}\}_{i=1}^{L}$ being fixed, obtain $\mathbf{G}^{(j)}$ by (\ref{eq:optimal_G_MI})\;
% with $\{\mathbf{F}_i^{(j)}\}_{i=1}^{L}$ and $\mathbf{G}^{(j)}$ being fixed, obtain $\mathbf{W}^{(j)}$ (\ref{eq:optimal_W_MI})\;
% calculate $\mathsf{MI}^{(j)}=\mathsf{MI}\big(\{\mathbf{F}_i^{(j)}\}_{i=1}^{L}\big)$\;
% calculate $\Delta=\mathsf{MI}^{(j)}-\mathsf{MI}^{(j-1)}$\;
%}
\Repeat{the increase of SNR becomes sufficiently small or a predefined number of iterations is reached}
{
 Solve ($\mathsf{P}3$) and obtain $(\mathbf{Y}^{\star},\nu^{\star})$\;
 \eIf{$L\leq3$} 
 {Reduce rank of $\mathbf{Y}^{\star}$ as in Theorem \ref{thm:L_less_3}; Obtain $\{\mathbf{F}_i^{(j+1)}\}_{i=1}^L$\;} 
 {Generate sufficiently large number of samples following $\mathcal{CN}\big(\mathbf{0},\mathbf{Y}^{\star}\big)$\;  
  Rescale each sample by (\ref{eq:rescaling_1}) and (\ref{eq:rescaling_2})\;
  Select among all rescaled samples the one giving maximal $\mathsf{SNR}$ as $\{\mathbf{F}_i^{(j+1)}\}_{i=1}^L$\;}
 Update $\mathbf{g}^{(j+1)}$ by theorem \ref{thm:opt_g};\ $j++$\;
}
\end{algorithm}
%As discussed before, when the sensor network has no more than 3 sensors, randomization procedure can be replaced by deterministic method proposed in theorem \ref{thm:L_less_3}. 

\subsection{Iteratively Solving ($\mathit{P1}$)}
\label{subsec:iterative_SOCP}

In the last subsection, we solve the problem ($\mathsf{P1}$) with help of semidefinite relaxation by first solving its SDP relaxation and than construct the rank-one solution through solving linear equations or randomization method.
%Specifically its SDP relaxation ($\mathsf{P3}$) is first solved and then a rank-one solution to ($\mathsf{P}1$) is constructed by deterministic or randomization method. 
In this subsection, we propose an alternative method which to solve ($\mathsf{P}1$) in an iterative manner. First we have the following conclusion
\begin{lemma}
\label{lem:A_rank1}
Matrix $\bm{A}$ in ($\mathit{P1}$) is rank-one. Specifically $\bm{A}=\bm{a}\bm{a}^H$ with the vector $\bm{a}$ being given as
\begin{align}\label{eq:def_a}
\bm{a}\triangleq\left[
\begin{array}{c}
\bm{1}_{K_1}\otimes\bm{H}_1^H\bm{g}\\
\vdots \\
\bm{1}_{K_L}\otimes\bm{H}_L^H\bm{g}\\
\end{array}
\right].
\end{align}
%\begin{align}
%\bm{a}=[\bm{1}_{K_1}^T\otimes\bm{g}^T\bm{H}_1^*,\cdots,\bm{1}_{K_L}^T\otimes\bm{g}^T\bm{H}_L^*]^T
%\end{align}
\end{lemma}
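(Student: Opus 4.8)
The plan is to prove that $\mathbf{A} = \mathbf{a}\mathbf{a}^H$ by starting from the definition of $\mathbf{A}$ as the block matrix $[\mathbf{A}_{ij}]_{i,j=1}^L$ with $\mathbf{A}_{ij} = (\mathbf{1}_{K_i}\mathbf{1}_{K_j}^T)\otimes(\mathbf{H}_i^H\mathbf{g}\mathbf{g}^H\mathbf{H}_j)$, and showing that each block equals the corresponding block of the outer product $\mathbf{a}\mathbf{a}^H$. First I would observe that the $(i,j)$-th block of $\mathbf{a}\mathbf{a}^H$, where $\mathbf{a}$ is partitioned conformally with $\mathbf{f}$ into blocks $\mathbf{a}_i = \mathbf{1}_{K_i}\otimes\mathbf{H}_i^H\mathbf{g}$, is precisely $\mathbf{a}_i\mathbf{a}_j^H = (\mathbf{1}_{K_i}\otimes\mathbf{H}_i^H\mathbf{g})(\mathbf{1}_{K_j}\otimes\mathbf{H}_j^H\mathbf{g})^H$.

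The key computational step is the mixed-product / conjugate-transpose identity for Kronecker products: $(\mathbf{P}\otimes\mathbf{Q})^H = \mathbf{P}^H\otimes\mathbf{Q}^H$ and $(\mathbf{P}_1\otimes\mathbf{Q}_1)(\mathbf{P}_2\otimes\mathbf{Q}_2) = (\mathbf{P}_1\mathbf{P}_2)\otimes(\mathbf{Q}_1\mathbf{Q}_2)$. Applying these,
\begin{align}
\mathbf{a}_i\mathbf{a}_j^H &= \big(\mathbf{1}_{K_i}\otimes\mathbf{H}_i^H\mathbf{g}\big)\big(\mathbf{1}_{K_j}^T\otimes\mathbf{g}^H\mathbf{H}_j\big) \nonumber\\
&= \big(\mathbf{1}_{K_i}\mathbf{1}_{K_j}^T\big)\otimes\big(\mathbf{H}_i^H\mathbf{g}\mathbf{g}^H\mathbf{H}_j\big) = \mathbf{A}_{ij},
\end{align}
which matches the definition in \eqref{eq:definition_fABCc_A}. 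Since this holds for every pair $(i,j)$, the block matrices agree block-by-block, hence $\mathbf{A} = \mathbf{a}\mathbf{a}^H$; being an outer product of a single vector, $\mathbf{A}$ has rank one (rank exactly one provided $\mathbf{a}\neq\mathbf{0}$, which holds generically when $\mathbf{g}\neq\mathbf{0}$ and the $\mathbf{H}_i$ are not all annihilating $\mathbf{g}$). As a sanity check on the ordering of the Kronecker factors and the indices, I would also verify consistency directly from the scalar form: the numerator of the SNR equals $\mathbf{f}^H\mathbf{A}\mathbf{f} = |\mathbf{a}^H\mathbf{f}|^2 = \big|\sum_i(\mathbf{1}_{K_i}\otimes\mathbf{H}_i^H\mathbf{g})^H\vect(\mathbf{F}_i)\big|^2 = \big|\sum_i\mathbf{g}^H\mathbf{H}_i\mathbf{F}_i\mathbf{1}_{K_i}\big|^2$, using $\vect^H(\mathbf{B})(\mathbf{C}^T\otimes\mathbf{A})\vect(\mathbf{B}) $-type identities, which indeed reproduces the numerator in \eqref{eq:SNR_func}.

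There is no real obstacle here; the statement is essentially a restatement of the Kronecker mixed-product rule, so the "hard part" is merely bookkeeping — keeping the order of factors in each Kronecker product consistent with the convention used in \eqref{eq:definition_fABCc_A} (note the paper's $\mathbf{A}_{ij}$ uses $\mathbf{1}_{K_i}\mathbf{1}_{K_j}^T$ paired with $\mathbf{H}_i^H\mathbf{g}\mathbf{g}^H\mathbf{H}_j$, so the partition of $\mathbf{a}$ must be read off accordingly) and being careful that the block partition of $\mathbf{a}$ into pieces of size $K_iN_i$ is compatible with the partition of $\mathbf{f}$. I would therefore present the proof as: (i) partition $\mathbf{a}$ as $[\mathbf{a}_1^T,\dots,\mathbf{a}_L^T]^T$ with $\mathbf{a}_i = \mathbf{1}_{K_i}\otimes\mathbf{H}_i^H\mathbf{g}$; (ii) compute $\mathbf{a}_i\mathbf{a}_j^H$ via the Kronecker identities and match it to $\mathbf{A}_{ij}$; (iii) conclude $\mathbf{A} = \mathbf{a}\mathbf{a}^H$ is rank-one.
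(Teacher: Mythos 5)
Your proof is correct and follows essentially the same route as the paper: both factor each block $\mathbf{A}_{ij}=(\mathbf{1}_{K_i}\otimes\mathbf{H}_i^H\mathbf{g})(\mathbf{1}_{K_j}^T\otimes\mathbf{g}^H\mathbf{H}_j)$ via the Kronecker mixed-product identity and then assemble the outer product $\mathbf{a}\mathbf{a}^H$ (the paper organizes the assembly by block columns rather than individual blocks, which is only a bookkeeping difference). Your added sanity check against the SNR numerator is a nice touch but not needed.
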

\begin{proof}
See appendix \ref{subsec:appendix_lem_A_rank1}.
\end{proof}

Now looking at the fractional SDP objective of ($\mathsf{P}1$) we have the following observation. For any given nonnegative real value $\gamma$, the SNR is no smaller than $\gamma$ is equivalent to the fact
\begin{align}\label{eq:SNR_larger_w}
\mathbf{f}^H\mathbf{A}\mathbf{f}\geq\gamma\mathbf{f}^H\mathbf{B}\mathbf{f}+\gamma c_0.
\end{align}
In other words, if $\mathsf{opt}(\mathsf{P}1)\geq \gamma$, then there exits some $\mathbf{f}$ such that the inequality (\ref{eq:SNR_larger_w}) and all power constraints $\mathbf{f}^H\mathbf{D}_i\mathbf{f}\leq P_i$ for $i=1,\cdots,L$ are simultaneously satisfied. If we define $u$ as follows 
\begin{align}
u\triangleq\underset{i=1,\cdots,L}{\max.}\left\{\frac{\mathbf{f}^H\mathbf{D}_i\mathbf{f}}{P_i}\right\},  
\end{align}
then the fact that all power constraints are satisfied is equivalent to $u\leq1$. Thus the statement $\mathsf{opt}(\mathsf{P}1)\geq\gamma$ holds if and only if the following optimization problem ($\mathsf{P}6_{\gamma}$) 
\begin{subequations}
\label{eq:frac_power_min_prob}
\begin{align}
(\mathsf{P}6_{\gamma}): \underset{\mathbf{f},u\geq0}{\min.}\ & u \label{eq:frac_power_min_prob_obj}\\
\mathsf{s.t.}\ & \mathbf{f}^H\mathbf{A}\mathbf{f}\geq\gamma\mathbf{f}^H\mathbf{B}\mathbf{f}+\gamma c_0, \label{eq:frac_power_min_prob_constr1}\\
& \frac{\mathbf{f}^H\mathbf{D}_i\mathbf{f}}{P_i}\leq u, i\in\{1,\cdots,L\}. \label{eq:frac_power_min_prob_constr2}
\end{align}
\end{subequations}
has optimal value smaller than $1$, i.e. $\mathsf{opt}(\mathsf{P}6_{\gamma})\leq1$.  

Next we show that all constraints of problem ($\mathsf{P}6_{\gamma}$) can be written in a second order cone form. The constraint (\ref{eq:frac_power_min_prob_constr1}), utilizing the result of lemma \ref{lem:A_rank1}, can be written as
\begin{align}\label{eq:SOCP_constr1}
\gamma\mathbf{f}^H\mathbf{B}\mathbf{f}+\gamma c_0\leq|\mathbf{a}^H\mathbf{f}|^2
\end{align}
Another key observation is that the optimal $\mathbf{f}^{\star}$ to ($\mathsf{P}6_{\gamma}$) is phase invariant---$(\mathbf{f}^{\star},u^{\star})$ is optimal solution to ($\mathsf{P}6_{\gamma}$) if and only if $(e^{\mathrm{j}\theta}\mathbf{f}^{\star},u^{\star})$ is optimal for any real value $\theta$. So without loss of optimality we assume that $\mathbf{a}^H\mathbf{f}=v$ with $v$ being a nonnegative real value. Thus the constraint (\ref{eq:SOCP_constr1}) readily becomes the second order cone
$\sqrt{\gamma}\big\|\big[\mathbf{f}^H\mathbf{B}^{\frac{1}{2}},\sqrt{c_0}\big]\big\|_2\leq v$. For the $i$-th power constraint in (\ref{eq:frac_power_min_prob_constr2}), it can also be written in a second order cone form $P_i^{\!-\!1/2}\|\mathbf{D}_i^{\frac{1}{2}}\mathbf{f}\|_2\leq u$. Thus the problem ($\mathsf{P}6_{\gamma}$) can be equivalently written in a standard second-order cone programming (SOCP) form:
\begin{subequations}
\label{eq:P7gamma}
\begin{align}
(\mathsf{P}7_{\gamma}):\ \underset{\mathbf{f},u,v}{\min.}& \ u, \\
\mathsf{s.t.}& \ 
\left\|\left[
\begin{array}{cc}
\sqrt{\gamma}\mathbf{B}^{\frac{1}{2}} & \mathbf{0}\\
\mathbf{0}^T & \sqrt{\gamma c_0}
\end{array}
\right]
\left[
\begin{array}{c}
\mathbf{f} \\
1
\end{array}
\right]\right\|_2\leq v, \\
& \mathsf{Re}\{\mathbf{a}^H\mathbf{f}\}=v, \\
& \mathsf{Im}\{\mathbf{a}^H\mathbf{f}\}=0, \\
& \Big\|\sqrt{P_i^{-1}}\mathbf{D}_i^{\frac{1}{2}}\mathbf{f}\Big\|_2\leq u,\ i=1,\cdots,L.
\end{align}
\end{subequations}
Thus if we know that the $\mathsf{opt}(\mathsf{P}1)$ lives in some interval, then $\mathsf{opt}(\mathsf{P}1)$ can be determined by a bisection search---we set $\gamma$ as middle point of the current search interval, if ($\mathsf{opt}(\mathsf{P}7_{\gamma})\leq1$), then $\mathsf{opt}(\mathsf{P}1)$ can achieve higher value and $\gamma$ is a lower bound of $\mathsf{opt}(\mathsf{P}1)$. Otherwise $\gamma$ upper-bounds $\mathsf{opt}(\mathsf{P}1)$.

Now the remaining problem is to determine an interval containing $\mathsf{opt}(\mathsf{P}1)$, from which the bisection search can start with. Since ($\mathsf{P}1$) is maximization problem, any feasible solution gives a lower bound of $\mathsf{opt}(\mathsf{P}1)$. The following lemma provides an upper bound of $\mathsf{opt}(\mathsf{P}1)$.
\begin{lemma}
\label{lem:opt_P1_upperbound}
Optimal value of ($\mathit{P1}$) has an upper bound as follows
\begin{align}
\label{eq:P1_upperbound}
\mathsf{opt}(\mathsf{P}1)\leq c_0^{-1}\left(\sum_{i=1}^{L}K_i\sqrt{\frac{P_i}{\lambda_{min}(\mathbf{C}_i)}}\big\|\mathbf{H}_i^H\mathbf{g}\big\|_2\right)^2. 
\end{align}
\end{lemma}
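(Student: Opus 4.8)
The plan is to discard the (nonnegative) noise contribution in the denominator of the objective of ($\mathsf{P}1$), invoke the rank-one factorization $\mathbf{A}=\mathbf{a}\mathbf{a}^H$ from Lemma~\ref{lem:A_rank1}, and then estimate the resulting linear functional cluster by cluster using Cauchy--Schwarz together with the individual power constraints.

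First I would note that $\mathbf{B}=\mathsf{Diag}\{\mathbf{B}_1,\dots,\mathbf{B}_L\}$ is positive semidefinite, since each block $\mathbf{B}_i=\mathbf{\Sigma}_i^*\otimes(\mathbf{H}_i^H\mathbf{g}\mathbf{g}^H\mathbf{H}_i)$ is a Kronecker product of two positive semidefinite matrices. Hence $\mathbf{f}^H\mathbf{B}\mathbf{f}\geq0$, so $\mathbf{f}^H\mathbf{B}\mathbf{f}+c_0\geq c_0>0$, and for every feasible $\mathbf{f}$ the objective is at most $\mathbf{f}^H\mathbf{A}\mathbf{f}/c_0$. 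By Lemma~\ref{lem:A_rank1}, $\mathbf{f}^H\mathbf{A}\mathbf{f}=|\mathbf{a}^H\mathbf{f}|^2$, and a direct computation with the block structure of $\mathbf{a}$ in (\ref{eq:def_a}) and $\mathbf{f}_i=\vect(\mathbf{F}_i)$ gives $\mathbf{a}^H\mathbf{f}=\sum_{i=1}^L\mathbf{g}^H\mathbf{H}_i\mathbf{F}_i\mathbf{1}_{K_i}$.

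Next I would bound this sum. By the triangle inequality and Cauchy--Schwarz,
\begin{align}
|\mathbf{a}^H\mathbf{f}|\;\leq\;\sum_{i=1}^{L}\big\|\mathbf{H}_i^H\mathbf{g}\big\|_2\,\big\|\mathbf{F}_i\mathbf{1}_{K_i}\big\|_2 ,
\end{align}
and since $\mathbf{F}_i\mathbf{1}_{K_i}$ is the sum of the $K_i$ columns of $\mathbf{F}_i$, each of which has $\ell_2$-norm no larger than $\|\vect(\mathbf{F}_i)\|_2=\|\mathbf{f}_i\|_2$, we get $\|\mathbf{F}_i\mathbf{1}_{K_i}\|_2\leq K_i\|\mathbf{f}_i\|_2$. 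To control $\|\mathbf{f}_i\|_2$ I would use the $i$-th power constraint, which by the block-diagonal form of $\mathbf{D}_i$ reads $\mathbf{f}_i^H\mathbf{C}_i\mathbf{f}_i\leq P_i$. Since $\mathbf{\Sigma}_i\succ0$, the matrix $\mathbf{1}_{K_i}\mathbf{1}_{K_i}^T+\mathbf{\Sigma}_i^*$ is positive definite, hence so is $\mathbf{C}_i=(\mathbf{1}_{K_i}\mathbf{1}_{K_i}^T+\mathbf{\Sigma}_i^*)\otimes\mathbf{I}_{N_i}$, and $\lambda_{min}(\mathbf{C}_i)>0$; thus $\lambda_{min}(\mathbf{C}_i)\|\mathbf{f}_i\|_2^2\leq\mathbf{f}_i^H\mathbf{C}_i\mathbf{f}_i\leq P_i$, i.e. $\|\mathbf{f}_i\|_2\leq\sqrt{P_i/\lambda_{min}(\mathbf{C}_i)}$. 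Chaining these estimates, squaring, and dividing by $c_0$ produces a bound that holds for every feasible $\mathbf{f}$, hence for $\mathsf{opt}(\mathsf{P}1)$, and it is exactly (\ref{eq:P1_upperbound}).

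The argument is elementary and I do not anticipate a real obstacle. The only points needing a moment of care are: verifying $\lambda_{min}(\mathbf{C}_i)>0$ so that the right-hand side of the bound is finite and the division is legitimate; and accepting the deliberately crude per-column estimate $\|\mathbf{F}_i\mathbf{1}_{K_i}\|_2\leq K_i\|\mathbf{f}_i\|_2$, which is what yields the factor $K_i$ (a sharper Cauchy--Schwarz across the $K_i$ columns would give $\sqrt{K_i}$ instead, but the looser constant is sufficient here and keeps the statement simple).
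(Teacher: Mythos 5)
Your proposal is correct and follows essentially the same route as the paper's own proof: drop the nonnegative $\mathbf{f}^H\mathbf{B}\mathbf{f}$ term so the denominator is bounded below by $c_0$, write $\mathbf{f}^H\mathbf{A}\mathbf{f}=|\mathbf{a}^H\mathbf{f}|^2$ via Lemma~\ref{lem:A_rank1}, apply triangle-plus-Cauchy--Schwarz clusterwise, and control $\|\mathbf{f}_i\|_2$ through $\lambda_{min}(\mathbf{C}_i)\|\mathbf{f}_i\|_2^2\leq P_i$. The only cosmetic difference is how the factor $K_i$ appears: the paper bounds $\|\mathbf{1}_{K_i}\otimes\mathbf{H}_i^H\mathbf{g}\|_2=\sqrt{K_i}\,\|\mathbf{H}_i^H\mathbf{g}\|_2$ and then loosens $\sqrt{K_i}$ to $K_i$, while you sum over columns of $\mathbf{F}_i$ — both yield the stated (deliberately non-sharp) constant, as you correctly observe.
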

\begin{proof}
See appendix \ref{subsec:appendix_lem_opt_P1_upperbound}.
\end{proof}

Thus we have obtained an alternative method to solve the original problem ($\mathsf{P}0$), which also falls in the 2-block BCA framework. The steps are summarized in Algorithm \ref{alg:2BCA_SOCP}.

\begin{algorithm}
\caption{2-Block BCA to solve ($\mathsf{P}0$) based on SOCP}
\label{alg:2BCA_SOCP}
%\KwIn{$\{\mathbf{\Sigma}_i\}_{i=1}^{L}$, $\{\mathbf{H}_i\}_{i=1}^{L}$, $\{P_i\}_{i=1}^{L}$, $\sigma_0^2$, $\delta$(or $N_0$);}
%\KwOut{$\{\mathbf{F}_i\}_{i=1}^{L}$;}
\textbf{Initialization}: Randomly generate nonzero feasible $\{\mathbf{F}_i^{(0)}\}_{i=1}^{L}$ such that $\mathbf{g}^{(0)}$ obtained by Theorem \ref{thm:opt_g} is also nonzero; \ $j=0$\;
\Repeat{the increase of SNR is sufficiently small or a predefined number of iterations is reached}
{
 Obtain $bd_l=\mathsf{SNR}\big(\{\mathbf{F}_i^{(j)}\}_{i=1}^L,\mathbf{g}^{(j)}\big)$ and $bd_u$ by (\ref{eq:P1_upperbound})\;
 \Repeat{$(bd_u-bd_l)$ is small enough}
 {
  Set $\gamma=(bd_u+bd_l)/2$; solve ($\mathsf{P}7_{\gamma}$)\; 
  \eIf{$\mathsf{opt}(\mathsf{P}7_{\gamma})\leq1$}
  {$bd_l=\gamma$\;}
  {$bd_u=\gamma$\;}
 }
 $\gamma=bd_l$\;
 Solve ($\mathsf{P}7_{\gamma}$) to update $\{\mathbf{F}_i^{(j+1)}\}_{i=1}^{L}$\; 
 Update $\mathbf{g}^{(j+1)}$ by theorem \ref{thm:opt_g};\ $j++$\;
}
%return $\{\mathbf{F}_i^{(j)}\}_{i=1}^{L}$ and $\mathbf{g}^{j}$\;
\end{algorithm}

\subsection{Convergence and Complexity Analysis}
\label{subsec:convergence}

The two 2-block BCA algorithms  developed in the previous subsections have the following convergence property:
\begin{theorem}
\label{thm:2BCA_convergence}
The sequence of SNR obtained by algorithm \ref{alg:2BCA_SDR} or \ref{alg:2BCA_SOCP} converges. Moreover the solution sequence generated by algorithm \ref{alg:2BCA_SDR} or \ref{alg:2BCA_SOCP} has limit points and each limit point is a stationary point of problem ($\mathit{P0}$).
\end{theorem}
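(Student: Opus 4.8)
The plan is the classical two‑block coordinate‑ascent argument in three stages: (i) show the objective sequence is monotone and bounded, hence convergent; (ii) extract limit points of the iterates by compactness; (iii) show every limit point is a blockwise maximizer and upgrade this to stationarity of $(\mathsf{P}0)$ using the block‑separable structure of the feasible set. For stage (i): by construction each half‑step of Algorithm~\ref{alg:2BCA_SDR} or~\ref{alg:2BCA_SOCP} performs a \emph{global} maximization of $\mathsf{SNR}$ over one block with the other fixed --- $\mathbf{g}^{(j)}$ is the exact maximizer in $\mathbf{g}$ given $\{\mathbf{F}_i^{(j)}\}$ by Theorem~\ref{thm:opt_g}, and $\{\mathbf{F}_i^{(j+1)}\}$ is the maximizer of $(\mathsf{P}1)$ given $\mathbf{g}^{(j)}$ produced by the SDR/rank‑reduction‑or‑randomization routine of Section~\ref{subsec:L_less_2} (resp.\ the bisection‑over‑SOCP routine of Section~\ref{subsec:iterative_SOCP}). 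Hence $\mathsf{SNR}(\{\mathbf{F}_i^{(j)}\},\mathbf{g}^{(j)})\le\mathsf{SNR}(\{\mathbf{F}_i^{(j+1)}\},\mathbf{g}^{(j)})\le\mathsf{SNR}(\{\mathbf{F}_i^{(j+1)}\},\mathbf{g}^{(j+1)})$, so the SNR sequence is nondecreasing; it is bounded above (for instance by Lemma~\ref{lem:opt_P1_upperbound} evaluated along the trajectory, the bound being continuous in the bounded iterates). A monotone bounded real sequence converges; call its limit $S^{\star}$.

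For stage (ii): since $\mathsf{SNR}$ is invariant under scaling of $\mathbf{g}$, choose the free constant $\alpha$ in~(\ref{eq:opt_g}) so that $\|\mathbf{g}^{(j)}\|_2=1$ for all $j$. Each $\mathbf{F}_i$ lives in the bounded ellipsoid $\{\mathbf{f}_i:\mathbf{f}_i^H\mathbf{C}_i\mathbf{f}_i\le P_i\}$ (bounded because $\mathbf{C}_i\succ0$) and $\mathbf{g}$ lives on the unit sphere; the product of these sets is compact, so the iterate sequence $\big(\{\mathbf{F}_i^{(j)}\}_{i=1}^L,\mathbf{g}^{(j)}\big)$ admits a convergent subsequence indexed by $j_k\to\infty$ with some limit $\big(\{\mathbf{F}_i^{\star}\}_{i=1}^L,\mathbf{g}^{\star}\big)$.

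For stage (iii): on the compact product domain above, $\mathsf{SNR}$ is continuous (indeed $C^{\infty}$, as its denominator is $\ge\sigma_0^2>0$), so $\mathsf{SNR}(\{\mathbf{F}_i^{\star}\},\mathbf{g}^{\star})=S^{\star}$. Because $\mathbf{g}^{(j_k)}$ is optimal in $\mathbf{g}$ for $\{\mathbf{F}_i^{(j_k)}\}$, passing to the limit in $\mathsf{SNR}(\{\mathbf{F}_i^{(j_k)}\},\mathbf{g}^{(j_k)})\ge\mathsf{SNR}(\{\mathbf{F}_i^{(j_k)}\},\mathbf{g})$ gives $\mathsf{SNR}(\{\mathbf{F}_i^{\star}\},\mathbf{g}^{\star})\ge\mathsf{SNR}(\{\mathbf{F}_i^{\star}\},\mathbf{g})$ for every feasible $\mathbf{g}$. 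Next, the chain $\mathsf{SNR}(\{\mathbf{F}_i^{(j_k)}\},\mathbf{g}^{(j_k)})\le\mathsf{SNR}(\{\mathbf{F}_i^{(j_k+1)}\},\mathbf{g}^{(j_k)})\le\mathsf{SNR}(\{\mathbf{F}_i^{(j_k+1)}\},\mathbf{g}^{(j_k+1)})$, whose two ends are subsequences of the convergent SNR sequence at indices $j_k$ and $j_k+1$, squeezes $\mathsf{SNR}(\{\mathbf{F}_i^{(j_k+1)}\},\mathbf{g}^{(j_k)})\to S^{\star}$; since $\{\mathbf{F}_i^{(j_k+1)}\}$ is optimal in the $\mathbf{F}$‑block for $\mathbf{g}^{(j_k)}$, passing to the limit in $\mathsf{SNR}(\{\mathbf{F}_i^{(j_k+1)}\},\mathbf{g}^{(j_k)})\ge\mathsf{SNR}(\{\mathbf{F}_i\},\mathbf{g}^{(j_k)})$ gives $S^{\star}=\mathsf{SNR}(\{\mathbf{F}_i^{\star}\},\mathbf{g}^{\star})\ge\mathsf{SNR}(\{\mathbf{F}_i\},\mathbf{g}^{\star})$ for every feasible $\{\mathbf{F}_i\}$. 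Thus $\big(\{\mathbf{F}_i^{\star}\},\mathbf{g}^{\star}\big)$ is a blockwise maximizer. Finally, the feasible set of $(\mathsf{P}0)$ is the Cartesian product of the per‑block feasible sets, and $\mathsf{SNR}$ is continuously differentiable there, so the two blockwise first‑order conditions add up to the joint first‑order condition of $(\mathsf{P}0)$: scale‑invariance makes $\mathbf{g}^{\star}$ an interior maximizer of $\mathsf{SNR}(\{\mathbf{F}_i^{\star}\},\cdot)$ over $\mathbb{C}^{M}\setminus\{\mathbf{0}\}$, hence $\nabla_{\mathbf{g}}\mathsf{SNR}=\mathbf{0}$ there, while Slater's condition (take $\mathbf{F}_i=\mathbf{O}$) yields the KKT conditions for the convex power constraints at $\{\mathbf{F}_i^{\star}\}$; together these are precisely the KKT/stationarity conditions of $(\mathsf{P}0)$, so every limit point is a stationary point.

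The one genuinely delicate point is stage (iii): one must show the subsequential limit is a \emph{true} blockwise maximizer and not merely a fixed point of the update maps, which is why the one‑step‑ahead objective $\mathsf{SNR}(\{\mathbf{F}_i^{(j_k+1)}\},\mathbf{g}^{(j_k)})$ has to be squeezed to $S^{\star}$ and the continuity of $\mathsf{SNR}$ on the compact product domain invoked; and one must recognize that blockwise optimality implies stationarity of $(\mathsf{P}0)$ only because the constraints are block‑separable (it would fail for coupled constraints). A secondary caveat is that in practice the $\mathbf{F}$‑block is solved only to within the solver/bisection tolerance, so strictly the above is the exact‑maximization argument; the general case follows by letting the tolerance tend to zero, the generated SNR remaining monotone and the limiting blockwise inequalities holding up to a vanishing slack.
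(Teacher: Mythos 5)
Your proof is correct, and it reaches the conclusion by a genuinely different route from the paper. The paper cites Corollary~2 of Grippo--Sciandrone for block Gauss--Seidel and then spends most of its effort repairing the hypotheses of that result, since the $\mathbf{g}$-block feasible set $\mathcal{X}_2=\mathbb{C}^{M}\setminus\{\mathbf{0}\}$ is neither closed nor convex: it proves that every iterate and every limit point has nonzero $\mathbf{g}$, and that the Armijo line-search point $\mathbf{g}^{(k)}+\alpha_2^{(k)}\mathbf{d}_2^{(k)}$ used inside the cited proof can never equal $\mathbf{0}$ (via scale invariance of $\mathsf{SNR}$). You instead give a self-contained two-block argument: the squeeze on the one-step-ahead value $\mathsf{SNR}(\{\mathbf{F}_i^{(j_k+1)}\},\mathbf{g}^{(j_k)})$ shows the limit point is a true blockwise maximizer without needing the shifted subsequence to converge, and the troublesome nonconvexity of $\mathcal{X}_2$ becomes harmless because it is \emph{open} (so the $\mathbf{g}$-block optimality gives $\nabla_{\mathbf{g}}\mathsf{SNR}=\mathbf{0}$ outright) while the normalization $\|\mathbf{g}^{(j)}\|_2=1$ keeps limit points away from $\mathbf{0}$ --- the same obstruction the paper handles by contradiction. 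Adding the two blockwise first-order conditions over the Cartesian-product feasible set then yields exactly the variational-inequality notion of stationarity that the cited corollary delivers. What your approach buys is elementarity and independence from the external reference; what the paper's buys is that the patched general theorem would extend beyond two blocks. Two minor points to make explicit if you write this up: (i) the claim that the $\mathbf{F}$-update is a \emph{global} maximizer of $(\mathsf{P}1)$ is exact only for the $L\le 3$ rank-reduction branch, and holds only up to solver/randomization/bisection tolerance otherwise --- you flag this, and the paper silently makes the same idealization; (ii) ``gradient'' should be read in the Wirtinger sense for the complex variables, which is cosmetic.
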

\begin{proof}
See appendix \ref{subsec:appendix_thm_2BCA_convergence}. 
\end{proof}
%\begin{proof}
%For limit of space, the proof has to be omitted in this manuscript. Detailed proof can be found  in the full version of this manuscript \cite{bib:J3_Yang_to_be_submitted} online. 
%\end{proof}

%For complexity analysis, please note that,  although the complexity of the general heterogeneous network can be formulated, 
%due to the  many factors involved ($K_i$'s and $N_i$'s), the expression is very bulky with many terms, making it very hard to read and interpret. To make the analysis clearer and easier to understand, here we evaluate a homogeneous case having the same number of sensors in each cluster and the same number of antennas for each cluster-head: $K_i=K$ and $N_i=N$, $\forall  i$. 

The complexity of the proposed algorithms is complicated since the whole network has too many factors ($K_i$'s and $N_i$'s) that impact the problem size. To simplify the analysis, we consider homogeneous sensor networks, where each cluster has the same number of sensors and each cluster-head has the same the number of antennas, i.e. $K_i=K$ and $N_i=N$ for all $i=1,\cdots, L$.

%According to \cite{bib:Lectures_BenTal}, by means of 
Using the primal-dual interior point method \cite{bib:Lectures_BenTal}, we can show that the complexity to solve ($\mathsf{P}4$) is $\mathcal{O}\big(L^{4.5}K^{3.5}N^{3.5}\big)$. The complexity to update $\mathbf{g}$ by (\ref{eq:opt_g}) comes primarily from the matrix inversion operation, which has a complexity $\mathcal{O}\big(M^3\big)$. 
%Note that part of the computation comes from the randomization reduction procedure. 
In general, a few thousand of random samples are sufficient to guarantee a satisfying tightness of the obtained rank-reduced solutions (usually within $10^{-4}$ from the true optimal value), and the required number of samples does not increase with the network size. Thus, the complexity for each outer-layer iteration of the SDP-based 2BCA algorithm is $\mathcal{O}\big(L^{4.5}K^{3.5}N^{3.5}\!+\!M^3\big)$. 

From \cite{bib:IPM_Terlaky}, the complexity for solving the SOCP problem ($\mathsf{P}7_{\gamma}$) is $\mathcal{O}\big(L^{3.5}K^3N^3\big)$. Recall that each round of bisection search solves ($\mathsf{P}7_{\gamma}$) once, so ($\mathsf{P}7_{\gamma}$) is solved multiple times within one outer-layer iteration. Taking different channel conditions and levels of predefined precision into account, numerical results show that the number of times solving ($\mathsf{P}7_{\gamma}$) varies between the narrow range $[25,35]$ and thus can be considered as a constant. Thus the complexity of outer-layer SOCP-based 2BCA algorithm is $\mathcal{O}\big(L^{3.5}K^3N^3\!+\!M^3\big)$.

\section{Multiple Block Framework to Maximize SNR}
\label{sec:multiple_BCA}

In the previous sections, the proposed algorithms are both 2-block coordinate ascent methods where all the beamformers' sensors are jointly updated. One problem for these algorithms is that the complexity of solving the associated SDP or SOCP problem grows intensively with the increase of the size of the wireless sensor network.
%($\mathsf{P1}$) algorithm \ref{alg:2BCA_SDR} generally requires sufficiently large number of random samples and algorithm \ref{alg:2BCA_SOCP} needs to solve an SOCP multiple times. 
Instead of jointly optimizing all beamformers, we can alternatively focus on just one sensor's beamformer each time. This actually results in a multiple-block BCA approach, which, despite the many subproblems involves, often involves a lower complexity (see complexity analysis and numerical results). Specifically, for the case of $K_i\!=\!1$ (1 sensor in the $i$-th cluster), the solution to the $i$th block is quickly obtained in a closed form, requiring no numerical solvers.

Now we consider the problem of optimizing the $i$-th beamformer $\mathbf{F}_i$ with $\mathbf{g}$ and $\{\mathbf{F}_j\}_{j\neq i}$ being fixed. By introducing the following notations 
\begin{subequations}
\label{eq:def_qcd}
\begin{align}
\mathbf{q}_i&\triangleq\!\!\sum_{j\neq i}\mathbf{A}_{ij}\mathbf{f}_j;\ c_i\!\triangleq \!\sum_{j,k\neq i}\mathbf{f}_j^H\mathbf{A}_{jk}\mathbf{f}_k; \\
d_i&\triangleq\sigma_0^2\|\mathbf{g}\|_2^2\!+\!\sum_{j\neq i}\mathbf{f}_j\mathbf{B}_j\mathbf{f}_j,
\end{align}
\end{subequations}
this problem is formulated as follows
\begin{subequations}
\begin{align}
(\mathsf{P}1^i): \underset{\mathbf{f}_i}{\max.}&\ \frac{\mathbf{f}_i^H\mathbf{A}_{ii}\mathbf{f}_i\!+\!2\mathsf{Re}\{\mathbf{q}_i^H\mathbf{f}_i\}+c_i}{\mathbf{f}_i^H\mathbf{B}_i\mathbf{f}_i\!+\!d_i}, \\
\mathsf{s.t.}&\  \mathbf{f}_i^H\mathbf{C}_i\mathbf{f}_i\leq P_i.
\end{align}
\end{subequations}

\subsection{One-Shot SDR-Rank-Reduction Method}
\label{subsec:one-shot}

First we introduce a one-shot method to solve ($\mathsf{P}1^i$), which performs semidefinite programming and rank-one matrix decomposition in tandem. This method is discussed in recent work \cite{bib:ADeMaio_TSP} and \cite{bib:CJeong_TSP}.  

By use of Charnes-Cooper's transformation and rank-one relaxation we turn ($\mathsf{P}1^i$) into the following relaxed version 
\begin{subequations}
\begin{align}
\!\!\!\!\!\!\!\!\!\!\!\!\!\!\!\!(\mathsf{P}7^i):\ \underset{\mathbf{Z},\eta}{\max.}&\ \Tr\big\{\mathbf{Q}_1\mathbf{Z}\big\}, \\
\!\!\!\!\!\!\!\!\!\!\!\!\!\!\!\!\mathsf{s.t.}&\ \Tr\big\{\mathbf{Q}_2\mathbf{Z}\big\}=1,\\
\!\!\!\!\!\!\!\!\!\!\!\!\!\!\!\!&\ \Tr\big\{\mathbf{Q}_3\mathbf{Z}\big\}\leq P_i\eta, \\
\!\!\!\!\!\!\!\!\!\!\!\!\!\!\!\!&\ \Tr\big\{\mathbf{Q}_4\mathbf{Z}\}=\eta, \\
\!\!\!\!\!\!\!\!\!\!\!\!\!\!\!\!&\ \mathbf{Z}\succcurlyeq0, \eta\geq0.
\end{align}
\end{subequations}
with parameter matrices being defined as
\begin{subequations}
\label{eq:def_Q1234}
\begin{align}
\mathbf{Q}_1&\triangleq\left[
\begin{array}{cc}
\mathbf{A}_{ii} & \mathbf{q}_i \\
\mathbf{q}_i^H & c_i \\
\end{array}
\right], 
\mathbf{Q}_2\triangleq\left[
\begin{array}{cc}
\mathbf{B}_{i} & \mathbf{0} \\
\mathbf{0}^T & d_i \\
\end{array}
\right], \\
\mathbf{Q}_3&\triangleq\left[
\begin{array}{cc}
\mathbf{C}_{i} & \mathbf{0} \\
\mathbf{0}^T & 0 \\
\end{array}
\right], 
\mathbf{Q}_4\triangleq\left[
\begin{array}{cc}
\mathbf{O} & \mathbf{0} \\
\mathbf{0}^T & 1 \\
\end{array}
\right].
\end{align}
\end{subequations}
Solving the SDP ($\mathsf{P}7^i$) we obtain an solution $(\mathbf{Z}^{\star},\eta^{\star})$. If the $\mathbf{Z}^{\star}$ is rank one, i.e. $\frac{\mathbf{Z}^{\star}}{\eta^{\star}}=\mathbf{z}^{\star}\mathbf{z}^{\star H}$ with $\mathbf{z}^{\star}\triangleq[\mathbf{z}_1^{T},z_2]^T$, then $\mathbf{z}_1^{\star}/z_2$ is an solution to ($\mathsf{P}1^i$) and the relaxation ($\mathsf{P}7^i$) is actually tight with respect to ($\mathsf{P}1^i$). Actually the rank-one solution $\mathbf{Z}^{\star}$ always exits due to the recent matrix decomposition result in \cite{bib:rank_one_decomp}. In fact if $\mathbf{Z}^{\star}$ has rank larger than one, by help of theorem 2.2 in \cite{{bib:rank_one_decomp}}, we can obtain a vector $\mathbf{z}$ such that the equations $\Tr\{(\mathbf{Q}_1\!-\!\mathsf{opt}(\mathsf{P}7^i)\mathbf{Q}_2)\mathbf{z}\mathbf{z}^H\}=0$, $\Tr\{\mathbf{Q}_j\mathbf{z}\mathbf{z}^H\}=\Tr\{\mathbf{Q}_j\mathbf{Z}^{\star}\}$ for $j=3,4$. This means $(\mathbf{z}\mathbf{z}^H,\eta^{\star})$ is rank-one optimal solution to ($\mathsf{P}7^i$) and thus ($\mathsf{P}1^i$) can be solved. 

%It should be noted that in work \cite{bib:ADeMaio_TSP,bib:CJeong_TSP} the theorem 2.3 instead of theorem 2.2 \cite{bib:rank_one_decomp} of is invoked, which supports rank-one decomposition with four equality constraints instead of three under additional assumptions.

\subsection{Iterative Method}
\label{subsec:iterative_method_Prob_i}

Besides the above one-shot method, here we propose an alternative iterative method to solve ($\mathsf{P}1^i$). As we will shortly see this iterative method can give birth to extremely efficient solution to ($\mathsf{P}1^i$) in specific circumstance. 

For any given positive real value $\alpha$, the fact that the SNR objective in ($\mathsf{P}1^i$) is no smaller than $\alpha$ equivalently reads 
\begin{align}
\mathbf{f}_i^H\big[\alpha\mathbf{B}_i\!-\!\mathbf{A}_{ii}\big]\mathbf{f}_i\!-\!2\mathsf{Re}\big\{\mathbf{q}_i^H\mathbf{f}_i\big\}\!+\!(\alpha d_i\!-\!c_i)\leq0. 
\end{align}
This immediately implies that if the following problem with
\begin{subequations}
\label{eq:opt_prob_P_i}
\begin{align}
\!\!(\mathsf{P}8_{\alpha}^i):\underset{\mathbf{f}_i}{\min.}&\ \mathbf{f}_i^H\big[\alpha\mathbf{B}_i\!\!-\!\!\mathbf{A}_{ii}\big]\mathbf{f}_i\!\!-\!2\mathsf{Re}\big\{\mathbf{q}_i^H\mathbf{f}_i\big\}\!\!+\!\!(\alpha d_i\!\!-\!\!c_i), \\
\!\!\mathsf{s.t.}&\ \mathbf{f}_i^H\mathbf{C}_i\mathbf{f}_i\leq P_i. 
\end{align}
\end{subequations}
with $\alpha$ given has a nonnegative optimal value then $\mathsf{opt}(\mathsf{P}1^i)\geq\alpha$. Otherwise $\alpha$ can serve as an upper bound of $\mathsf{opt}(\mathsf{P}1^i)$. Thus we can perform a bisection search to solve ($\mathsf{P}1^i$). Now the problem reduces to how to solve the problem ($\mathsf{P}8_{\alpha}^i$)? Note that the quadratic matrix $\big[\alpha\mathbf{B}_i\!-\!\mathbf{A}_{ii}\big]$ can be negative semidefinite or indefinite and thus ($\mathsf{P}8_{\alpha}^i$) is possibly nonconvex. The following theorem convinces us that ($\mathsf{P}8_{\alpha}^{i}$) can always be solved regardless of the convexity of its objective.
\begin{theorem}
\label{thm:solvability_S_pros}
If the $i$-th sensor cluster has more than one sensor or the head is equipped with multiple antenna, i.e. $K_i\geq 2$ or $N_i\geq 2$, the problem ($\mathit{P8_{\alpha}^i}$) can be solved.
% in polynomial time.
\end{theorem}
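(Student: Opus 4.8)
The plan is to show that $(\mathsf{P}8^i_{\alpha})$, although possibly nonconvex, is globally solvable by reducing it \emph{with zero gap} to a convex semidefinite program (SDP) followed by a rank-one purification, in the spirit of Section~\ref{subsec:one-shot}. First observe that $\mathbf{\Sigma}_i\succ\mathbf{0}$ forces $\mathbf{1}_{K_i}\mathbf{1}_{K_i}^T+\mathbf{\Sigma}_i^*\succ\mathbf{0}$, hence $\mathbf{C}_i\succ\mathbf{0}$, so the feasible set $\{\mathbf{f}_i:\mathbf{f}_i^H\mathbf{C}_i\mathbf{f}_i\le P_i\}$ of $(\mathsf{P}8^i_{\alpha})$ in \eqref{eq:opt_prob_P_i} is a bounded ellipsoid; its continuous objective therefore attains a minimum, so the real task is to \emph{compute} that minimum. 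I would homogenize $(\mathsf{P}8^i_{\alpha})$ by adjoining a unit-modulus scalar $t$ and setting $\mathbf{w}\triangleq[\mathbf{f}_i^T,t]^T\in\mathbb{C}^{K_iN_i+1}$; with the Hermitian matrices $\mathbf{R}_1\triangleq\alpha\mathbf{Q}_2-\mathbf{Q}_1$, $\mathbf{R}_2\triangleq\mathbf{Q}_3$, $\mathbf{R}_3\triangleq\mathbf{Q}_4$ from \eqref{eq:def_Q1234}, a short computation using $|t|^2=1$ shows that $(\mathsf{P}8^i_{\alpha})$ is equivalent to $\min\{\mathbf{w}^H\mathbf{R}_1\mathbf{w}:\mathbf{w}^H\mathbf{R}_2\mathbf{w}\le P_i,\ \mathbf{w}^H\mathbf{R}_3\mathbf{w}=1\}$, with $\mathbf{f}_i$ recovered as the first $K_iN_i$ coordinates of $\mathbf{w}$ divided by its last coordinate.

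Next I would relax the lift $\mathbf{W}=\mathbf{w}\mathbf{w}^H$ and drop the rank constraint, obtaining the SDP $\min\{\Tr\{\mathbf{R}_1\mathbf{W}\}:\mathbf{W}\succcurlyeq\mathbf{0},\ \Tr\{\mathbf{R}_2\mathbf{W}\}\le P_i,\ \Tr\{\mathbf{R}_3\mathbf{W}\}=1\}$, and argue it is solvable: $\Tr\{\mathbf{R}_2\mathbf{W}\}\le P_i$ together with $\mathbf{C}_i\succ\mathbf{0}$ bounds the leading $(K_iN_i)\times(K_iN_i)$ principal submatrix of $\mathbf{W}$, the equality fixes the last diagonal entry of $\mathbf{W}$, and $\mathbf{W}\succcurlyeq\mathbf{0}$ then bounds all off-diagonal entries via the $2\times2$ principal minors; hence the feasible set is compact, the optimum is attained at some $\mathbf{W}^\star$, and in particular the relaxed value is finite even though $\mathbf{R}_1$ may be indefinite.

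The last step is the rank reduction, and this is where the hypothesis is used. The assumption ``$K_i\ge2$ or $N_i\ge2$'' is equivalent to $K_iN_i\ge2$, i.e.\ to the lifted dimension $n=K_iN_i+1\ge3$, which is exactly the size condition under which the complex rank-one matrix decomposition result of \cite{bib:rank_one_decomp} (the same tool invoked in Section~\ref{subsec:one-shot}), applied to the three Hermitian matrices $\mathbf{R}_1,\mathbf{R}_2,\mathbf{R}_3$ and to $\mathbf{W}^\star$, yields a rank-one matrix $\widehat{\mathbf{w}}\widehat{\mathbf{w}}^H$ with $\widehat{\mathbf{w}}^H\mathbf{R}_k\widehat{\mathbf{w}}=\Tr\{\mathbf{R}_k\mathbf{W}^\star\}$ for $k=1,2,3$. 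Then $\widehat{\mathbf{w}}$ satisfies $\widehat{\mathbf{w}}^H\mathbf{R}_3\widehat{\mathbf{w}}=1$, $\widehat{\mathbf{w}}^H\mathbf{R}_2\widehat{\mathbf{w}}\le P_i$, and attains the relaxed optimal value $\widehat{\mathbf{w}}^H\mathbf{R}_1\widehat{\mathbf{w}}=\Tr\{\mathbf{R}_1\mathbf{W}^\star\}$; being feasible for the homogenized problem and attaining the relaxation's value, $\widehat{\mathbf{w}}$ is globally optimal there, so the homogenized optimum equals $\mathsf{opt}(\mathsf{P}8^i_{\alpha})$. Since $\widehat{\mathbf{w}}^H\mathbf{R}_3\widehat{\mathbf{w}}=1>0$ forces the last entry of $\widehat{\mathbf{w}}$ to be nonzero, dehomogenization produces an explicit optimal $\mathbf{f}_i^\star$ for $(\mathsf{P}8^i_{\alpha})$, which is therefore solved, to global optimality and in polynomial time.

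I expect the two delicate points to be: (i) establishing \emph{attainment}, not merely finiteness, of the SDP optimum, which requires the compactness argument above to be spelled out carefully from $\mathbf{C}_i\succ\mathbf{0}$ and the normalization constraint; and (ii) the dehomogenization, i.e.\ guaranteeing the purified vector's last coordinate does not vanish, which is precisely why the unit-modulus equality (rather than $|t|^2\le1$) is the correct homogenization. It is also worth noting why the argument breaks when $K_i=N_i=1$: then $n=2<3$, the decomposition theorem no longer applies, and this scalar borderline case (the one with a single single-antenna node) is instead handled in closed form in the remainder of the multi-block development.
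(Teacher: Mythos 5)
Your proposal is correct and follows essentially the same route as the paper: homogenize/lift $(\mathsf{P}8^i_{\alpha})$ to an SDP with the same three Hermitian data matrices (your $\mathbf{R}_1,\mathbf{R}_2,\mathbf{R}_3$ are exactly the paper's $\mathbf{P}_1,\mathbf{P}_2,\mathbf{P}_3$), establish attainment by compactness from $\mathbf{C}_i\succ\mathbf{0}$ and Weierstrass, then apply Theorem 2.2 of \cite{bib:rank_one_decomp} using $K_iN_i+1\geq 3$ and dehomogenize via the nonvanishing last coordinate forced by the normalization constraint. The only cosmetic difference is that the paper additionally cites Appendix B.1 of \cite{bib:CvxOpt} for zero relaxation gap, whereas you obtain tightness directly from the rank-one purification; both are valid.
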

\begin{proof}
See appendix \ref{subsec:appendix_thm_solvability_S_pros}.
%For limit of space, the proof has to be omitted in this manuscript. Detailed proof can be found  in the full version of this manuscript \cite{bib:J3_Yang_to_be_submitted} online.
\end{proof}

Although theorem \ref{thm:solvability_S_pros} shows that the problem ($\mathsf{P}8_{\alpha}^i$) can be solved by SDR and thus the iterative method to solve ($\mathsf{P}1^i$) works, it is generally less efficient than the one-shot method discussed above. Since the former performs semidefinite programming and rank-one reductions multiple times while the latter for just once. However in the circumstance where $K_i=1$, the following theorem indicates that ($\mathsf{P}8_{\alpha}^i$) has fully closed form solution and consequently the iterative method can become extremely efficient.

\begin{theorem}
\label{thm:closedform_P_i_D1}
When $K_i=1$, denote $\mathbf{\Sigma}_{\mathbf{s}}$ and $\{\mathbf{\Sigma}_i\}_{i=1}^L$ as scalars $\sigma_s^2$ and $\{\sigma_i^2\}_{i=1}^L$ respectively.
The solution $\bm{f}_i^{\star}$ to ($\mathit{P8_{\alpha}^i}$) is
\begin{align}
%\label{eq:opt_sol_P8_i}
\!\!\!\!\!\!\bm{f}_i^{\star}\!\!=\!\!\left\{
\begin{array}{cl}
\frac{\beta_i^*\mathbf{H}_i^H\mathbf{g}}{\|\mathbf{H}_i^H\mathbf{g}\|_2^2(\alpha\sigma_i^2\!-\!1)}, & \text{if}\ \alpha\sigma_i^2\!>\!1,\frac{|\beta_i^{\ast}|}{\|\mathbf{H}_i^H\mathbf{g}\|_2(\alpha\sigma_i^2\!-\!1)}\!\leq\!\sqrt{\bar{P}_i};\\
\frac{\sqrt{\bar{P}_i}\beta_i^*\mathbf{H}_i^H\mathbf{g}}{|\beta_i|\|\mathbf{H}_i^H\mathbf{g}\|_2}, & \text{otherwise},
\end{array}
\right.\nonumber
\end{align}
with $\beta_i$ and $\bar{P}_i$ being defined as follows
\begin{align}
\label{eq:def_P_bar}
\beta_i&\triangleq\sum_{j\neq i}\mathbf{f}_j^H\Big(\mathbf{1}_{K_j}\otimes\mathbf{H}_j^H\mathbf{g}\Big),\ \bar{P}_i\triangleq\frac{P_i}{\sigma_i^2+\sigma_s^2},
\end{align}
\end{theorem}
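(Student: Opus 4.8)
The plan is to use the hypothesis $K_i=1$ to collapse every Kronecker product appearing in $(\mathsf{P}8_{\alpha}^i)$ into an ordinary scalar multiplication, reducing the problem to a one‑dimensional constrained quadratic solvable in closed form. Writing $\mathbf{v}_i\triangleq\mathbf{H}_i^H\mathbf{g}$ and using $\mathbf{1}_1\mathbf{1}_1^T=1$, the definitions in \eqref{eq:definition_fABCc} specialize to $\mathbf{A}_{ii}=\mathbf{v}_i\mathbf{v}_i^H$, $\mathbf{B}_i=\sigma_i^2\,\mathbf{v}_i\mathbf{v}_i^H$ and $\mathbf{C}_i=(\sigma_s^2+\sigma_i^2)\mathbf{I}_{N_i}$, so the power constraint becomes $\|\mathbf{f}_i\|_2^2\le\bar{P}_i$. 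For the cross term I would apply the identity $\vect(\mathbf{M}\mathbf{X}\mathbf{b})=(\mathbf{b}^T\otimes\mathbf{M})\vect(\mathbf{X})$ to each summand of $\mathbf{q}_i=\sum_{j\neq i}\mathbf{A}_{ij}\mathbf{f}_j$, which gives $\mathbf{A}_{ij}\mathbf{f}_j=\big(\mathbf{g}^H\mathbf{H}_j\mathbf{F}_j\mathbf{1}_{K_j}\big)\,\mathbf{v}_i$; summing over $j\neq i$ and observing that $\sum_{j\neq i}\mathbf{g}^H\mathbf{H}_j\mathbf{F}_j\mathbf{1}_{K_j}$ is precisely $\beta_i^{\ast}$ with $\beta_i$ as in \eqref{eq:def_P_bar} yields $\mathbf{q}_i=\beta_i^{\ast}\mathbf{v}_i$. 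After these substitutions the objective of $(\mathsf{P}8_{\alpha}^i)$ becomes
\[
(\alpha\sigma_i^2-1)\,|\mathbf{v}_i^H\mathbf{f}_i|^2-2\mathsf{Re}\{\beta_i\,\mathbf{v}_i^H\mathbf{f}_i\}+(\alpha d_i-c_i),
\]
which depends on $\mathbf{f}_i$ only through the complex scalar $t\triangleq\mathbf{v}_i^H\mathbf{f}_i$.

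The next step is to reduce the vector problem to a scalar one. By the Cauchy--Schwarz inequality, the set $\{\mathbf{v}_i^H\mathbf{f}_i:\|\mathbf{f}_i\|_2^2\le\bar{P}_i\}$ is exactly the disk $\{t\in\mathbb{C}:|t|\le\|\mathbf{v}_i\|_2\sqrt{\bar{P}_i}\}$, and for any prescribed $t$ in this disk the vector of minimum norm with $\mathbf{v}_i^H\mathbf{f}_i=t$ is $\mathbf{f}_i=(t/\|\mathbf{v}_i\|_2^2)\,\mathbf{v}_i$. Hence solving $(\mathsf{P}8_{\alpha}^i)$ amounts to minimizing the scalar function above over that disk and then reporting $\mathbf{f}_i^{\star}=(t^{\star}/\|\mathbf{v}_i\|_2^2)\,\mathbf{v}_i$. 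For the scalar problem I would first optimize the phase of $t$: for fixed $|t|=\rho$, the term $-2\mathsf{Re}\{\beta_i t\}$ is minimized when $\beta_i t$ is real and nonnegative, i.e. $t=\rho\,\beta_i^{\ast}/|\beta_i|$, leaving $g(\rho)=(\alpha\sigma_i^2-1)\rho^2-2|\beta_i|\rho$ to be minimized over $\rho\in[0,\|\mathbf{v}_i\|_2\sqrt{\bar{P}_i}]$.

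Finally I would dispatch the cases according to the sign of $\alpha\sigma_i^2-1$. If $\alpha\sigma_i^2>1$, $g$ is a strictly convex parabola with unconstrained minimizer $\rho^{\star}=|\beta_i|/(\alpha\sigma_i^2-1)$; when $\rho^{\star}\le\|\mathbf{v}_i\|_2\sqrt{\bar{P}_i}$ — exactly the condition $|\beta_i^{\ast}|/(\|\mathbf{H}_i^H\mathbf{g}\|_2(\alpha\sigma_i^2-1))\le\sqrt{\bar{P}_i}$ of the theorem — the interior minimizer is admissible and back‑substitution gives the first branch; otherwise the minimum on the interval is at the right endpoint. If $\alpha\sigma_i^2\le1$, then $g'(\rho)=2(\alpha\sigma_i^2-1)\rho-2|\beta_i|\le-2|\beta_i|<0$ for all $\rho\ge0$ (when $\beta_i\neq0$), so $g$ is strictly decreasing and the minimizer is again the right endpoint $\rho=\|\mathbf{v}_i\|_2\sqrt{\bar{P}_i}$. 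In both endpoint situations $t^{\star}=\|\mathbf{v}_i\|_2\sqrt{\bar{P}_i}\,\beta_i^{\ast}/|\beta_i|$, and substituting into $\mathbf{f}_i^{\star}=(t^{\star}/\|\mathbf{v}_i\|_2^2)\mathbf{v}_i$ with $\mathbf{v}_i=\mathbf{H}_i^H\mathbf{g}$ reproduces the ``otherwise'' branch; this closes the proof.

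I expect the only genuine effort to lie in the first paragraph — the Kronecker‑product bookkeeping, keeping the conjugations straight so that $\mathbf{q}_i$ comes out as $\beta_i^{\ast}\mathbf{v}_i$ with $\beta_i$ matching \eqref{eq:def_P_bar} exactly — together with the harmless degenerate case $\beta_i=0$, where the objective is independent of the phase of $t$: there the displayed formulas remain correct if $\beta_i^{\ast}/|\beta_i|$ is read as an arbitrary unit phasor (and, when additionally $\alpha\sigma_i^2>1$, the first branch correctly returns $\mathbf{f}_i^{\star}=\mathbf{0}$).
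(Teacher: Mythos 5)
Your proposal is correct and follows essentially the same route as the paper: specialize the Kronecker-product quantities at $K_i=1$ so that the objective depends on $\mathbf{f}_i$ only through $\mathbf{g}^H\mathbf{H}_i\mathbf{f}_i$, reduce to a one-dimensional quadratic over a disk, and split cases on the sign of $\alpha\sigma_i^2-1$ against the power budget. The only cosmetic difference is that you justify the reduction to the direction $\mathbf{H}_i^H\mathbf{g}$ via a minimum-norm/Cauchy--Schwarz argument where the paper uses the explicit rank-one eigendecomposition of $\mathbf{H}_i^H\mathbf{g}\mathbf{g}^H\mathbf{H}_i$ (and the paper splits the boundary case into $\alpha=\sigma_i^{-2}$ and $\alpha<\sigma_i^{-2}$, which you handle uniformly); your treatment of the degenerate case $\beta_i=0$ is in fact slightly more careful than the paper's.
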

\begin{proof}
See appendix \ref{subsec:appendix_thm_closedform_P_i_D1}.
\end{proof}

Note that the closed-form solution in Theorem \ref{thm:closedform_P_i_D1} neither requires matrix decomposition or solving linear equations(matrix inversion) nor depends on numerical solver. Thus iteratively solving ($\mathsf{P}1^i$) is easy for implementation and has very low computation cost.  Comparatively, the one-shot method for solving ($\mathsf{P}1^i$) depends on numerical solvers (like CVX) which are iterative-based (interior point method) solvers with each iteration performing matrix decomposition and solving linear equations. 

To start the bisection search, the latest SNR can serve as a lower bound for $\mathsf{opt}(\mathsf{P}1^i)$. From (\ref{eq:bd_norm_f_i}), we can derive an upper bound for $\mathsf{opt}(\mathsf{P}1^i)$:  
\begin{align}
\label{eq:bd_P_1_i}
%\mathsf{opt}(\mathsf{P}1^i)\!\!\leq\!\!
\!\!\!\!\!\!d_i^{-1}\Big(\|\mathbf{H}_i^H\mathbf{g}\|_2^2P_i\lambda_{min}^{-1}\big(\mathbf{C}_i\big)\!\!+\!\!2\|\mathbf{q}_i\|_2P_i^{\frac{1}{2}}\lambda_{min}^{-\frac{1}{2}}\big(\mathbf{C}_i\big)\!\!+\!\!c_i\Big).
\end{align}
Note that the above upper bound can be much tighter than the one given in (\ref{eq:P1_upperbound}) since it utilizes the knowledge of $\{\mathbf{F}_j\}_{j\neq i}$. 

From the above discussion, we can utilize multi-block BCA method to solve the original problem ($\mathsf{P}0$). In each update, we tackle either one individual precoder (associated with one cluster) or the poster (associated with the FC). If the $i$-th cluster has $K_i>1$ sensors, then its precoder can be updated by the one-shot SDR-rank-reduction method; Otherwise, Theorem \ref{thm:closedform_P_i_D1} provides a clean closed-form solution. The entire approach is summarized in Algorithm \ref{alg:multiple_BCA}.  

\begin{algorithm}
\caption{Multi-Block RCA to solve ($\mathsf{P}0$)}
\label{alg:multiple_BCA}
\textbf{Initialization}: Randomly generate nonzero feasible $\{\mathbf{F}_i^{(0)}\}_{i=1}^{L}$ such that $\mathbf{g}^{(0)}$ obtained by (\ref{eq:opt_g}) is nonzero\;
\Repeat{the increase of SNR becomes sufficiently small or a predefined number of iterations is reached}
{
 \For{$i=1,\cdots,L;$}
 { 
      \eIf{$K_i>1$}
      {Solve ($\mathsf{P}7^i$); Then perform rank-reduction using Theorem 2.2 in \cite{bib:rank_one_decomp}; Then update $\mathbf{F}_i$ \;}
      {Set $bd_l$ as current $\mathsf{SNR}$; obtain $bd_u$ by (\ref{eq:bd_P_1_i})\;
	   \Repeat{$(bd_u-bd_l)$ is small enough}
	   {
	    Set $\alpha=(bd_u+bd_l)/2$\; 
	    Solve ($\mathsf{P}8_{\alpha}^i$) by Theorem (\ref{thm:closedform_P_i_D1})\; 
	    \eIf{$\mathsf{opt}(\mathsf{P}8_{\alpha}^i)\leq0$}
	    {$bd_l=\alpha$\;}
	    {$bd_u=\alpha$\;}
	   }
	   $\alpha=bd_l$\;
	   Solve ($\mathsf{P}8_{\alpha}^i$) by Theorem \ref{thm:closedform_P_i_D1}; update $\mathbf{F}_i$\; 
      }
      
  Update $\mathbf{g}$ by theorem \ref{thm:opt_g} \;
 }
}
%return $\{\mathbf{F}_i^{(j)}\}_{i=1}^{L}$ and $\mathbf{g}^{j}$\;
\end{algorithm}

Although the multiple BCA method generates monotonically increasing SNR sequence, it is hard to prove that the limit points of its solution sequence guarantee to converge to stationary points of ($\mathsf{P}0$). Numerical results in section \ref{sec:numerical results} show that multiple BCA algorithm usually has a very satisfying convergence behaviors. 

%The convergence of multiple BCA algorithm will be examined by numerical results in section \ref{sec:numerical results}. 

By primal-dual inter point method \cite{bib:Lectures_BenTal}, the complexity of each outer-layer iteration of multiple BCA for homogeneous wireless sensor network is $\mathcal{O}\big(LK^{3.5}N^{3.5}\!+\!LM^3\big)$. Particularly for homogeneous network with $K=1$, the complexity becomes $\mathcal{O}\big(LM^3\big)$ with the help of theorem \ref{thm:closedform_P_i_D1}.

\section{Numerical Results}
\label{sec:numerical results}

In this section, numerical results are presented to testify the proposed algorithms' performance. In our experiments, the observation noise at each sensor is colored, which has a covariance 
\begin{align}
\mathbf{\Sigma}_i=\sigma_i^2\mathbf{\Sigma}_0,\ \ \ i\in\{1,\cdots,L\}, 
\end{align} 
where the $K_i\times K_i$ matrix $\mathbf{\Sigma}_0$ has the Toeplitz structure
%\begin{align}
%\label{eq:toeplitz_rho}
%\mathbf{\Sigma}_{0}=
%\left[
%\begin{array}{ccccc}
%1 & \rho & \rho^2 & \ddots & \rho^{K-1} \\
%\rho & 1 & \rho & \ddots & \ddots \\
%\rho^2 & \rho & 1 & \ddots & \rho^2 \\
%\ddots & \ddots & \ddots & \ddots & \rho \\
%\rho^{K-1} & \ddots & \rho^2 & \rho & 1
%\end{array}
%\right].
%\end{align}
\begin{align}
\label{eq:toeplitz_rho}
\mathbf{\Sigma}_{0}=
\left[
\begin{array}{cccc}
1 & \rho & \ddots & \rho^{K-1} \\
\rho & 1 & \ddots & \ddots \\
\ddots & \ddots &  \ddots & \rho \\
\rho^{K-1} & \ddots & \rho & 1
\end{array}
\right].
\end{align}

The parameter $\rho$ is set to $0.5$ for all sensors in the following experiments. Here we define the observation signal to noise ratio at the $i$-th sensor as $\mathsf{SNR}_i\triangleq\sigma_i^{-2}$ and the channel signal to noise ratio as $\mathsf{SNR}\triangleq\sigma_0^{-2}$. 

In figure \ref{fig:avg_SNR_SDR_vs_SOCP} and \ref{fig:avg_SNR_SDR_vs_L_BCA} the average SNR obtained at the FC are plotted. It is assumed that the sensor network has 5 sensors and FC has 4 antennas. We set $N_1=3, N_2=4, N_3=5, N_4=4, N_5=5$, $K_1=3,K_2=4,K_3=5,K_4=6,K_5=6$ and $P_1=0.2,P_2=0.2,P_3=0.3,P_4=0.2,P_5=0.3$. For each fixed channel SNR level, 50 random channel realizations are generated with each element of channel matrix follows circularly symmetric complex Gaussian distribution with zero mean and covariance 2. With channel SNR and channel matrices given, the proposed algorithms are performed starting from one common random initial. The obtained average SNR is plotted in figure \ref{fig:avg_SNR_SDR_vs_SOCP} and \ref{fig:avg_SNR_SDR_vs_L_BCA}. The obtained average SNR of SDR based and SOCP based 2BCA algorithms are plotted in figure \ref{fig:avg_SNR_SDR_vs_SOCP} with respect to different outer iterations. The curve associated with random initials actually represents the performance of random feasible linear transmitters. From figure \ref{fig:avg_SNR_SDR_vs_SOCP}, optimized SNR converges in 10 outer-iterations on average. These two algorithms have identical average convergence performance, this will also be verified by figure \ref{fig:Initials_SNR_SDR_vs_SOCP}. The average SNR performance obtained by multiple BCA algorithm is presented in figure  \ref{fig:avg_SNR_SDR_vs_L_BCA}, where SDR based 2BCA algorithm serves as a benchmark. Multiple BCA algorithm presents identical average SNR performance with the other 2 block algorithms.

\begin{figure}
\centering
\includegraphics[height=3.0in,width=3.8in]{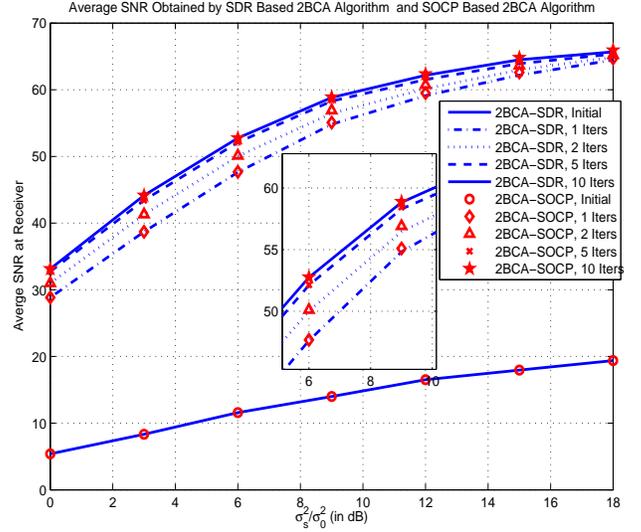}
\caption{Average SNR Obtained by SDR Based 2BCA Algorithm and SOCP Based 2BCA Algorithm}
\label{fig:avg_SNR_SDR_vs_SOCP} %% label for entire figure
\end{figure}

\begin{figure}
\centering
\includegraphics[height=3.0in,width=3.8in]{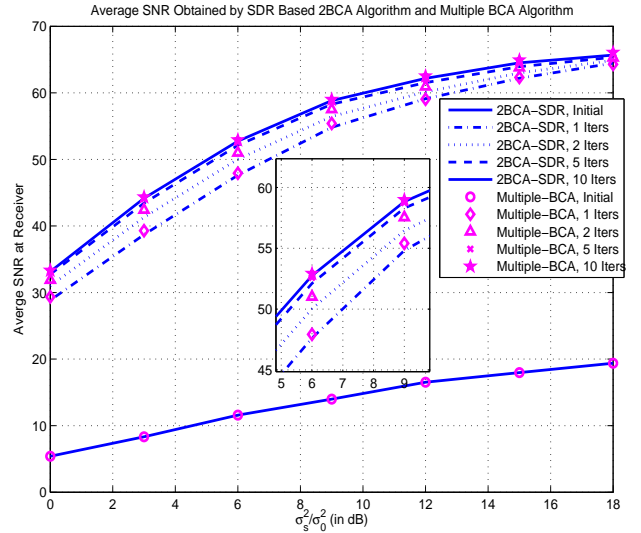}
\caption{Average SNR Obtained by SDP Based 2BCA Algorithm and Multiple BCA Algorithm}
\label{fig:avg_SNR_SDR_vs_L_BCA} %% label for entire figure
\end{figure}

In figure \ref{fig:Initials_SNR_SDR_vs_SOCP} and \ref{fig:Initials_SNR_SDR_vs_L_BCA}, the impact of different initial points to the algorithms are examined. The system setup is identical with the experiment in figure \ref{fig:avg_SNR_SDR_vs_SOCP} and \ref{fig:avg_SNR_SDR_vs_L_BCA}. We set the channel SNR as 2dB and fix the channel matrices with one specific random realization. The three proposed algorithms are started from 10 different random initials and each SNR itinerary with respect to number of outer-layer iterations is plotted in figure \ref{fig:Initials_SNR_SDR_vs_SOCP} and  \ref{fig:Initials_SNR_SDR_vs_L_BCA}, where the itineraries of SDR based 2BCA algorithm serve as benchmarks. From figure \ref{fig:Initials_SNR_SDR_vs_SOCP} it can be seen that the two 2BCA algorithms have almost identical SNR itineraries. Comparatively, multiple BCA algorithm's itineraries are usually very different but finally it will converge to identical value. Figures \ref{fig:Initials_SNR_SDR_vs_SOCP} and \ref{fig:Initials_SNR_SDR_vs_L_BCA} reflect the fact that: the proposed three algorithms are initial-insensitive; they finally converge to identical SNR value; and usually 30 iterations are sufficient for these proposed algorithms to converge.

\begin{figure}
\centering
\includegraphics[height=3.0in,width=3.8in]{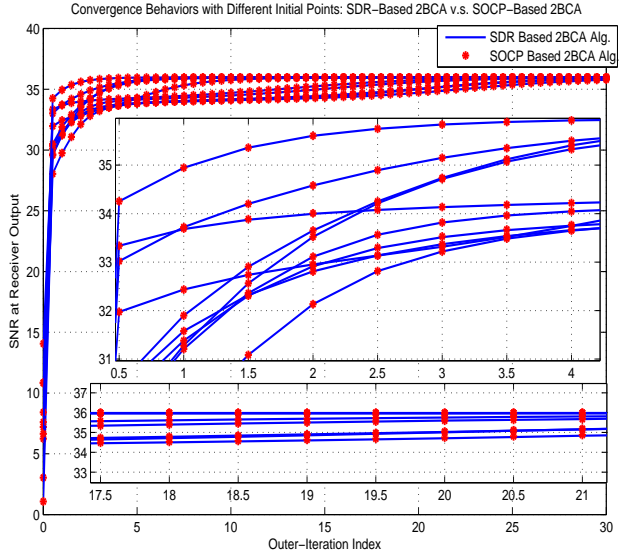}
\caption{Convergence with Different Initials: SDR Based 2BCA Algorithm v.s. SOCP Based 2BCA Algorithm}
\label{fig:Initials_SNR_SDR_vs_SOCP} %% label for entire figure
\end{figure}

\begin{figure}
\centering
\includegraphics[height=3.0in,width=3.8in]{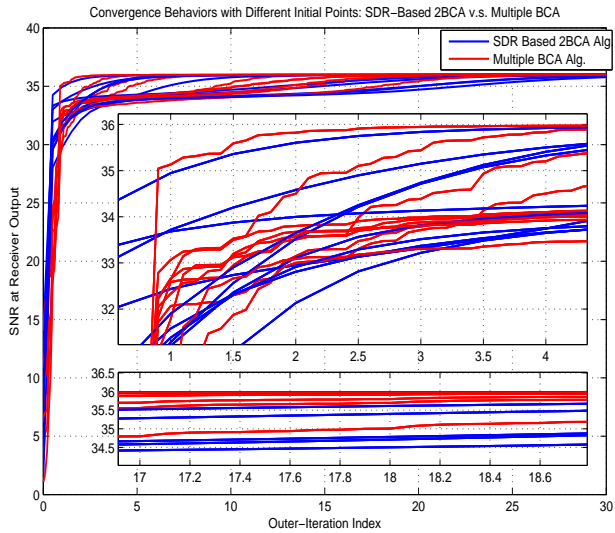}
\caption{Convergence with Different Initials: SDR Based 2BCA Algorithm v.s. Multiple BCA Algorithm}
\label{fig:Initials_SNR_SDR_vs_L_BCA} %% label for entire figure
\end{figure}

\begin{figure}
\centering
\includegraphics[height=3.0in,width=3.8in]{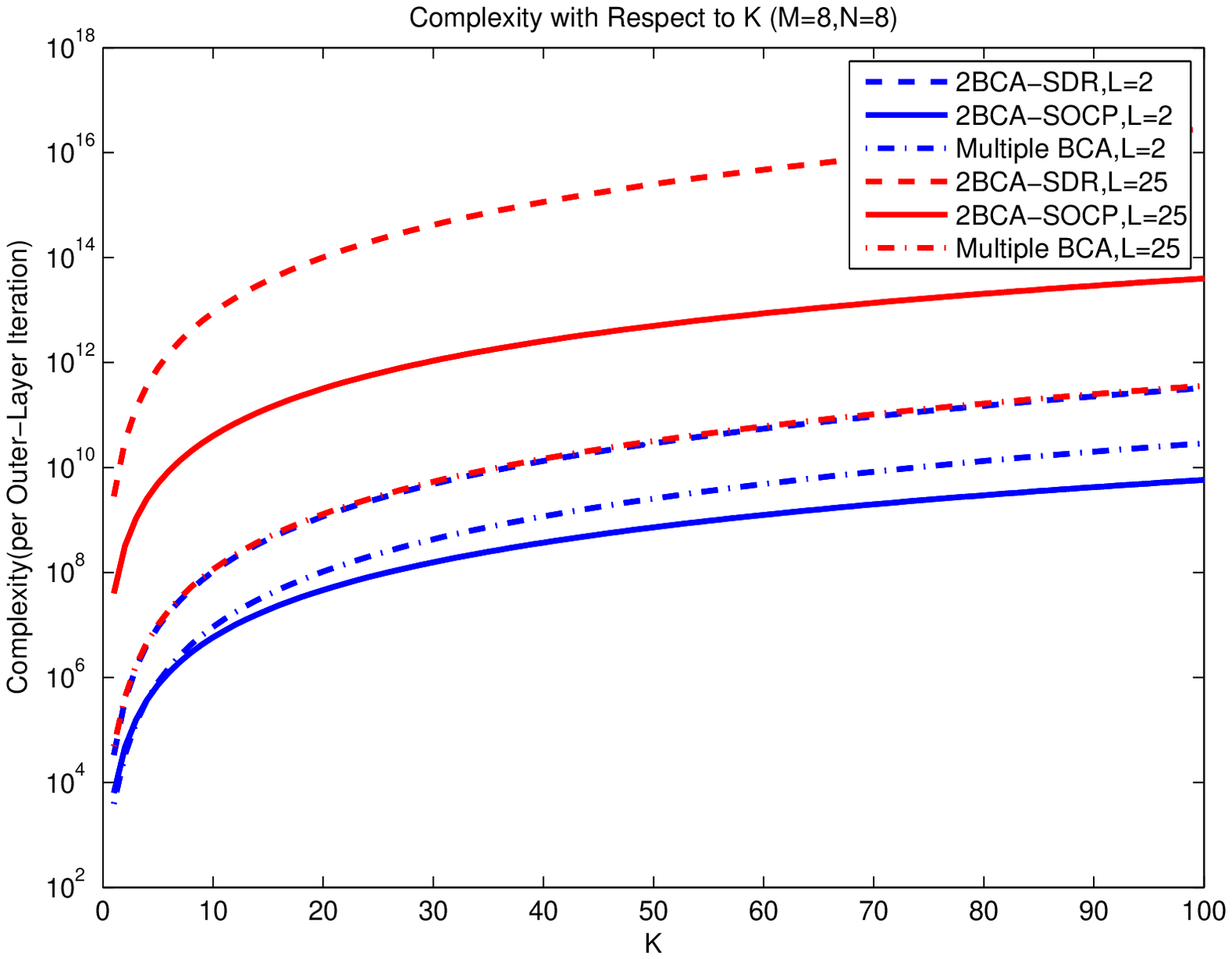}
\caption{Complexity of Algorithms with Respect to $K$}
\label{fig:complexity_K} %% label for entire figure
\end{figure}

\begin{figure}
\centering
\includegraphics[height=3.0in,width=3.8in]{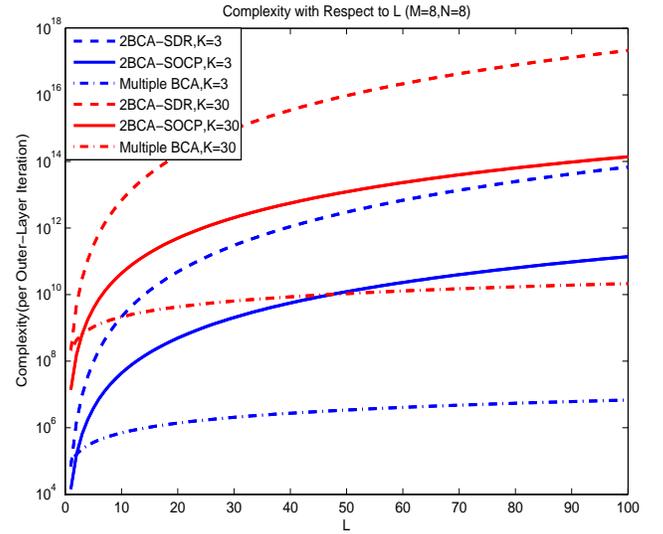}
\caption{Complexity of Algorithms with Respect to $L$}
\label{fig:complexity_L} %% label for entire figure
\end{figure}

Next we present numerical results for complexity. Still we take homogeneous wireless sensor network as example. $N$ and $M$ denote the number of antennas for each sensor or sensor cluster and FC respectively and take modest values within several tens. Comparatively the number of sensors or sensor clusters can be large, and one cluster can have numerous sensors. So we focus on the impact of $L$ and $K$ on the complexity. Figure \ref{fig:complexity_K} and \ref{fig:complexity_L} represent the complexity for each outer-layer iteration for proposed algorithms with respect to $K$ and $L$ respectively. Generally SDR based 2BCA algorithm has higher complexity than the two others. The SOCP based algorithm has lowest complexity for large $K$ with small $L$ and multiple BCA algorithm has the lowest complexity for large $L$.

In the following the average execution time of proposed algorithms using MATLAB with the  standard toolbox CVX v2.1 on the same computer are presented in Table \ref{tab:runtime_1} and \ref{tab:runtime_2}. The multiple BCA algorithm requires much lower time for networks with large $L$ and SOCP based 2BCA algorithm is more efficient for large $K$ and small $L$. Although the complexity of SDR-based 2BCA algorithm increases drastically with the increase of $K$, $N$ and $L$ in general, it can still be useful in specific scenarios. Note that when the size of wireless sensor network is small, the execution time of SDR based 2BCA algorithm mainly comes from random samples generation and rescaling. In the case where parallel computation is available, this procedure can requires very little time and thus competitive to the other two algorithms.

\begin{table}
\caption{\small{MATLAB Running Time Per (Outer) Iteration(in sec.)}}
\centering
\begin{tabular}{|c|c|c|c|c|c|c|}
\hline
%\backslashbox{$K$,$N$}{$L$}
 Dim. & Alg. &$ L\!=\!5$ & $L\!=\!10$ & $L\!=\!20$ & $L\!=\!30$ & $L\!=\!40$ \\
\hline
$K\!=1\!$ & Alg.\ref{alg:2BCA_SDR} & 1.814 & 3.561 & 8.462 & 18.58 & 34.76\\
%\cline{2-7}
$N\!=\!4$ & Alg.\ref{alg:2BCA_SOCP} & 5.677 & 9.163 & 15.84 & 23.14 & 38.82 \\
%\cline{2-7}
 & Alg.\ref{alg:multiple_BCA} & 0.067 & 0.380 & 2.603 & 8.344 & 19.38 \\
\hline
$K\!=\!1$ & Alg.\ref{alg:2BCA_SDR} & 2.175 & 5.413 & 21.97 & 59.23 & 148.5 \\ 
%\cline{2-7}
$N\!=\!8$ & Alg.\ref{alg:2BCA_SOCP} & 7.488 & 12.21 & 25.58 & 51.19 & 84.17 \\ 
%\cline{2-7}
 & Alg.\ref{alg:multiple_BCA}  & 0.073 & 0.406 & 2.741 & 9.387 & 19.27 \\
\hline
$K\!=\!3$ & Alg.\ref{alg:2BCA_SDR} & 2.650 & 9.002 & 43.07 & 158.5 & 689.5 \\ 
%\cline{2-7}
$N\!=\!4$ & Alg.\ref{alg:2BCA_SOCP} & 10.462 & 21.40 & 54.79 & 111.9 & 45.60 \\ 
%\cline{2-7}
 & Alg.\ref{alg:multiple_BCA} & 1.106 & 2.423 & 6.549 & 14.29 & 26.95 \\
\hline
$K\!=\!3$ & Alg.\ref{alg:2BCA_SDR} & 7.222 & 32.32 & 536.9 & --- & --- \\ 
%\cline{2-7}
$N\!=\!8$ & Alg.\ref{alg:2BCA_SOCP} & 19.765 & 50.99 & 173.6 & 59.32 & 85.81 \\ 
%\cline{2-7}
 & Alg.\ref{alg:multiple_BCA} & 1.650 & 3.519 & 9.286 & 18.19 & 31.74 \\
\hline
$K\!=\!5$ & Alg.\ref{alg:2BCA_SDR} & 4.468 & 19.65 & 160.3 & --- & --- \\ 
%\cline{2-7}
$K\!=\!4$ & Alg.\ref{alg:2BCA_SOCP} & 14.944 & 32.85 & 125.8 & 50.04 & 69.08  \\ 
%\cline{2-7}
 & Alg.\ref{alg:multiple_BCA} & 1.455 & 2.989 & 7.749 & 16.63 & 30.41 \\
\hline
$K\!=\!5$ & Alg.\ref{alg:2BCA_SDR} & 16.442 & 115.8 & --- & --- & --- \\ 
%\cline{2-7}
$N\!=\!8$ & Alg.\ref{alg:2BCA_SOCP} & 32.273 & 121.0 & 80.21 & 134.2 & 201.3 \\ 
%\cline{2-7}
 & Alg.\ref{alg:multiple_BCA} & 2.662 & 5.617 & 13.51 & 24.99 & 42.33 \\
\hline
\end{tabular}
\label{tab:runtime_1}
\\ Note: ``---'' means the problem is too large to be solved. \\
Alg.1: SDR-2BCA alg.; 
Alg.2: SOCP-2BCA alg.;
Alg.3: multiple BCA alg.  
\end{table}

\begin{table}
\caption{\small{MATLAB Running Time Per (Outer) Iteration(in sec.)}}
\centering
\begin{tabular}{|c|c|c|c|c|c|}
\hline
%\backslashbox{$K$,$N$}{$L$}
 Dim. & Alg. &$ K\!=\!20$ & $K\!=\!30$ & $K\!=\!40$ \\
\hline
$L\!=2\!$ & Alg.\ref{alg:2BCA_SDR} & --- & --- & --- \\
%\cline{2-7}
$N\!=\!16$ & Alg.\ref{alg:2BCA_SOCP} & 625.8 & $1.903\times10^3$ & $5.378\!\times\!10^3$ \\
%\cline{2-7}
$M=3$ & Alg.\ref{alg:multiple_BCA} & 89.47 & $2.171\!\times\!10^3$ & --- \\
\hline
\end{tabular}
\label{tab:runtime_2}
\\ Note: ``---'' means the problem is too large to be solved. \\
Alg.1: SDR-2BCA alg.; 
Alg.2: SOCP-2BCA alg.;
Alg.3: multiple BCA alg.    
\end{table}

\section{Conclusion}
\label{sec:conclusion}

This paper considers the joint transceiver design problem in cluster based wireless sensor network. To maximize the output SNR at the fusion center, the difficult original problem is decomposed into two or more subproblems and solution to each subproblem is obtained. Convergence and complexity are carefully examined. Extensive numerical results show that the proposed algorithms provide equivalently good SNR values while have different efficiency characteristics and  suitable for various system settings. As an extension of current problem, robust design and decentralized algorithms are desirable and meaningful for future study.

\appendix

\subsection{Proof of Lemma \ref{lem:P2_P3_equivalent}}
\label{subsec:appendix_lem_P2_P3_equivalent}
\begin{proof}
Assume that $\mathbf{X}^{\star}$ and $(\mathbf{Y}^{\star},\nu^{\star})$ are optimal solutions to ($\mathsf{P}2$) and ($\mathsf{P3}$) respectively, and $\mathsf{opt}(\mathsf{P}2)$ and $\mathsf{opt}(\mathsf{P}3)$ are optimal values of the two problems. 

First we claim that $\nu^{\star}>0$. This can be proved by contradiction. If $\nu^{\star}=0$, then we readily obtain $\Tr\big\{\mathbf{D}_i\mathbf{Y}^{\star}\big\}=0$, for $i=1,\cdots,L$. This leads to $\Tr\big\{(\sum_{i=1}^{L}\mathbf{D}_i)\mathbf{Y}^{\star}\big\}=0$. Since it is assumed that $\mathbf{\Sigma}_i\succ0$, for $i\in\{1,\cdots,L\}$, it holds that $\mathbf{C}_i\succ0$, for $i\in\{1,\cdots,L\}$. Thus $\sum_{i=1}^{L}\mathbf{D}_i=\mathsf{Diag}\{\mathbf{C}_1,\cdots,\mathbf{C}_L\}\succ0$ and we obtain $\mathbf{Y}^{\star}=\mathbf{O}$.
%This implies $\mathsf{opt}^{\star}(\mathsf{P3})=0$. However this is impossible. In fact we can choose $\tilde{\mathbf{Y}}=\alpha\mathbf{y}\mathbf{y}^H$, where $\mathbf{y}\in\mathsf{Rang}(\mathbf{A})$ and $\alpha$ is a positive real number. By setting $\alpha$ sufficiently small and appropriately choosing $\tilde{\nu}$, we can obtain a feasible solution $(\tilde{\mathbf{Y}},\tilde{\nu})$ of ($\mathsf{P}3$) which gives positive objective value. Thus $\nu^{\star}>0$. 
However this violates the constraint (\ref{eq:P3_constr1}), since its left hand side equals zero. Thus $\nu^{\star}>0$.

If $(\mathbf{Y}^{\star},\nu^{\star})$ solves ($\mathsf{P}3$), since $\nu^{\star}>0$, it is easy to check $\mathbf{Y}^{\star}/\nu^{\star}$ is feasible for ($\mathsf{P}2$) and gives an objective value of $\frac{\Tr\{\mathbf{A}(\mathbf{Y}^{\star}/\nu^{\star})\}}{\Tr\{\mathbf{B}(\mathbf{Y}^{\star}/\nu^{\star})+c_0\}}=\mathsf{opt}(\mathsf{P}3)$. So $\mathsf{opt}(\mathsf{P}3)\leq\mathsf{opt}(\mathsf{P}2)$. On the other hand, if $\mathbf{X}^{\star}$ solves ($\mathsf{P}2$), then $\left(\frac{\mathbf{X}^{\star}}{\Tr\{\mathbf{B}\mathbf{X}^{\star}+c_0\}},\frac{1}{\Tr\{\mathbf{B}\mathbf{X}^{\star}+c_0\}}\right)$ is a feasible solution to ($\mathsf{P}3$) and gives objective value of $\Tr\{\mathbf{A}\frac{\mathbf{X}^{\star}}{\Tr\{\mathbf{B}\mathbf{X}^{\star}+c_0\}}\}=\mathsf{opt}(\mathsf{P}2)$. So $\mathsf{opt}(\mathsf{P}2)\leq\mathsf{opt}(\mathsf{P}3)$. The proof is complete.
\end{proof}

\subsection{Proof of Lemma \ref{lem:P3_solvability}}
\label{subsec:appendix_lem_P3_solvability}
\begin{proof}
First we prove that ($\mathsf{P}3$) is solvable. By (\ref{eq:P3_constr1}) we have $0\leq\nu\leq c_0^{-1}$, so $\nu$ is bounded. Combining (\ref{eq:P3_constr2}) we readily obtain $\Tr\{\mathbf{D}_i\mathbf{Y}\}\leq P_i\nu\leq P_i/c_0$, $i\in\{1,\cdots,L\}$, which implies $\Tr\{(\sum_{i=1}^{L}\mathbf{D}_i)\mathbf{Y}\}\leq(\sum_{i=1}^{L}P_i)/c_0$. Since $(\sum_{i=1}^{L}\mathbf{D}_i)=\mathsf{Diag}\{\mathbf{C}_1,\cdots,\mathbf{C}_L\}\succ0$, this means $\mathbf{Y}$ is bounded. So the feasible region of ($\mathsf{P}3$) is bounded. Obviously the feasible region of $(\mathbf{Y},\nu)$ is also closed. So ($\mathsf{P}3$) has compact feasible region. Since the objective $\Tr\{\mathbf{A}\mathbf{Y}\}$ always takes finite values on the whole feasible region, by Weierstrass' theorem(proposition 3.2.1-(1) in \cite{bib:convex_optimization_theory}), ($\mathsf{P}3$) is solvable.

The Lagrangian function of problem ($\mathsf{P}3$) is given as
\begin{align}
\!\!\!\!&\qquad\mathcal{L}\big(\mathbf{Y},\nu,\lambda,\{\mu_i\}_{i=1}^{L}\big)\\
\!\!\!\!&=\!\Tr\big\{\mathbf{Y}\mathbf{A}\big\}\!+\!\lambda\Big(1\!\!-\!\!\Tr\big\{\mathbf{Y}\mathbf{B}\big\}\!\!-\!\!c_0\nu\Big)\!-\!\sum_{i=1}^{L}\mu_i\Big(\Tr\big\{\!\mathbf{Y}\mathbf{D}_i\big\}\!\!-\!\!P_i\nu\Big)\nonumber\\
\!\!\!\!&=\Tr\Big\{\Big[\mathbf{A}\!-\!\lambda\mathbf{B}\!-\!\sum_{i=1}^L\mu_i\mathbf{D}_i\Big]\mathbf{Y}\Big\}\!+\!\Big(-c_0\lambda\!+\!\sum_{i=1}^{L}P_i\mu_i\Big)\nu\!+\!\lambda.\nonumber
\end{align}
By taking the supremum of Lagrangian function with respect to $\mathbf{Y}\succcurlyeq0$ and $\nu\geq0$, the dual function is obtained as
\begin{align}
g(\lambda,\{\mu_i\}_{i=1}^L)=\underset{\mathbf{Y}\succcurlyeq0,\nu\geq0}{\sup.}\mathcal{L}\big(\mathbf{Y},\nu,\lambda,\{\mu_i\}_{i=1}^{L}\big)=\lambda
\end{align}
with the conditions $\big[\mathbf{A}-\lambda\mathbf{B}-\sum_{i=1}^L\mu_i\mathbf{D}_i\big]\preccurlyeq0$ and $\big(-c_0\lambda+\sum_{i=1}^{L}P_i\mu_i\big)\leq0$ satisfied. So the dual problem of ($\mathsf{P}3$) can be given as 
\begin{subequations}
\begin{align}
\!\!\!\!\!\!\!\!\!\!\!\!\!\!\!\!\!\!\!\!(\mathsf{D}3): \underset{\lambda,\{\mu_i\}_{i=1}^L}{\min.}\ & \lambda \\
\!\!\!\!\!\!\!\!\!\!\!\!\!\!\!\!\!\!\!\!\mathsf{s.t.}\ &\lambda\mathbf{B}+\sum_{i=1}^{L}\mu_i\mathbf{D}_i\succcurlyeq\mathbf{A},\label{eq:D3_constr1}\\
\!\!\!\!\!\!\!\!\!\!\!\!\!\!\!\!\!\!\!\!& c_0\lambda\geq\sum_{i=1}^{L}P_i\mu_i, \label{eq:D3_constr2}\\
\!\!\!\!\!\!\!\!\!\!\!\!\!\!\!\!\!\!\!\!& \mu_i\geq0, \ i=1,\cdots,L.\label{eq:D3_constr3}
\end{align}
\end{subequations}
Next we prove that ($\mathsf{D}3$) is solvable. To do this it is sufficient to show that there exists a real value $\gamma$ such that the level set $\{(\lambda,\{\mu_i\}_{i=1}^{L})|\lambda\leq\gamma, (\lambda,\{\mu_i\}_{i=1}^{L})\in\mathsf{dom}(\mathsf{D}3)\}$ is nonempty and bounded, where $\mathsf{dom}(\mathsf{D}3)$ means feasible region of ($\mathsf{D}3$). Here we choose $\tilde{\mu}_i=\lambda_{max}(\mathbf{A})/\lambda_{min}(\mathbf{C}_i)$ for $i=1,\cdots,L$, where $\lambda_{max}(\cdot)$ and $\lambda_{max}(\cdot)$ represent the maximal and minimal eigenvalue of a matrix respectively. Set $\tilde{\lambda}=c_0^{-1}(\sum_{i=1}^L\tilde{\mu}_iP_i)$. By definition the constraints (\ref{eq:D3_constr2}) and (\ref{eq:D3_constr3}) are satisfied by $(\tilde{\lambda},\{\tilde{\mu}_i\}_{i=1}^{L})$. Since $\mathbf{B}\succcurlyeq0$ and $\tilde{\lambda}\geq0$,
\begin{align}
\tilde{\lambda}\mathbf{B}+\sum_{i=1}^{L}\tilde{\mu}_i\mathbf{D}_i&\succcurlyeq\sum_{i=1}^{L}\tilde{\mu}_i\mathbf{D}_i=\mathsf{Diag}\{\tilde{\mu}_1\mathbf{C}_1,\cdots,\tilde{\mu}_L\mathbf{C}_L \}\nonumber\\
&\succcurlyeq\lambda_{\max}\big({\mathbf{A}}\big)\mathbf{I}_{\sum_{i=1}^{L}K_iN_i}\succcurlyeq\mathbf{A}.
\end{align}
Thus constraint (\ref{eq:D3_constr1}) is also satisfied by $(\tilde{\lambda},\{\tilde{\mu}_i\}_{i=1}^{L})$. Set $\tilde{\gamma}=\tilde{\lambda}$. Combination of $\lambda\leq\tilde{\gamma}$ and the constraint (\ref{eq:D3_constr2}) guarantees that $\lambda$ and all $\mu_i$'s are bounded. So we conclude that the level set $\{(\lambda,\{\mu_i\}_{i=1}^{L})|\lambda\leq\tilde{\gamma},(\lambda,\{\mu_i\}_{i=1}^{L})\in\mathsf{dom}(\mathsf{D}3)\}$ is nonempty and bounded. Invoking  Weierstrass' theorem(proposition 3.2.1-(2) in \cite{bib:convex_optimization_theory}), ($\mathsf{D}3$) is solvable.
\end{proof}

\subsection{Proof of Lemma \ref{lem:A_rank1}}
\label{subsec:appendix_lem_A_rank1}

\begin{proof}
Recalling the definition of $\mathbf{A}_{ij}$ in (\ref{eq:definition_fABCc_A}) and utilizing the identity $\big(\mathbf{A}\mathbf{B}\big)\otimes\big(\mathbf{C}\mathbf{D}\big)=\big(\mathbf{A}\otimes\mathbf{C}\big)\big(\mathbf{B}\otimes\mathbf{D}\big)$ \cite{bib:complex_matrix_derivative}, we have
\begin{align}
\mathbf{A}_{ij}&=\big(\mathbf{1}_{K_i}\mathbf{1}_{K_j}^T\big)\otimes\big(\mathbf{H}_i^H\mathbf{g}\mathbf{g}^H\mathbf{H}_j\big)\nonumber\\
&=\big(\mathbf{1}_{K_i}\otimes\mathbf{H}_i^H\mathbf{g}\big)\big(\mathbf{1}^T_{K_j}\otimes\mathbf{g}^H\mathbf{H}_j\big).
\end{align}
Then the $j$-th block column of $\mathbf{A}$ is given as 
\begin{align}
\mathbf{A}_{:j}&\!=\!\left[\!
\begin{array}{c}
\mathbf{A}_{1j} \\
\vdots \\
\mathbf{A}_{Lj}
\end{array}
\!\right]
\!=\!\left[\!
\begin{array}{c}
\big(\mathbf{1}_{K_1}\otimes\mathbf{H}_1^H\mathbf{g}\big)\big(\mathbf{1}^T_{K_j}\otimes\mathbf{g}^H\mathbf{H}_j\big)\\
\vdots\\
\big(\mathbf{1}_{K_L}\otimes\mathbf{H}_L^H\mathbf{g}\big)\big(\mathbf{1}^T_{K_j}\otimes\mathbf{g}^H\mathbf{H}_j\big)
\end{array}
\!\right]\\
\!\!\!\!\!\!&\!=\!\left[\!
\begin{array}{c}
\big(\mathbf{1}_{K_1}\otimes\mathbf{H}_1^H\mathbf{g}\big)\\
\vdots\\
\big(\mathbf{1}_{K_L}\otimes\mathbf{H}_L^H\mathbf{g}\big)
\end{array}
\!\right]\big(\mathbf{1}^T_{K_j}\otimes\mathbf{g}^H\mathbf{H}_j\big)\\
&\!=\!\mathbf{a}\big(\mathbf{1}^T_{K_j}\otimes\mathbf{g}^H\mathbf{H}_j\big).
\end{align} 
The last equality utilizes the definition of $\mathbf{a}$ in (\ref{eq:def_a}). Then the  matrix $\mathbf{A}$ can be represented by packing all its column blocks as follows
\begin{align}
\mathbf{A}
&=\left[
\mathbf{A}_{:1},\cdots,\mathbf{A}_{:L}
\right]\\
&=\Big[
\mathbf{a}\big(\mathbf{1}_{K_1}^T\otimes\mathbf{g}^H\mathbf{H}_1\big),\cdots,\mathbf{a}\big(\mathbf{1}_{K_L}^T\otimes\mathbf{g}^H\mathbf{H}_L\big)
\Big]\\
&=\mathbf{a}\Big[\big(\mathbf{1}_{K_1}^T\otimes\mathbf{g}^H\mathbf{H}_1\big),\cdots,\big(\mathbf{1}_{K_L}^T\otimes\mathbf{g}^H\mathbf{H}_L\big)\Big]\\
&=\mathbf{a}\mathbf{a}^H.
\end{align}
The proof is complete.
\end{proof}

\subsection{Proof of Lemma \ref{lem:opt_P1_upperbound}}
\label{subsec:appendix_lem_opt_P1_upperbound}

\begin{proof}
By the $i$-th power constraint (\ref{eq:opt_prob_P1_constr2}) we have
\begin{align}
\label{eq:bd_norm_f_i}
\lambda_{min}\big(\mathbf{C}_i\big)\|\mathbf{f}_i\|_2^2\leq\mathbf{f}_i^H\mathbf{C}_i\mathbf{f}_i\leq{P}_i,
\end{align}
which implies 
\begin{align}
\label{eq:F_i_bound}
\|\mathbf{f}_i\|_2\leq\sqrt{\frac{P_i}{\lambda_{min}(\mathbf{C}_i)}}, \ i=1,\cdots,L.
\end{align}
By Cauchy-Schwarz inequality the numerator $\mathbf{f}^H\mathbf{A}\mathbf{f}$ of SNR is bounded as
\begin{align}
\mathbf{f}^H\mathbf{A}\mathbf{f}&=\big|\mathbf{a}^H\mathbf{f}\big|^2\leq\Big|\sum_{i=1}^{L}\big|\mathbf{f}_i^H\big(\mathsf{1}_{K_i}\otimes\mathbf{H}_i^H\mathbf{g}\big)\big|\Big|^2\\
&\leq\Big|\sum_{i=1}^L\big\|\mathbf{f}_i\big\|_2\big\|\mathbf{1}_{K_i}\otimes\mathbf{H}_i^H\mathbf{g}\big\|_2\Big|^2\\
&\leq\left(\sum_{i=1}^L\sqrt{\frac{P_i}{\lambda_{min}\big(\mathbf{C}_i\big)}}K_i\big\|\mathbf{H}_i^H\mathbf{g}\big\|_2\right)^2,
\end{align}
where the above first inequality uses Lemma \ref{lem:A_rank1}. Combining the fact that $\mathbf{f}^H\mathbf{B}\mathbf{f}+c_0\geq c_0$, the upper bound in the lemma is proved.
\end{proof}

\subsection{Proof of Theorem \ref{thm:2BCA_convergence}}
\label{subsec:appendix_thm_2BCA_convergence}

\begin{proof}
Since each update of $\{\mathbf{F}_i\}_{i=1}^{L}$ or $\mathbf{g}$ is obtained by solving a maximization problem, SNR monotonically increases. 
%Since the $\mathbf{g}$ maximizing SNR should satisfy equation (\ref{eq:opt_g}), the maximal SNR is actually a continuous function of $\{\mathbf{F}_i\}_{i=1}^L$. So SNR is bounded. 
At the time we note that SNR is bounded. In fact since the SNR is invariant to scaling of $\mathbf{g}$ we can assume that $\|\mathbf{g}\|_2=1$. According to (\ref{eq:F_i_bound}) in the proof of lemma \ref{lem:opt_P1_upperbound}, $\mathbf{F}_i$ is bounded for all $i=1,\cdots,L$. Thus the numerator of SNR is bounded above and the denominator of SNR is bounded away from zero, so SNR should be bounded. Consequently the objective value sequence by algorithms \ref{alg:2BCA_SDR} or \ref{alg:2BCA_SOCP} converges since it is monotonically increasing and bounded.

Since $\{\mathbf{F}_i\}_{i=1}^{L}$ are bounded, by Bolzano-Weierstrass theorem \cite{bib:analysis_rudin} there exists a sequence $\{j_k\}_{k=1}^{\infty}$ such that $\big\{\{\mathbf{F}_i^{(j_k)}\}_{i=1}^{L}\big\}_{k=1}^{\infty}$ converges. Since $\mathbf{g}^{(j_k)}$ is updated by (\ref{eq:opt_g}) which is a continuous function of $\{\mathbf{F}_i^{(j_k)}\}_{i=1}^{L}$, thus the sequence $\big\{\big(\{\mathbf{F}_i^{(j_k)}\}_{i=1}^{L},\mathbf{g}^{(j_k)}\big)\big\}_{k=1}^{\infty}$ also converges. The existence of limit points of the solution sequence is proved.

The feasible region of ($\mathsf{P}0$) is a Cartesian product $\mathcal{X}_1\times\mathcal{X}_2$ with  $\mathcal{X}_1\triangleq\big\{\{\mathbf{F}_i\}_{i=1}^L\big|\text{(\ref{eq:original_SNR_prob_constr}) is satisfied for }i=1,\cdots,L\big\}$ and $\mathcal{X}_2\triangleq\mathbb{C}^{M\times1}\backslash\{\mathbf{0}\}$. Corollary 2 in \cite{bib:Gauss-Seidel} states that any limit point of solution sequence generated by 2-block coordinate ascent method is stationary. It should be noted that this conclusion is obtained under the assumption that the objective function is continuously differentialbe on feasible region and each block feasible region(each term in the Cartesian product) is nonempty, closed and convex set.  Unfortunately the problem ($\mathsf{P}0$) does not satisfy this assumption since 
%$\mathsf{SNR}$ is continuously differentiable on $\mathcal{X}_1\times\mathcal{X}_2$ and $\mathcal{X}_1$ is nonempty, closed and convex. 
$\mathcal{X}_2$ is nonconvex and not closed. In the following we will show that conclusion in \cite{bib:Gauss-Seidel} still applies to our problem after appropriately adjusting its argument.

First we assert that the solution sequence always has nonzero $\mathbf{g}$, i.e. $\mathbf{g}^{(k)}\neq0$ for all $k=0,1,\cdots$. Since algorithms \ref{alg:2BCA_SDR} or \ref{alg:2BCA_SOCP} starts from $\big(\{\mathbf{F}_i^{(0)}\}_{i=1}^{L},\mathbf{g}^{(0)}\big)$ with $\mathbf{g}^{(0)}\neq\mathbf{0}$, the assertion holds for $k=0$ and $\mathsf{SNR}\big(\{\mathbf{F}_i^{(0)}\}_{i=1}^{L},\mathbf{g}^{(0)}\big)>0$. Assume that $m\geq1$ is the smallest integer such that $\mathbf{g}^{(m)}=\mathbf{0}$, then according to (\ref{eq:opt_g}) $\big(\sum_{i=1}^L\mathbf{H}_i\mathbf{F}_i^{(m)}\mathbf{1}_{K_i}\big)=\mathbf{0}$. Notice $\mathbf{g}^{(m-1)}\neq\mathbf{0}$. By (\ref{eq:SNR_func}) this readily implies $\mathsf{SNR}\big(\{\mathbf{F}_i^{(m)}\}_{i=1}^{L},\mathbf{g}^{(m-1)}\big)=0<\mathsf{SNR}\big(\{\mathbf{F}_i^{(0)}\}_{i=1}^{L},\mathbf{g}^{(0)}\big)$, which contradicts the increasing monotonicity of SNR. 

Next we assert that any limit point $(\{\bar{\mathbf{F}}_i\}_{i=1}^{L},\bar{\mathbf{g}})$ of solution sequence has nonzero $\bar{\mathbf{g}}$. By contradiction we assume that the subsequence $\big\{\big(\{\mathbf{F}_i^{(j_k)}\}_{i=1}^{L},\mathbf{g}^{(j_k)}\big)\big\}_{k=1}^{\infty}$ converges to $(\{\bar{\mathbf{F}}_i\}_{i=1}^{L},\bar{\mathbf{g}})$ with $\bar{\mathbf{g}}=\mathbf{0}$. Then by (\ref{eq:opt_g}) $\big\{\sum_{i}^{L}\mathbf{H}_i\mathbf{F}_i^{(j_k)}\mathbf{1}_{K_i}\big\}_{k=1}^{\infty}\rightarrow\mathbf{0}$. By rescaling each $\mathbf{g}^{(j_k)}$ to $\hat{\mathbf{g}}^{(j_k)}$ such that $\|\hat{\mathbf{g}}^{(j_k)}\|_2=1$ for all $k=1,2,\cdots$, we actually construct another solution sequence which is also generated by 2-block coordinate ascent method, since scaling of $\mathbf{g}$ does not change the SNR value. Now for this new solution sequence $\big\{\big(\{\mathbf{F}_i^{(j_k)}\}_{i=1}^{L},\hat{\mathbf{g}}^{(j_k)}\big)\big\}_{k=1}^{\infty}$, since $\big\{\sum_{i}^{L}\mathbf{H}_i\mathbf{F}_i^{(j_k)}\mathbf{1}_{K_i}\big\}_{k=1}^{\infty}\rightarrow\mathbf{0}$ while $\{\hat{\mathbf{g}}^{(j_k)}\}_{k=1}^{\infty}$(consequently the denominator of SNR) is bounded away from zero, we have $\mathsf{SNR}\big(\{\mathbf{F}_i^{(j_k)}\}_{i=1}^{L},\hat{\mathbf{g}}^{(j_k)} \big)\rightarrow0$, which again contradicts the increasing monotonicity of SNR sequence.

In \cite{bib:Gauss-Seidel} the closedness assumption of $\mathcal{X}_2$ is implicitly invoked in its proposition 2 to ensure that any limit point of solution sequence is feasible. Through the above proof we can see that this result holds true thus proposition 2 in \cite{bib:Gauss-Seidel} applies to our problem. 

The convexity assumption of $\mathcal{X}_2$ is explicitly utilized in \cite{bib:Gauss-Seidel} in its proof of proposition 3. Here we identify the notations $i$, $\mathbf{x}_{i+1}$, $\mathcal{X}_{i+1}$ and $\mathbf{w}(k,i)$ used in the original proof of proposition 3 in \cite{bib:Gauss-Seidel} as $1$, $\mathbf{g}$, $\mathcal{X}_2$ and $\big(\{\mathbf{F}_i^{(k+1)}\}_{i=1}^L,\mathbf{g}^{(k)}\big)$ respectively in our case.

According to the proof in \cite{bib:Gauss-Seidel}, we can find a descent direction $\mathbf{d}_{2}^{(k)}=\tilde{\mathbf{g}}-\mathbf{g}^{(k)}$ with $\tilde{\mathbf{g}}\in\mathcal{X}_2$ and
\begin{align}
\label{eq:conv_proof_contrad_2}
\mathbf{\nabla}_{\mathbf{g}^{(k)}}\mathsf{SNR}\big(\{\mathbf{F}_i^{(k+1)}\}_{i=1}^L,\mathbf{g}^{(k)}\big)^T\big(\tilde{\mathbf{g}}-\mathbf{g}^{(k)}\big)<0. 
\end{align}
 ((\ref{eq:conv_proof_contrad_2}) corresponds to the inequality $\nabla_{i+1}f(\mathbf{w}(k,i))^Td_{i+1}^k<0$ in the original proof of proposition 3 in \cite{bib:Gauss-Seidel}, which lies under equation (11) and is not labeled with number). Then by Armijo-type line search we can update $\mathbf{g}^{(k)}$ with $\mathbf{g}^{(k)}+\alpha_2^{(k)}\mathbf{d}_{2}^{(k)}$, where $\mathbf{d}^{(k)}_2\in\mathcal{X}_2$ and $\alpha_2^{(k)}\in(0,1]$. The convexity of $\mathcal{X}_2$ in \cite{bib:Gauss-Seidel} guarantees that $\mathbf{g}^{(k)}+\alpha_2^{(k)}\mathbf{d}_{2}^{(k)}\in\mathcal{X}_2$. 

Now we show that the fact $\mathbf{g}^{(k)}+\alpha_2^{(k)}\mathbf{d}_{2}^{(k)}\in\mathcal{X}_2$ still holds for our problem although our $\mathcal{X}_2$ is nonconvex. By contradiction assume that $\mathbf{g}^{(k)}+\alpha_2^{(k)}\mathbf{d}_{2}^{(k)}\notin\mathcal{X}_2$, i.e. 
\begin{align}
\label{eq:conv_proof_contrad_1}
\mathbf{g}^{(k)}+\alpha_2^{(k)}\mathbf{d}_{2}^{(k)}=\mathbf{0}. 
\end{align}
This is actually impossible. By substituting $\mathbf{d}_{2}^{(k)}=\tilde{\mathbf{g}}-\mathbf{g}^{(k)}$ into (\ref{eq:conv_proof_contrad_1}) we have
\begin{align}
(\alpha_{2}^{(k)}-1)\mathbf{g}^{(k)}=\alpha_{2}^{(k)}\tilde{\mathbf{g}}.
\end{align}
As a result of Armijo-type line search algorithm(refer to (3) and proposition 1 in \cite{bib:Gauss-Seidel}), $(\alpha_{2}^{(k)}-1)\in[0,1)$. If $\alpha_{2}^{(k)}=1$, then $\tilde{\mathbf{g}}=\mathbf{0}$, which contradicts the fact $\tilde{\mathbf{g}}\in\mathcal{X}_2$. If $(\alpha_{2}^{(k)}-1)<1$, then $\tilde{\mathbf{g}}=\frac{(\alpha_{2}^{(k)}-1)}{\alpha_{2}^{(k)}}\mathbf{g}^{(k)}$. This is also impossible since $\mathsf{SNR}$ is invariant to scaling of $\mathbf{g}$ and thus $\mathbf{\nabla}_{\mathbf{g}^{(k)}}\mathsf{SNR}\big(\{\mathbf{F}_i^{(k+1)}\}_{i=1}^L,\mathbf{g}^{(k)}\big)^T\big(\tilde{\mathbf{g}}-\mathbf{g}^{(k)}\big)=0$, which contradicts the fact (\ref{eq:conv_proof_contrad_2}). Thus the proposition 3 in \cite{bib:Gauss-Seidel} also stands for our problem. 

As a direct implication of proposition 2 and 3, the corollary 2 in \cite{bib:Gauss-Seidel}  holds true and thus any limit point provided by algorithm \ref{alg:2BCA_SDR} or \ref{alg:2BCA_SOCP} is stationary point of ($\mathsf{P}0$).
\end{proof}

\subsection{Proof of Theorem \ref{thm:solvability_S_pros}}
\label{subsec:appendix_thm_solvability_S_pros}

\begin{proof} 
Since the problem ($\mathsf{P}8_{\alpha}^i$) is a quadratic problem with one quadratic constraint and is obviously strictly feasible, the result of Appendix B.1 in \cite{bib:CvxOpt} is valid to invoke, which states that ($\mathsf{P}8_{\alpha}^i$) has the following relaxation 
\begin{subequations}
\begin{align}
\!\!\!\!\!\!\!(\mathsf{P}9_{\alpha}^i)\underset{\mathbf{X},\mathbf{x}}{\min.}\ &\Tr\big\{\!\big[\alpha\mathbf{B}_i\!\!-\!\!\mathbf{A}_{ii}\!\big]\!\mathbf{X}\!\big\}\!\!-\!\!2\mathsf{Re}\big\{\!\mathbf{q}_i^H\mathbf{x}\big\}\!\!+\!\!(\alpha d_i\!\!-\!\!c_i\!), \\
\mathrm{s.t.}\ &\Tr\big\{\mathbf{C}_i\mathbf{X}\big\}-P_i\leq0, \\
&\left[
\begin{array}{cc}
\mathbf{X} & \mathbf{x} \\
\mathbf{x}^H & 1
\end{array}
\right]\succeq 0.
\end{align}
\end{subequations}
with $\mathsf{opt}(\mathsf{P}9_{\alpha}^i)\!=\!\mathsf{opt}(\mathsf{P}8_{\alpha}^i)$. We replace the variables $(\mathbf{X},\mathbf{x})$ in $(\mathsf{P}9^i_{\alpha})$ by one matrix variable $\widetilde{\mathbf{X}}$ and rewrite it into a SDP form
\begin{subequations}
\begin{align}
(\mathsf{P}10_{\alpha}^i)\ \underset{\widetilde{\mathbf{X}}}{\min.}\ &\Tr\big\{\mathbf{P}_1\widetilde{\mathbf{X}}\big\},\\ 
\mathsf{s.t.}\ &\Tr\big\{\mathbf{P}_2\widetilde{\mathbf{X}}\big\}\leq P_i, \\
&\Tr\big\{\mathbf{P}_3\widetilde{\mathbf{X}}\big\}=1,
\end{align}
\end{subequations}
with the parameter matrices being defined as 
\begin{subequations}
\begin{align}
\!\!\!\!\mathbf{P}_1\!\!\triangleq\!\!&
\left[\!\!
\begin{array}{cc}
\alpha\mathbf{B}_i\!\!-\!\!\mathbf{A}_{ii} & -\mathbf{q}_i\! \\
-\mathbf{q}_i^H & \alpha d_i\!\!-\!\!c_i\!
\end{array}
\!\!\right],
\mathbf{P}_2\!\!\triangleq\!\!
\left[\!\!
\begin{array}{cc}
\mathbf{C}_i & \mathbf{0} \\
\mathbf{0}^T & 0
\end{array}
\!\!\right],
\mathbf{P}_3\!\!\triangleq\!\!
\left[\!\!
\begin{array}{cc}
\mathbf{O} & \mathbf{0} \\
\mathbf{0}^T & 1
\end{array}
\!\!\right].\nonumber
\end{align}
\end{subequations}

Since $\mathbf{C}_i\succ0$, the feasible set of ($\mathsf{P}10_{\alpha}^i$) is bounded. Obviously the objective of ($\mathsf{P}10_{\alpha}^i$) takes finite value over the feasible set, so ($\mathsf{P}10_{\alpha}^i$) is solvable by Weierstrass’s theorem(proposition 3.2.1-(1) in \cite{bib:convex_optimization_theory}).

Assume that $\widetilde{\mathbf{X}}^{\star}$ is one optimal solution. Obviously $\widetilde{\mathbf{X}}^{\star}$ is non-zero(otherwise constraint $\mathbf{P}_3\widetilde{\mathbf{X}}^{\star}=1$ would fail). Since $\widetilde{\mathbf{X}}^{\star}$ has dimension $K_iN_i+1\geq3$, evoking theorem 2.2 of \cite{bib:rank_one_decomp}, we can obtain a vector $\widetilde{\mathbf{x}}$ such that $\Tr\{\mathbf{P}_j\widetilde{\mathbf{X}}^{\star}\}=\Tr\{\mathbf{P}_j\widetilde{\mathbf{x}}\widetilde{\mathbf{x}}^H\}$ for $j=1,2,3$.
Denote $\widetilde{\mathbf{x}}=[\widetilde{\mathbf{x}}_1^T, \tilde{x}_2]^T$. Notice that $\tilde{x}_2$ is nonzero(otherwise the constraint $\Tr\{\mathbf{P}_3\widetilde{\mathbf{x}}\widetilde{\mathbf{x}}^H\}=|\tilde{x}_2|^2=1$ would fail). Define $\widehat{\mathbf{x}}\triangleq[\widetilde{\mathbf{x}}_1^T/\tilde{x}_2, 1]^T$, it is easy to check that 
%\begin{subequations}
\begin{align}
\!\!\!\!f_0\big(\frac{\widetilde{\mathbf{x}}_1}{\tilde{x}_2}\big)&\!\!=\!\!\Tr\{\mathbf{P}_1\widehat{\mathbf{x}}\widehat{\mathbf{x}}^H\}\!\!=\!\!\Tr\{\mathbf{P}_1\widetilde{\mathbf{x}}\widetilde{\mathbf{x}}^H\}\!\!=\!\!\Tr\{\mathbf{P}_1\widetilde{\mathbf{X}}^{\star}\}\!\!=\!\!\mathsf{opt}(\mathsf{P}10_{\alpha}^i);\nonumber\\
\!\!\!\!f_1\big(\frac{\widetilde{\mathbf{x}}_1}{\tilde{x}_2}\big)&\!\!=\!\!\Tr\{\mathbf{P}_2\widehat{\mathbf{x}}\widehat{\mathbf{x}}^H\}\!\!=\!\!\Tr\{\mathbf{P}_2\widetilde{\mathbf{x}}\widetilde{\mathbf{x}}^H\}\!\!=\!\!\Tr\{\mathbf{P}_2\widetilde{\mathbf{X}}^{\star}\}\!\leq\! P_i.
\end{align}
%\end{subequations}
where $f_0(\cdot)$ denotes the objective function of ($\mathsf{P}8_{\alpha}^i$) and $f_1(\mathbf{x})\triangleq\mathbf{x}^H\mathbf{C}_i\mathbf{x}$.  
The above two equations imply that $\widetilde{\mathbf{x}}_1/\tilde{x}_2$ is an optimal solution to ($\mathsf{P}8_{\alpha}^i$) since which gives optimal value $\mathsf{opt}(\mathsf{P}10_{\alpha}^i)$ and is feasible.
%Notice that SDP problem can be efficiently solved by interior-point method in polynomial time and the rank-one decomposition also has polynomial time by \cite{bib:rank_one_decomp}. So the problem ($\mathsf{P}3_i$) can be solved in polynomial time as a whole, which completes the proof.
\end{proof}

\subsection{Proof of Theorem \ref{thm:closedform_P_i_D1}}
\label{subsec:appendix_thm_closedform_P_i_D1}

\begin{proof}
When $K_i=1$, the covariance matrices $\mathbf{\Sigma}_{\mathbf{s}}$ and $\{\mathbf{\Sigma}_i\}_{i=1}^L$ become scalars $\sigma_{s}^2$ and $\{\sigma_i^2\}_{i=1}^L$ respectively and we have
\begin{subequations}
\begin{align}
\mathbf{A}_{ii}&=\mathbf{H}_i^H\mathbf{g}\mathbf{g}^H\mathbf{H}_i,\ \ \ \mathbf{B}_i=\sigma_i^2\mathbf{H}_i^H\mathbf{g}\mathbf{g}^H\mathbf{H}_i,\\
\mathbf{C}_i&=(\sigma_i^2\!\!+\!\!\sigma_s^2)\mathbf{I}_{N_i}, \\
\mathbf{A}_{ij}&=\mathbf{1}_{K_j}^T\otimes\big(\mathbf{H}_i^H\mathbf{g}\mathbf{g}^H\mathbf{H}_i\big)=\mathbf{H}_i^H\mathbf{g}\big(\mathbf{1}_{K_j}^T\otimes\mathbf{g}^H\mathbf{H}_j\big), \\
\mathbf{q}_i&=\sum_{j\neq i}\mathbf{A}_{ij}\mathbf{f}_j=\mathbf{H}_i^H\mathbf{g}\Big[\sum_{j\neq i}\big(\mathbf{1}_{K_j}^T\otimes\mathbf{g}^H\mathbf{H}_j\big)\mathbf{f}_j\Big]. 
\end{align}
\end{subequations}
To simplify the following discussion, we introduce the notations
\begin{align}
\label{eq:def_P_bar}
\beta_i&\triangleq\sum_{j\neq i}\mathbf{f}_j^H\Big(\mathbf{1}_{K_j}\otimes\mathbf{H}_j^H\mathbf{g}\Big),\ \bar{P}_i\triangleq\frac{P_i}{\sigma_i^2+\sigma_s^2}.
\end{align}

Then the problem ($\mathsf{P}8_{\alpha}^i$) in (\ref{eq:opt_prob_P_i}) is expressed as
\begin{subequations}
\begin{align}
&\!\!\!\!\!\!\!\!\!(\mathsf{P}11_{\alpha}^i):\underset{\mathbf{f}_i}{\min.}\ \big(\alpha\sigma_i^2\!-\!1\big)\mathbf{f}_i^H\mathbf{H}_i^H\mathbf{g}\mathbf{g}^H\mathbf{H}_i\mathbf{f}_i\!-\!2\mathsf{Re}\big\{\beta_i\mathbf{g}^H\mathbf{H}_i\mathbf{f}_i\big\}\nonumber\\
&\qquad\qquad\quad+\!\big(\alpha d_i\!-\!c_i\big),\\
&\qquad \mathrm{s.t.}\ \|\mathbf{f}_i\|^2\leq \bar{P}_i.
\end{align}
\end{subequations}

The key observation is that the quadratic matrix  $\mathbf{H}_i^H\mathbf{g}\mathbf{g}^H\mathbf{H}_i$ in the objective function has rank one and thus an eigenvalue decomposition as follows
\begin{align}
\!\!\!\!\mathbf{H}_i^H\mathbf{g}\mathbf{g}^H\mathbf{H}_i\!=\!\mathbf{U}
\!\!\left[
\begin{array}{cc}
\mathbf{g}^H\mathbf{H}_i\mathbf{H}_i^H\mathbf{g} &  \\
 & \mathbf{O}_{(N_i-1)\times(N_i-1)}
\end{array}
\!\!\!\!\right]\mathbf{U}^H,\label{eq:rank_1_eigenDcmps_1}
\end{align}
with $\mathbf{U}\!\triangleq\!\big[\mathbf{u}_1, \mathbf{u}_2,\cdots,\mathbf{u}_{N_i}\big]$ being eigenvectors of $\mathbf{H}_i^H\mathbf{g}\mathbf{g}^H\mathbf{H}_i$. The first eigenvector $\mathbf{u}_1$ corresponds to the unique nonzero eigenvalue and the other eigenvectors span the null space of $\mathbf{H}_i^H\mathbf{g}\mathbf{g}^H\mathbf{H}_i$. 
In other words, we have
\begin{align}
\mathbf{u}_1=\frac{\mathbf{H}_i^H\mathbf{g}}{\|\mathbf{H}_i^H\mathbf{g}\|_2}, \ \ \ \ \mathbf{u}_j^H\mathbf{H}_i^H\mathbf{g}=0, \ \ \ j=\{2,\cdots,N_i\}\label{eq:rank_1_eigenDcmps_2}
\end{align}
Since $\{\mathbf{u}_i\}_{i=1}^L$ is an orthonormal basis, $\mathbf{f}_i$ can be represented as $\mathbf{f}_i=\mathbf{U}\mathbf{\tau}=\sum_{j=1}^{N_i}\mathbf{u}_j\tau_j$ with vector $\mathbf{\tau}$ being the coordinates in terms of basis $\{\mathbf{u}_j\}_{j=1}^{N_i}$.

By (\ref{eq:rank_1_eigenDcmps_2}) the objective of ($\mathsf{P}11_{\alpha}^i$) is independent of $\{\tau_j\}_{j=2}^{N_i}$. To save power, we should set all $\{\tau_j\}_{j=2}^{N_i}$ as zero, which means $\mathbf{f}_i=\tau_1\frac{\mathbf{H}_i^H\mathbf{g}}{\|\mathbf{H}_i^H\mathbf{g}\|_2}$. Thus, the problem ($\mathsf{P}11_{\alpha}^i$) boils down to the following problem with respect to one complex scalar $\tau_1$
\begin{subequations}
\begin{align}
&\!\!\!\!\!\!(\mathsf{P}12_{\alpha}^i):\underset{\tau_1}{\min.}\ g(\tau_1)\!\triangleq\!(\alpha\sigma_i^2\!-\!1)\|\mathbf{H}_i^H\mathbf{g}\|_2^2|\tau_1|^2 \\
&\qquad\qquad\qquad\qquad-\!2\|\mathbf{H}_i^H\mathbf{g}\|_2\mathsf{Re}\{\beta_i\tau_1\}\!+\!\big(\alpha d_i\!-\!c_i\big),\nonumber\\
&\qquad\quad\ \mathsf{s.t.} \ |\tau_1|^2 \leq \bar{P}_i. \label{eq:opt_const_P6i}
\end{align}
\end{subequations}
Based on the sign of $(\sigma_i^2-\alpha)$, the problem ($\mathsf{P}12_{\alpha}^i$) can be tackled in the following three cases:\newline
\underline{$\mathsf{CASE}$ (\uppercase\expandafter{\romannumeral1}): $\alpha=\sigma_i^{-2}$.}
In this case, the objective function in $(\mathsf{P}12_{\alpha}^i)$ degenerates to an affine function
\begin{align}
g(\tau_1)=-2\sigma_i^2\|\mathbf{H}_i^H\mathbf{g}\|_2\mathsf{Re}\{\beta_i\tau_1\}+(\alpha d_i-c_i).
\end{align}
By the Cauchy-Schwarz inequality, the optimal $\tau_1^{\star}$ and minimum objective is  obtained as
%\begin{subequations}
%\label{eq:val_sol_P12i_D1_case1}
\begin{align}
\!\!\!\!\!\!\tau_1^{\star}&\!=\!\frac{\sqrt{\bar{P}_i}\beta_i^{\ast}}{|\beta_i|}; g(\tau_1^{\star})\!=\!-2\sigma_i^2\|\mathbf{H}_i\mathbf{g}\|_2|\beta_i|\sqrt{\bar{P}_i}\!\!+\!\!\big(\alpha d_i\!-\!c_i\big);\label{eq:opt_sol_P12i_Case1}
\end{align}
%\end{subequations}

When $\alpha\neq\sigma_i^{-2}$ by denoting
\begin{align}
\zeta_i\triangleq&\big(\alpha d_i\!-\!c_i\big)\!-\!\frac{|\beta_i|^2}{\alpha\sigma_i^2-1},
\end{align}
the objective function is equivalently written as
\begin{align}
\!\!\!\!g(\tau_1)\!=\!(\alpha\sigma_i^2\!\!-\!\!1)\|\mathbf{H}_i^H\mathbf{g}\|_2^2\bigg|\tau_1\!-\!\frac{\beta_{i}^{\ast}}{\|\mathbf{H}_i^H\mathbf{g}\|_2(\alpha\sigma_i^2\!-\!1)}\bigg|^2\!\!+\!\!\zeta_i,\label{eq:opt_prob_P12i_2}
\end{align}
%\fbox{CASE (\uppercase\expandafter{\romannumeral1})}
\noindent \underline{$\mathsf{CASE}$ (\uppercase\expandafter{\romannumeral2}):  $\alpha>\sigma_i^{-2}$}.
To minimize $g(\tau_1)$, $\tau_1$ should be along the direction of $\beta_i^*$. Depending on whether the zero point of the absolute term in (\ref{eq:opt_prob_P12i_2}) satisfies the power constraint, two subcases are examined:\newline
\indent\expandafter{\romannumeral1})
If $\frac{|\beta_i^{\ast}|}{\|\mathbf{H}_i^H\mathbf{g}\|_2(\alpha\sigma_i^2\!-\!1)}\leq\sqrt{\bar{P}_i}$, the optimum is given as
%\begin{subequations}
\label{eq:opt_sol_P12i_Case2_1}
\begin{align}
\tau_1^{\star}\!&=\!\frac{\beta_i^{\ast}}{\|\mathbf{H}_i^H\mathbf{g}\|_2(\alpha\sigma_i^2\!\!-\!\!1)}; \ \ \ g(\tau_1^{\star})\!=\!\zeta_i; 
\end{align}
%\end{subequations}\newline
\indent\expandafter{\romannumeral2}) If $\frac{|\beta_i^{\ast}|}{\|\mathbf{H}_i^H\mathbf{g}\|_2(\alpha\sigma_i^2\!-\!1)}>\sqrt{\bar{P}_i}$, the optimum is given as
%\begin{subequations}
\begin{align}
\label{eq:opt_sol_P12i_Case2_2}
\tau_1^{\star}&=\frac{\sqrt{\bar{P}_i}\beta_i^{\ast}}{|\beta_i|};\\
g(\tau_1^{\star})&\!\!=\!\!(\alpha\sigma_i^2\!\!-\!\!1)\|\mathbf{H}_i^H\mathbf{g}\|_2^2\bigg|\sqrt{\bar{P}_i}\!-\!\frac{|\beta_{i}|}{\|\mathbf{H}_i^H\mathbf{g}\|_2(\alpha\sigma_i^2\!-\!1)}\bigg|^2\!\!+\!\!\zeta_i\nonumber
\end{align}
%\end{subequations}

%\fbox{CASE(\uppercase\expandafter{\romannumeral2})}---
\noindent\underline{$\mathsf{CASE}$ (\uppercase\expandafter{\romannumeral3}): $\alpha<\sigma_i^{-2}$.}
Still $\tau_1$ should be along the direction of $\beta_i^{\ast}$, and takes full power. At this time the optimum is literally identical with (\ref{eq:opt_sol_P12i_Case2_2}) in the above.

Remember that optimal $\mathbf{f}_i^{\star}$ to ($\mathsf{P}8_{\alpha}^i$) is obtained by $\mathbf{f}_i^{\star}=\tau_1^{\star}\frac{\mathbf{H}_i^H\mathbf{g}}{\|\mathbf{H}_i^H\mathbf{g}\|_2}$, the proof is complete.
\end{proof}

\end{document}